\tikzset{arrow1/.style={single arrow, fill=blue!20, anchor=base, align=center,text width=2.8cm}}
\newtheorem*{rep@theorem}{\rep@title}
\newcommand{\newreptheorem}[2]{%
	\newenvironment{rep#1}[1]{%
		\def\rep@title{#2 \ref{##1}}%
		\begin{rep@theorem}}%
		{\end{rep@theorem}}}
\newcommand*\Let[2]{\State #1 $\gets$ #2}
\definecolor{darkred}  {rgb}{0.5,0,0}
\definecolor{darkblue} {rgb}{0,0,0.5}
\definecolor{darkgreen}{rgb}{0,0.5,0}
\newtheorem{dfn}{Definition}
\newtheorem{prop}{Proposition}
\newtheorem{claim}{Claim}
\newtheorem{lemma}{Lemma}
\newtheorem{corol}{Corollary}
\newtheorem{fact}{Fact}
\newtheorem{theorem}{Theorem}
\newtheorem*{theorem*}{Theorem}
\newtheorem{problem}{Problem}
\newtheorem*{forrproblem}{Forrelation problem}
\newcommand{\be}{\begin{equation}}
\newcommand{\ee}{\end{equation}}
\newcommand{\ra}{\rangle}
\newcommand{\la}{\langle}
\newcommand{\calA}{{\cal A }}
\newcommand{\calN}{{\cal N }}
\newcommand{\footremember}[2]{%
    \footnote{#2}
    \newcounter{#1}
    \setcounter{#1}{\value{footnote}}%
}
\newcommand{\footrecall}[1]{%
    \footnotemark[\value{#1}]%
}
\DeclareMathOperator{\Tr}{Tr}
\DeclareMathOperator{\opvec}{vec}
\title{Classical algorithms for Forrelation}
\author{Sergey Bravyi\footremember{ibm}{IBM Quantum, IBM T.J. Watson Research Center}
\and
David Gosset\footremember{iqc}{Institute for Quantum Computing, University of Waterloo, Canada}\footremember{co}{Department of Combinatorics and Optimization, University of Waterloo, Canada}
\and Daniel Grier\footrecall{iqc} \footremember{cs}{Cheriton School of Computer Science, University of Waterloo, Canada}
\and Luke Schaeffer\footrecall{iqc} \footrecall{co}
}
\date{}
\begin{document}
\maketitle
\begin{abstract}
We study the forrelation problem: given a pair of  $n$-bit Boolean functions
$f$ and $g$, estimate the correlation between $f$
and the Fourier transform of $g$. This problem is known to provide the largest possible quantum
speedup in terms of its query complexity and achieves the landmark oracle separation between the 
complexity class BQP and the Polynomial Hierarchy. 
Our first result is a classical algorithm for the forrelation problem which has runtime $O(n2^{n/2})$.
This is a nearly quadratic improvement over the best previously known algorithm.
Secondly, we show that quantum query algorithm that makes $t$ queries to an $n$-bit oracle can be simulated by classical query algorithm making only $O(2^{n(1-1/2t)})$ queries. 
This fixes a gap in the literature arising from a recently discovered critical error in a previous proof; it matches recently established lower bounds (up to $\mathrm{poly}(n,t))$ factors) and thus characterizes the maximal separation in query complexity between quantum and classical algorithms. Finally, we introduce a graph-based forrelation problem where $n$ binary variables
live at vertices of some fixed graph and the functions $f,g$  are products of terms describing interactions between nearest-neighbor variables.
We show that the graph-based forrelation problem can be solved
on a classical computer in time $O(n)$ for any bipartite graph, any planar graph,
or, more generally, any graph which can be partitioned into two subgraphs of constant treewidth.  
The graph-based forrelation is simply related to the variational energy achieved by 
the Quantum Approximate Optimization Algorithm (QAOA) with two entangling layers and Ising-type cost functions.
By exploiting the connection between QAOA and the graph-based forrelation we were able to simulate the 
recently proposed Recursive QAOA with two entangling layers
and $225$ qubits on a laptop computer.

\end{abstract}

\section{Introduction}

Much of what is known about the power of quantum computers has been learned by studying the black-box model of \textit{quantum query complexity}. Here one considers a kind of computational problem where the input is a binary string $x\in \{-1,1\}^N$ and the goal is to compute some property of the input by accessing or querying as few bits of $x$ as possible. In a classical algorithm we may query the input bits one at a time, whereas in a quantum algorithm each query is performed by applying a unitary oracle $O_x$ satisfying $O_x|i\rangle=x_i|i\rangle$ for $i\in [N]$. Many computational problems have been shown to admit quantum speedups as measured by query complexity, including e.g.,  quantum search \cite{grover1997quantum}, period finding \cite{shor1999polynomial}, and the hidden subgroup problem \cite{mosca1998hidden, ettinger2004quantum}. These and other provable speedups in quantum query complexity inform the search for real-world quantum advantage.  For example, Shor's algorithm for integer factorization is based on the quantum algorithm for period finding \cite{shor1999polynomial}. The conjectured real-world speedup for this problem is predicated on the belief that the additional structure present in the integer factorization problem does not make it significantly easier for classical computers.

In this paper we study the forrelation problem which was introduced in Ref. \cite{aaronson2010bqp}.  Forrelation is a powerful computational primitive: it achieves an almost maximal quantum speedup as measured by query complexity \cite{aaronson2018forrelation, tal2019towards, bansal2020kforrelation, sherstov2020optimal} and it underlies the dramatic oracle separation between the complexity classes BQP and PH \cite{raz2019oracle}. Moreover, a generalized $k$-fold forrelation problem is BQP-complete \cite{aaronson2018forrelation}. We note that the primary focus of these previous works was to lower bound the classical query complexity of forrelation and related problems. In this paper we develop new classical algorithms for forrelation. Firstly in the black-box setting, and then in an explicit graph-based setting which is related to the performance of certain quantum algorithms for optimization \cite{farhi2014quantum}.

To describe our results in more detail, let us now define the forrelation problem, following Aaronson and Ambainis \cite{aaronson2010bqp, aaronson2018forrelation}.  The input to the problem is a pair of Boolean functions $f,g\, \colon \, \{0,1\}^n \to \{1,-1\}$. The forrelation between $f$ and $g$ is defined as
\[
\Phi(f,g)=2^{-3n/2} \sum_{x,y \in \{0,1\}^n}\; f(x) g(y) (-1)^{x\cdot y}.
\]
The forrelation quantifies the correlations between $f$ and the binary Fourier transform of $g$. Approximating $\Phi(f,g)$ with a small additive error is known as the forrelation problem.

\begin{forrproblem}
Given $f,g\, \colon \, \{0,1\}^n \to \{1,-1\}$ and $\epsilon>0$, compute an estimate $\mu$ such that $|\mu-\Phi(f,g)|\leq \epsilon$. 
\end{forrproblem}

A simple quantum
algorithm can approximate $\Phi(f,g)$ by querying oracles $U_f$ and $U_g$ implementing diagonal unitary operators $U_f|x\rangle = f(x)|x\rangle$ and $U_g|x\rangle=g(x)|x\rangle$, as can be
seen from the identity 
\[
\Phi(f,g)=\langle 0^n| H^{\otimes n} U_f H^{\otimes n} U_g H^{\otimes n} |0^n\rangle.
\]
Using this identity, one can express the forrelation as $\Phi(f,g)=1-2p$ where $p$ is the output probability of a quantum circuit that makes only $1$ query to an $(n+1)$-bit oracle that computes either $f$ or $g$ depending on the value of an ancilla bit. This output probability can be estimated to within additive error $\epsilon$ using amplitude estimation \cite{brassard2002quantum}. In this way we can solve the forrelation problem on a quantum computer using only $O(\epsilon^{-1})$ queries to the oracles $U_f$ and $U_g$.

While this algorithm approximates $\Phi(f,g)$ to within a small constant error using $O(1)$ quantum queries, Aaronson and Ambainis have shown that any classical algorithm achieving the same requires $\Omega(n^{-1} 2^{n/2})$ queries. 
It is also known that the output probability of \textit{any} $1$-query quantum algorithm can be approximated to a given constant error by a classical randomized algorithm using $O(2^{n/2})$ queries \cite{aaronson2018forrelation,aaab21}. In other words---among tasks that can be solved with a bounded error using $1$ quantum query---forrelation has an almost maximal randomized query complexity.

In fact, a very simple classical algorithm suggested by Aaronson in Ref.~\cite{aaronson2010bqp} (see Section~5 of that paper) is almost optimal in terms of query complexity. One first samples random $n$-bit strings $x_1,\ldots, x_L$ and $y_1,\ldots, y_L$ and then computes the sample forrelation 
\begin{equation}
Z=L^{-2}2^{n/2}\sum_{i,j=1}^{L} f(x_i)g(y_j) (-1)^{x_i \cdot y_j}
\label{eq:samplebased}
\end{equation}
It is easy to see that $\mathbb{E}[Z]=\Phi(f,g)$ and one can also show that using  $L= O(\epsilon^{-1}2^{n/2})$ samples suffices to ensure that $Z$ approximates the forrelation within an additive error $\epsilon$ with high probability. The total number of queries required by this algorithm is $2L$ which almost matches the lower bound on classical query complexity.  However, computing  Eq.~\eqref{eq:samplebased} seems to require summing $L^2$ terms, and so the runtime of this algorithm is quadratically worse than the query complexity. Our first contribution provides the following improvement:
\begin{restatable}{theorem}{forr}
There is a classical randomized algorithm that solves the forrelation problem with probability at least $99\%$. The algorithm has query complexity $O(\epsilon^{-1} 2^{n/2})$ and runtime $O(n \epsilon^{-1}  2^{n/2})$. 
\label{thm:fastfor}
\end{restatable}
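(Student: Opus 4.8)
The plan is to trade the naive $O(L^2)$-time evaluation of the sample forrelation \eqref{eq:samplebased} for an FWHT-based evaluation, by sampling $f$ and $g$ not at $L$ i.i.d.\ uniform points but on random \emph{affine subspaces} of $\{0,1\}^n$ of dimension $m := \big\lceil n/2+\log_2(1/\epsilon)\big\rceil+O(1)$, chosen so that on these subspaces the bilinear form $(-1)^{x\cdot y}$ pulls back to a standard one. Concretely, draw $t,u\in\{0,1\}^n$ and two $n\times m$ matrices $V,W$ uniformly at random (re-drawing, at $O(1)$ expected cost, if $V$ lacks full column rank), set $x(s)=t+Vs$ and $y(r)=u+Wr$ for $s,r\in\{0,1\}^m$, query $f$ at the $2^m$ distinct points $x(s)$ and $g$ at the $2^m$ points $y(r)$ --- $O(2^m)=O(\epsilon^{-1}2^{n/2})$ queries --- and output
\[
\mu \;=\; 2^{\,n/2-2m}\,(-1)^{t\cdot u}\sum_{s\in\{0,1\}^m} F(s)\,\widehat G\!\big(A^{T}s\big),\qquad A:=V^{T}W,
\]
where $F(s):=f(x(s))(-1)^{(V^{T}u)\cdot s}$, $G(r):=g(y(r))(-1)^{(W^{T}t)\cdot r}$, and $\widehat G(v):=\sum_r G(r)(-1)^{v\cdot r}$ is the Walsh--Hadamard transform of $G$. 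This identity comes from expanding $(t+Vs)\cdot(u+Wr)=t\cdot u+(W^{T}t)\cdot r+(V^{T}u)\cdot s+s^{T}Ar$, which rewrites $\sum_{s,r}f(x(s))g(y(r))(-1)^{x(s)\cdot y(r)}$ as $(-1)^{t\cdot u}\sum_s F(s)\sum_r G(r)(-1)^{(A^{T}s)\cdot r}=(-1)^{t\cdot u}\sum_s F(s)\widehat G(A^{T}s)$. Computing $\widehat G$ by a single FWHT and enumerating the $x(s)$, $y(r)$, $F(s)$ and the indices $A^{T}s$ by Gray-code traversals takes $O(m2^m)=O(n\epsilon^{-1}2^{n/2})$ time.

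For correctness, note first that for fixed $V,W$ each $x(s)$ is uniform on $\{0,1\}^n$ over the choice of $t$ and independent of every $y(r)$ over the choice of $u$, so $\mathbb{E}\big[f(x(s))g(y(r))(-1)^{x(s)\cdot y(r)}\big]=2^{-2n}\sum_{x,y}f(x)g(y)(-1)^{x\cdot y}=2^{-n/2}\Phi(f,g)$; summing the $2^{2m}$ such terms and rescaling gives $\mathbb{E}[\mu]=\Phi(f,g)$. For the variance, the key observation is that the subspace points are jointly well-spread: for $s\ne s'$ the pair $(x(s),x(s'))$ is uniform on $(\{0,1\}^n)^2$ because $x(s)+x(s')=V(s+s')$ is uniform over $V$ and independent of $x(s)$ (uniform over $t$); likewise $(y(r),y(r'))$ is uniform on $(\{0,1\}^n)^2$, and the two pairs are independent since they depend on disjoint randomness. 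Plugging this into the second-moment expansion, the ``off-diagonal'' quadruples ($s\ne s'$, $r\ne r'$) contribute $\mathbb{E}[\mu]^2$ up to relative error $2^{-\Omega(m)}$, the mixed terms contribute $O(2^{-m})$, and the diagonal $s=s',\,r=r'$ contributes $2^{\,n-2m}$, so $\mathrm{Var}[\mu]=2^{\,n-2m}\big(1+o(1)\big)$. Choosing the additive $O(1)$ in $m$ large enough makes $\mathrm{Var}[\mu]\le\epsilon^2/100$, and Chebyshev's inequality yields success probability at least $99\%$. The regime $\epsilon\lesssim 2^{-n/2}$, where $m$ would exceed $n$, is handled trivially: read all of $f$ and $g$ ($2^{n+1}=O(\epsilon^{-1}2^{n/2})$ queries) and compute $\Phi$ exactly by one $n$-bit FWHT ($O(n2^n)=O(n\epsilon^{-1}2^{n/2})$ time).

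The main thing to get right is the variance bookkeeping: one must verify the joint independence of the relevant tuples of subspace points rather than only their pairwise independence (the expansion of $\mathbb{E}[\mu^2]$ involves two points from each of the two subspaces), handle the coincidences $x(s)=x(s')$ / $y(r)=y(r')$ that occur with probability $2^{-n}$, and confirm that conditioning $V$ to have full column rank perturbs all of these estimates only at relative order $2^{m-n}$. None of this is conceptually hard once the affine-subspace reduction is in place, but it is where the constant in $m$ --- and hence the constant in the query and time bounds --- is pinned down.
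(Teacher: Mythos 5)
Your proposal is correct and follows essentially the same route as the paper's proof: the paper likewise replaces the i.i.d.\ samples by two uniformly random affine subspaces of dimension $k\approx n/2+\log_2(1/\epsilon)$, proves $\mathbb{E}[\mu]=\Phi(f,g)$ and $\mathrm{Var}(\mu)\le 2^{n-2k}+2\cdot 2^{-k}$ (the same diagonal/mixed/off-diagonal bookkeeping you sketch, done exactly), applies Chebyshev, evaluates the estimator in $O(n2^k)$ time via Gray-code enumeration plus a $k$-qubit Hadamard (Walsh--Hadamard) transform, and handles $\epsilon\le 2^{-n/2}$ by exact computation. The only cosmetic difference is that the paper samples genuinely full-rank affine subspaces and computes the pair statistics exactly, whereas your full-column-rank re-drawing is unnecessary (and would need the trivial-case threshold adjusted when $m$ approaches $n$); with unconditioned $V,W$ your variance expansion already gives exactly the paper's bound.
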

To prove the theorem, we approximate $\Phi(f,g)$ using a variant of the sample-based forrelation Eq.~\eqref{eq:samplebased} in which the sets $\{x_1,\ldots, x_L\}\subseteq \mathbb{F}_2^{n}$ and $\{y_1,\ldots, y_L\}\subseteq \mathbb{F}_2^{n}$ are chosen to be random affine subspaces with $L=2^\ell$ for suitably chosen $\ell\approxeq n/2$. To establish the claimed runtime bound we show that this estimator can equivalently be expressed as an amplitude of a quantum circuit acting on  $\ell$ qubits and computed using a runtime $\sim 2^\ell$ using sparse matrix-vector multiplication.

Ref.~\cite{aaronson2018forrelation} also introduces a more general $k$-fold forrelation of Boolean functions  $f_1,f_2,\ldots, f_k:\{0,1\}^n\rightarrow \{-1,1\}$ defined as 
\begin{equation}
\Phi(f_1, f_2,\ldots,f_k)=\langle 0^n|H^{\otimes n} U_{f_1}H^{\otimes  n}U_{f_2}H^{\otimes n}\ldots U_{f_k}H^{\otimes n}|0^n\rangle.
\label{eq:kfold}
\end{equation}
As in the case $k=2$ considered above, the $k$-fold forrelation can be expressed as $1-2p$ where $p$ is the output probability of a quantum circuit that makes $\left\lceil k/2\right\rceil$ queries to the oracles. This output probability, and the forrelation, can be additively approximated by a quantum computation that makes $O(k\epsilon^{-1})$ queries. Using the well-known connection between quantum query algorithms and multilinear polynomials over $\mathbb{F}_2$ \cite{beals2001quantum}, Aaronson and Ambainis claimed that any amplitude of a $k$-query quantum circuit---and consequently, the $k$-fold forrelation---can be additively approximated by a classical randomized algorithm which uses $O(2^k \epsilon^{-2+\frac{2}{k}} 2^{n(1-\frac{1}{k})})$ queries. It was recently discovered, however, that the proof of this statement given in Ref.  \cite{aaronson2018forrelation} contained a critical error as discussed in the recent note \cite{aaab21} and in the blog post \cite{aaronsonblogpost}.\footnote{The authors of Ref. \cite{aaab21} provided a new proof of the upper bound for the special case $k=1$.}

In the same work, Aaronson and Ambainis conjectured that a matching lower bound holds for each even $k=O(1)$ and constant error $\epsilon$. That is, their conjecture asserts that the $k$-fold forrelation should have maximal classical query complexity among quantities that can be expressed as the output probability of a $k/2$ query quantum algorithm. This conjecture was recently established in two breakthrough works Refs.~\cite{bansal2020kforrelation,sherstov2020optimal}. These works aimed to establish that indeed $k$-fold forrelation \cite{bansal2020kforrelation} and variants thereof \cite{sherstov2020optimal} achieve an almost maximal separation in quantum query complexity. For example, Bansal and Sinha show that for every positive integer $k$, approximating the $k$-fold forrelation to within an error $\epsilon$ satisfying $\epsilon=2^{-\Theta(k)}$ requires $\Omega(2^{n(1-\frac{1}{k})}/\mathrm{poly}(n,k))$ oracle queries using a classical computer \cite{bansal2020kforrelation}. 

Here we provide a simple classical algorithm that provides a matching upper bound. In other words we prove the result claimed by Aaronson and Ambainis in Ref.~\cite{aaronson2018forrelation}.\footnote{In fact, our upper bound achieves an exponential improvement in terms of the scaling with $k$ and improves the dependence on $\epsilon$, as compared with the one claimed in Ref.~\cite{aaronson2018forrelation}.}

\begin{theorem}[Informal]
Any amplitude of a quantum circuit that makes $k$ quantum queries to an $n$-bit oracle and may include non-unitary gates, can be approximated to within additive error $\epsilon$ by a classical randomized algorithm that makes $O(k \epsilon^{-2/k} 2^{n(1-\frac{1}{k})})$ oracle queries.
\label{thm:kamplitude}
\end{theorem}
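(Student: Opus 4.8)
The plan is to use the standard Beals--Buhrman--Cleve--Mosca--de Wolf observation that the amplitude of a $k$-query quantum circuit, as a function of the oracle string $x\in\{-1,1\}^N$ with $N=2^n$, is a multilinear polynomial $P(x)$ of degree at most $k$. Writing out the query circuit $U_k O_x U_{k-1} O_x\cdots U_1 O_x U_0$ and tracking the amplitude, each oracle application contributes a factor linear in the $x_i$'s, so after $k$ queries we get $P(x)=\sum_{S:\,|S|\le k} c_S \prod_{i\in S} x_i$ with complex coefficients. Because each $x_i\in\{-1,1\}$ satisfies $x_i^2=1$, we may take $P$ multilinear without loss of generality. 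The amplitude we want is exactly $P(x)$ evaluated at the true input, and the target is a classical randomized estimator for $P(x)$ that reads only $O(k\,\epsilon^{-2/k}\,2^{n(1-1/k)})$ coordinates of $x$.

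The core idea is block sampling: partition the $N=2^n$ indices into $m$ blocks $B_1,\dots,B_m$ of (roughly) equal size $N/m$, pick a uniformly random subset $T\subseteq[m]$ of $r$ blocks, query all coordinates of $x$ in $\bigcup_{j\in T}B_j$ (costing $r\cdot N/m$ queries), and form an estimator by ``reconstructing'' each degree-$\le k$ monomial with an appropriate inclusion–probability reweighting. Concretely, a monomial $\prod_{i\in S}x_i$ with $|S|=d\le k$ is fully observed precisely when the (at most $k$) blocks containing the indices of $S$ all lie in $T$; this happens with some probability $q_{d}\approx (r/m)^{d}$ (more precisely a ratio of binomial coefficients), so dividing the observed monomial value by $q_{|S|}$ gives an unbiased estimator of that monomial, and summing over all $S$ gives an unbiased estimator $\widehat P$ of $P(x)$. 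Choosing $r/m\approx \epsilon^{-2/k}/2^{n/k}\cdot(\text{const})$ — equivalently $r/m = \Theta(m^{-1/k}\cdot\text{something})$ tuned so that $(r/m)^{-k}$ scales like $2^{n}\epsilon^{-2}$ over the number of blocks — makes the query count $r\cdot N/m = (r/m)\cdot N$ equal to $O(\epsilon^{-2/k}2^{n(1-1/k)})$ up to the $k$ factor coming from how finely one must partition (one wants block size small enough that no monomial straddles fewer than its full support of blocks, i.e.\ $m$ at least a constant times the max degree, but the dominant cost is $(r/m)N$).

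The main work, and the expected obstacle, is the variance bound: one must show $\mathrm{Var}[\widehat P]\le \epsilon^2/100$ (or similar) with the claimed sampling rate. This requires two ingredients. First, a coefficient bound: since $P$ is an amplitude of a circuit with (possibly non-unitary) gates, one needs a bound like $\sum_S |c_S|^2\le 1$ or, more usefully, control on $\sum_{|S|=d}|c_S|^2$ for each level $d$ — this follows from the ``$L_2$ Fourier mass on level $d$ is at most $1$'' type statement for bounded functions, or directly from writing the amplitude as an inner product $\langle\phi|\psi_x\rangle$ with $\|\,|\psi_x\rangle\|\le 1$ and expanding; the non-unitary gates are handled by absorbing their norms, which is where the precise constants live. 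Second, the variance decomposes over monomials (cross terms between $S\ne S'$ require a short argument that they either vanish or are controlled because $x_i^2=1$): $\mathrm{Var}[\widehat P]=\sum_{d=1}^{k}(q_d^{-1}-1)\sum_{|S|=d}|c_S|^2\,(\text{or similar})$, and since $q_d^{-1}\le q_k^{-1}=(m/r)^{k}\cdot O(1)$ the whole thing is at most $(m/r)^k\cdot O(1)$; setting $(m/r)^k = \Theta(\epsilon^2)\cdot(\text{normalization})$... wait — one actually wants $(r/m)$ close to $1$ to be cheap, so the correct reading is that the number of \emph{queried coordinates} is $(r/m)N$ and we need $(m/r)^k \le \epsilon^2 \cdot (\text{const})$ to be \emph{false} — rather, the balance is: variance $\sim (m/r)^k$, query count $\sim (r/m)N$, and optimizing (variance $=\epsilon^2$) gives $(r/m) = \epsilon^{-2/k}$... hmm, let me restate cleanly below.

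Let me restate the parameter balance cleanly, since that is the heart of it: with sampling fraction $p:=r/m$, the estimator reads $pN = p\,2^n$ coordinates, and the variance is $O(k\,p^{-k})$ (dominated by the top degree $k$, using $\sum|c_S|^2\le 1$). Demanding $O(k\,p^{-k})\le \epsilon^2$ forces $p\ge \Theta(k^{1/k}\epsilon^{-2/k})$, which is fine when the right-hand side is at most $1$, i.e.\ in the regime of interest; then the query count is $pN = \Theta(k^{1/k}\epsilon^{-2/k}2^n)$ — but this is $2^n$, not $2^{n(1-1/k)}$! The fix, and the genuinely clever point, is that one does \emph{not} sample uniform coordinates but uniform \emph{blocks}: a monomial of degree $d$ needs only its $\le k$ blocks present, so with $m$ blocks and $r$ sampled, the inclusion probability is $\binom{m-d}{r-d}/\binom{m}{r}\approx (r/m)^d$, and crucially we may take $m$ as \emph{small} as $\Theta(k)$ — no, we must take $m$ large enough that distinct indices of a monomial can land in distinct blocks, but the number of queried coordinates is $r\cdot(N/m)$, and now $r$ and $m$ are independent of $N$ up to $r/m$. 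Choosing $m=\Theta(k)$, $r=m-1$ won't give the bound; instead choose $m$ and $r$ with $r/m$ fixed and then the count is $(r/m)N$ again. So the actual mechanism producing the $2^{n(1-1/k)}$ must be: query each sampled block only \emph{partially}, or equivalently use $k$ \emph{nested} levels of sampling (one per query), each contributing a factor $2^{-n/k}$ saving, so that reconstructing a degree-$k$ monomial — which is the expensive case — is paid for by the product of $k$ independent subsampling events. I would therefore structure the estimator as a $k$-fold tensor sampling: write $P(x)$ in a form where the $k$ query ``slots'' are visibly separate tensor factors, subsample a $2^{-n/k}$-fraction of basis states independently for each slot, and reweight; then a degree-$d$ monomial is captured with probability $\approx 2^{-nd/k}$, variance $\approx 2^{nk/k}=2^n$... this still gives $2^n$. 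I will resolve this discrepancy in the full proof; the honest summary is:

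\medskip
\noindent\textbf{Summary of the plan.} (1) Express the amplitude as a degree-$k$ multilinear polynomial $P(x)=\sum_{|S|\le k}c_S\chi_S(x)$ via the BBCMW expansion, and bound $\sum_{|S|=d}|c_S|^2$ using that $P$ is (after absorbing norms of the non-unitary gates) a sub-unit-norm inner product. (2) Design a sampling scheme that queries $O(k\epsilon^{-2/k}2^{n(1-1/k)})$ coordinates and yields an unbiased estimator $\widehat P$ of $P(x)$ whose inclusion probability for a degree-$d$ monomial is a known quantity $q_d$. (3) Compute $\mathrm{Var}[\widehat P]=\sum_d (q_d^{-1}-1)\cdot(\text{level-}d\text{ mass})$, show the $d=k$ term dominates, and choose the sampling rate so this is $\le \epsilon^2/100$. (4) Apply Chebyshev (or median-of-means over $O(1)$ independent repetitions) to boost to the stated $99\%$ success probability; repetition multiplies the query count by $O(1)$. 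The main obstacle is pinning down the exact sampling scheme in step (2) that simultaneously achieves the stated query budget \emph{and} has a clean enough inclusion structure for the variance computation in step (3) — this is precisely the step where the earlier proof of Aaronson--Ambainis went wrong, so care with the independence/variance bookkeeping is essential.
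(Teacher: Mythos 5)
There is a genuine gap, and you identify it yourself: step (2) of your plan --- a sampling scheme that simultaneously achieves the query budget $O(k\,\epsilon^{-2/k}2^{n(1-1/k)})$ and has an inclusion structure clean enough for the variance computation --- is never actually constructed. Your own parameter balance for coordinate/block subsampling of the degree-$k$ BBCMW polynomial lands at $\Theta(\epsilon^{-2/k}2^{n})$ queries, not $2^{n(1-1/k)}$, and the attempted fixes (blocks, nested $k$-fold subsampling) are left unresolved. Moreover, the variance bookkeeping you defer is not a minor detail: for an estimator of the form $\sum_S c_S \prod_{i\in S} x_i \mathbf{1}[i\in T]/p$, cross terms with $S\cap S'\neq\emptyset$ pick up factors $p^{-|S\cap S'|}$ and do not decompose level-by-level, so the level-wise bound $\sum_{|S|=d}|c_S|^2\le 1$ is insufficient; controlling these correlations via structural properties of the coefficients is precisely where the original Aaronson--Ambainis polynomial-method argument failed, and your proposal inherits that unfilled hole rather than routing around it.

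The paper's proof avoids polynomials entirely and works in the Schr\"odinger picture: starting from $|\phi_0\rangle=|0^m\rangle$, at each query layer $j$ it importance-samples $L$ strings $z_1,\ldots,z_L$ from the distribution $p(z)\propto\|\Pi(z)M_j^\dagger|\phi_{j-1}\rangle\|^2$ and forms the sparsified state $|\phi_j\rangle=\frac1L\sum_i f_j(z_i)\,\Pi(z_i)M_j^\dagger|\phi_{j-1}\rangle/p(z_i)$, which is an unbiased estimate of $U_{f_j}M_j^\dagger|\phi_{j-1}\rangle$ and requires only $L$ queries to $f_j$. The crucial mechanism producing the exponent $1-1/k$ is that each stage inflates the second moment only additively, $\mathbb{E}\|\phi_j\|^2\le(1+2^n/L)\,\|\phi_{j-1}\|^2$, so after $k$ stages the excess variance of the final estimator is at most roughly $2^{n(k-1)}L^{-k}2^k$; setting this to $\epsilon^2$ gives $L=O(2^{n(1-1/k)}\epsilon^{-2/k})$ per stage and $kL$ queries total, with Chebyshev finishing the argument. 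This multiplicative compounding of $k$ independent per-stage sparsifications is the missing idea in your proposal; without it (or a correct replacement for the polynomial-method variance analysis), the claimed query bound is not reached.
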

Since the acceptance probability of a $t$-query quantum algorithm can be expressed as such an amplitude with $k=2t$ we directly obtain the following.
\begin{theorem}[\textbf{Classical simulation of quantum query algorithms}]
\label{thm:informalquery}
Let $p$ be the acceptance probability of a quantum query algorithm that makes $t$ quantum queries to an $n$-bit oracle. There is a classical randomized algorithm that outputs an estimate $\tilde{p}$ satisfying $|p-\tilde{p}|\leq \epsilon$ and uses $O(t\epsilon^{-1/t}2^{n(1-1/2t)})$ oracle queries.
\end{theorem}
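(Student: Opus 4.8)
The plan is to obtain Theorem~\ref{thm:informalquery} as an immediate corollary of Theorem~\ref{thm:kamplitude}, so that the only genuine work is to re-express the acceptance probability of a $t$-query algorithm as a single amplitude of a circuit making $2t$ queries.

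First I would recall the canonical form of a quantum query algorithm: it alternates input-independent unitaries $U_0,U_1,\ldots,U_t$, acting on an $n$-qubit index register together with some workspace, with $t$ copies of the oracle $O_x$, producing the state $|\psi_x\rangle = U_t O_x U_{t-1}\cdots O_x U_0 |0\rangle$; the acceptance probability is $p = \|\Pi|\psi_x\rangle\|^2 = \langle\psi_x|\Pi|\psi_x\rangle$ for $\Pi$ the projector onto accepting measurement outcomes.

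Next I would fold this expression into a single amplitude. Since the input is $x\in\{-1,1\}^N$, the oracle $O_x$ is a real diagonal $\pm1$ matrix, hence Hermitian and an involution, so $O_x=O_x^\dagger$. Therefore
\[
p \;=\; \langle 0|\, U_0^\dagger O_x U_1^\dagger O_x \cdots O_x U_t^\dagger \;\Pi\; U_t O_x \cdots O_x U_1 O_x U_0 \,|0\rangle ,
\]
which is exactly the $|0\rangle\to|0\rangle$ transition amplitude of a circuit that makes $k=2t$ queries to the same $n$-bit oracle $O_x$ and contains a single non-unitary gate, the projector $\Pi$, in the middle. The workspace qubits are absorbed into the $U_i$, and the oracle still acts on the $n$-qubit index register, so neither $n$ nor the query count is inflated by the reduction.

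Finally I would apply Theorem~\ref{thm:kamplitude} to this circuit with $k=2t$ and target error $\epsilon$, obtaining an estimate $\tilde p$ with $|p-\tilde p|\le\epsilon$ using
\[
O\!\left(k\,\epsilon^{-2/k}\,2^{n(1-1/k)}\right) \;=\; O\!\left(t\,\epsilon^{-1/t}\,2^{n(1-1/2t)}\right)
\]
oracle queries, matching the claimed bound. The only points needing care are that Theorem~\ref{thm:kamplitude} is indeed phrased to allow the non-unitary gate $\Pi$, and that the reduction introduces no hidden oracle overhead; both are handled by the remarks above. Consequently there is no real obstacle at this step — the substantive content lives entirely in Theorem~\ref{thm:kamplitude}, and the present statement is a routine consequence.
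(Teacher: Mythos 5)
Your proposal is correct and follows essentially the same route as the paper: the paper likewise notes that the acceptance probability of a $t$-query algorithm can be written as an amplitude of the form Eq.~\eqref{eq:p2} with $k=2t$ (folding the projector and final unitary into a middle operator $M_{t+1}$ with $\|M_{t+1}\|\le 1$) and then invokes Theorem~\ref{thm:tforr}. Your bookkeeping of the query count and the parameter substitution $k=2t$ matches the paper's derivation.
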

We may also directly apply Theorem \ref{thm:kamplitude} to the forrelation problem.
\begin{theorem}
 There exists a classical randomized algorithm which approximates $\Phi(f_1, f_2,\ldots,f_k)$ to within additive error $\epsilon$ with probability at least $99\%$, and uses  $O(k \epsilon^{-2/k} 2^{n(1-\frac{1}{k})})$ oracle queries.
\label{thm:tquery}
\end{theorem}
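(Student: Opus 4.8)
The plan is to obtain Theorem~\ref{thm:tquery} as an essentially immediate corollary of Theorem~\ref{thm:kamplitude}. The starting point is the defining identity~\eqref{eq:kfold}: the quantity $\Phi(f_1,\ldots,f_k)$ is \emph{literally} the amplitude $\langle 0^n| C |0^n\rangle$ of the quantum circuit $C=H^{\otimes n} U_{f_1} H^{\otimes n} U_{f_2} H^{\otimes n}\cdots U_{f_k} H^{\otimes n}$, whose only oracle interactions are the $k$ calls $U_{f_1},\ldots,U_{f_k}$, while the Hadamard layers are oracle-free unitaries (and Theorem~\ref{thm:kamplitude} even allows non-unitary gates). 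Hence $\Phi(f_1,\ldots,f_k)$ is an amplitude of a circuit making $k$ quantum queries, and Theorem~\ref{thm:kamplitude} produces a classical randomized estimator of it within additive error $\epsilon$ using $O(k\,\epsilon^{-2/k}2^{n(1-1/k)})$ oracle queries.

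The one place that requires a moment's care is that $C$ queries $k$ \emph{distinct} $n$-bit oracles $U_{f_1},\ldots,U_{f_k}$, whereas Theorem~\ref{thm:kamplitude} is phrased for a single oracle. I would deal with this by appealing to the proof of Theorem~\ref{thm:kamplitude} rather than only its statement: that argument writes the amplitude as a multilinear polynomial of degree at most $k$ in the oracle bits and then estimates this polynomial by random sampling of coordinates, and it is insensitive to whether the $k$ queried bits are drawn from one oracle or from several — only the total number of oracle accesses matters. (Concretely, the forrelation amplitude is $\Phi=\sum_{x^{(1)},\ldots,x^{(k)}} c_{x^{(1)},\ldots,x^{(k)}}\, f_1(x^{(1)})\cdots f_k(x^{(k)})$, a multilinear polynomial using exactly one bit from each $f_i$.) If one instead wants to use only the \emph{statement} of Theorem~\ref{thm:kamplitude}, one can fold the $k$ functions into a single oracle $F\colon\{0,1\}^{n+\lceil\log_2 k\rceil}\to\{-1,1\}$, $F(i,x)=f_i(x)$, simulate each call $U_{f_i}$ by one call to $U_F$ with the label register prepared in $|i\rangle$, and apply the theorem with $n'=n+\lceil\log_2 k\rceil$; since $2^{n'(1-1/k)}=O(k)\cdot 2^{n(1-1/k)}$ this yields the slightly weaker bound $O(k^2\,\epsilon^{-2/k}2^{n(1-1/k)})$, still matching the lower bound of~\cite{bansal2020kforrelation} up to $\mathrm{poly}(n,k)$ factors.

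Finally, to meet the stated $99\%$ success probability I would run the estimator of Theorem~\ref{thm:kamplitude} a constant number of times with fresh randomness and return the median; if each run succeeds with some fixed constant probability bounded away from $1/2$, a Hoeffding bound shows $O(1)$ repetitions drive the failure probability below $1\%$, and this multiplies the query count only by a constant, preserving $O(k\,\epsilon^{-2/k}2^{n(1-1/k)})$. Since $\Phi(f_1,\ldots,f_k)$ is real-valued, there is no need to estimate real and imaginary parts separately. The genuine mathematical content lives entirely in Theorem~\ref{thm:kamplitude}; the only substantive point to verify in this reduction is the claim above that its proof tolerates $k$ separate $n$-bit oracles without incurring extra $\mathrm{poly}(n,k)$ overhead in the query count, which I expect to be the main — though modest — obstacle.
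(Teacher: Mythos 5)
Your proposal is correct and follows essentially the same route as the paper: the paper obtains Theorem~\ref{thm:tquery} by observing that $\Phi(f_1,\ldots,f_k)$ is exactly an amplitude of the form in Eq.~\eqref{eq:p2} with $m=n$ and $M_j=H^{\otimes n}$, and then invoking the formal version of Theorem~\ref{thm:kamplitude} (Theorem~\ref{thm:tforr}). Your worry about a single versus several oracles is moot, since Theorem~\ref{thm:tforr} is already stated for $k$ distinct functions $U_{f_1},\ldots,U_{f_k}$ and already guarantees $99\%$ success, so neither the oracle-folding trick nor the median boosting is needed; the only inaccuracy is your parenthetical description of its proof as the polynomial-method/coordinate-sampling argument (that is the flawed Aaronson--Ambainis route), whereas the paper uses a sequential importance-sampling construction of sparse intermediate states --- which, as you correctly anticipated, handles the $k$ distinct oracles with no extra overhead.
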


Note that if we choose $\epsilon=2^{-\Theta(k)}$ as prescribed in Refs. \cite{bansal2020kforrelation,sherstov2020optimal}, then for each $k$ our randomized simulation matches their lower bound up to factors polynomial in $n$ and $k$. In this way our Theorem \ref{thm:informalquery} together with the results of Refs. \cite{bansal2020kforrelation,sherstov2020optimal} completes the characterization of the maximal separation in query complexity between quantum and classical computers.

Next we consider a graph-based forrelation problem 
with explicitly specified functions $f$ and $g$ that possess certain local features.
Suppose $G=(V,E)$ is a fixed $n$-vertex graph.
We place a binary variable $x_j$ at each vertex $j\in V$.
We say that 
a function  $f\, \colon \{0,1\}^n\to \mathbb S^1$ is two-local on $G$
 if
it is a product of terms describing interactions between nearest-neighbor variables, as well as terms that depend on a single variable, that is,
\[
f(x)=\prod_{\{u,v\}\in E} f_{uv} (x_u,x_v)  \prod_{u\in V} f_{u}(x_u)
\]
for some functions $f_{uv} \colon \{0,1\}^2\rightarrow \mathbb S^1$ and 
$f_u \colon \{0,1\} \rightarrow \mathbb S^1$. 
Here $\mathbb S^1$ is the unit circle in the complex plane. 
We shall use the term two-local 
without specifying the underlying graph whenever it is clear from the context.
A function $f(x)$ is said to be $1$-local if it is a product of terms that depend on a single variable. 

Suppose  $O_1,\ldots,O_n$ are single-qubit operators normalized such that $\|O_j\|\le 1$ for all $j$. 
A graph-based forrelation associated with
the graph $G$ and two-local functions $f,g$ is defined as a complex number
\be
\label{graph_based}
\Phi = \la 0^n|H^{\otimes n} U_g (O_1\otimes O_2 \otimes \cdots \otimes O_n)U_f H^{\otimes n}|0^n\ra.
\ee
As before, $U_f$ and $U_g$ are diagonal $n$-qubit unitary operators
such that $U_f|x\ra=f(x)|x\ra$ and $U_g|x\ra=g(x)|x\ra$ for all $x$.
Specializing Eq.~(\ref{graph_based}) to $O_1=\ldots=O_n=H$
and functions $f,g$ taking values $\pm 1$
one obtains the forrelation quantity $\Phi(f,g)$ defined above. 
The  consideration of complex-valued functions $f,g$
and more general operators $O_j$ is motivated by 
applications of the graph-based forrelation  that we discuss below. 
Note that the operators $U_f, U_g$ can be implemented efficiently by
quantum circuits for any two-local functions $f,g$. Furthermore, 
the operators $O_j$ can be extended
to two-qubit unitary operators using a block encoding, see e.g.\ Lemma~5 of~\cite{bravyi2019classical}.
Thus the graph-based forrelation 
can be efficiently approximated to a given additive error using a quantum computer.   A natural question is whether or not it is classically hard to approximate a graph-based forrelation to within a given additive error. Unfortunately the exponential lower bound on the query complexity established in the black-box setting does not rule out an efficient classical algorithm that can ``look inside" the black box.

Our interest in graph-based forrelation is partly motivated by its connection with algorithms for near-term quantum computers and restricted models of quantum computation. 
For example, one can show that the expected value of any tensor product operator on the output state of an $n$-qubit Clifford circuit can be expressed as a graph-based forrelation\footnote{This claim follows from the well-known fact~\cite{van2004graphical}
that the output state of any Clifford circuit acting on $n$-qubits is locally equivalent
to a graph state $\prod_{(u,v)\in E} \mathsf{CZ}_{u,v} 
H^{\otimes n}|0^n\ra$ for a suitable $n$-vertex graph $G=(V,E)$.}
for a suitable $n$-vertex graph $G$ and two-local functions $f=g$
(in particular, an output probability of a measurement based quantum computation~\cite{raussendorf2001one} can be expressed as a graph-based forrelation).
Similarly, an amplitude of any IQP circuit \cite{bremner2011classical} composed of one- and two-qubit gates can be expressed as a graph-based forrelation with $O_j=H$ for all $j$. In Appendix \ref{sec:relative_error} we provide more details and ascertain as a simple consequence that approximating a graph-based forrelation with a given \textit{relative} error is $\#$P-hard, even when restricted to planar graphs. 

A graph-based forrelation appears naturally in the study
of Quantum Approximate Optimization Algorithm (QAOA)~\cite{farhi2014quantum}.
Namely,
consider $n$ qubits located at vertices of the graph $G$
and a cost function Hamiltonian
$C=\sum_{\{u,v\}\in E} J_{u,v} Z_u Z_v$, where $J_{p,q}$
are arbitrary coefficients.
Level-$k$ QAOA maximizes the expected energy 
$\la \psi|C|\psi\ra$ over variational states 
\[
|\psi\ra = e^{-i \beta_k B}e^{-i \gamma_k C}
\cdots e^{-i \beta_1 B}e^{-i \gamma_1 C}H^{\otimes n}|0^n\ra,
\]
where $B=X_1+\ldots+X_n$ and $\beta,\gamma\in {\mathbb R}^k$ are variational parameters. In Section~\ref{sec:QAOA} 
we show that the variational energy 
$\la \psi |C|\psi\ra$ 
of level-$2$ QAOA states can be expressed as a linear combination of a few graph-based forrelations with
suitable single-qubit operators $O_j$
and $2$-local functions $f=g$ simply related to $C$.
Thus a polynomial-time classical algorithm for graph-based forrelation can be used to efficiently estimate the variational energy of level-$2$ QAOA states. Prior to our work such an algorithm was only known for the case of constant-degree graphs, or for level-$1$ QAOA states on general graphs~\cite{wang2018quantum}.

Our main result is a classical algorithm for approximating graph-based forrelation with a given additive error. The algorithm is efficient whenever the  vertices of $G$ can be partitioned into two disjoint subsets $A$ and $B$ such that the subgraphs $G_A$ and $G_B$ induced by $A$ and $B$ have a small treewidth. Let $w$ be the maximum width of these tree decompositions. Our algorithm is described in the following theorem.
\begin{theorem}
\label{thm:graph_based}
There is a classical randomized algorithm that outputs an estimate $\mu$ satisfying
$|\mu-\Phi|\le \epsilon$ with probability at least $99\%$. The algorithm
has runtime $O(n 4^w \epsilon^{-2})$.
\end{theorem}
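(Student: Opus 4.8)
The plan is to build an unbiased randomized estimator of $\Phi$, each of whose samples is computed by two tensor‑network contractions of bond dimension $4$, one on $G_A$ and one on $G_B$, and then to output the empirical mean of $O(\epsilon^{-2})$ samples. As a warm‑up, note that if $G$ \emph{itself} had treewidth $w$ then $\Phi$ could be computed exactly and deterministically in time $O(n4^w)$: expanding $\Phi$ in the computational basis writes it as a partition function on $G$ with one variable in $\{0,1\}^2$ per vertex (carrying the two ``registers''), a $4\times4$ weight matrix on each edge built from $f_{uv},g_{uv}$, and a single‑vertex weight built from $f_u,g_u,O_u$, which is evaluated by the standard dynamic program over a width‑$w$ tree decomposition. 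The difficulty is that a partition of $V$ into low‑treewidth halves need not make $G$ itself low‑treewidth (e.g.\ complete bipartite graphs), so the cross edges $E_{AB}$ must be handled separately; this is where randomization enters.

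First I would use the bipartition to peel off the cross edges. Group qubits by $V=A\sqcup B$ and factor $U_f=U_f^A U_f^B U_f^{AB}$ (and similarly $U_g$), where $U_f^{AB}=\prod_{e=\{u,v\}\in E_{AB}}\mathrm{diag}_{uv}(f_{uv})$, while $H^{\otimes n}|0^n\rangle$ and $U_f^A U_f^B$, $U_g^A U_g^B$ all factor across the cut. Each cross‑edge gate $\mathrm{diag}_{uv}(f_{uv})$ has operator Schmidt rank $\le 2$ across $\{u,v\}$: from the singular value decomposition of the $2\times2$ matrix $[f_{uv}(x_u,x_v)]$ one gets $\mathrm{diag}_{uv}(f_{uv})=\sum_{s=1}^{2}\sigma^f_{e,s}\,\mathrm{diag}_u(p^f_{e,s})\otimes\mathrm{diag}_v(\overline{q^f_{e,s}})$, with $\{p^f_{e,s}\}$ and $\{q^f_{e,s}\}$ orthonormal bases of $\mathbb{C}^2$ and $\sum_s(\sigma^f_{e,s})^2=4$. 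Expanding the product of these decompositions over $E_{AB}$, for both $f$ and $g$, gives
\[
\Phi=\sum_{\kappa}\lambda_\kappa\,\Phi^A(\kappa)\,\Phi^B(\kappa),
\]
where $\kappa$ assigns a label in $\{1,2\}$ to each cross edge for $f$ and for $g$, $\lambda_\kappa$ is the product of the selected singular values, and $\Phi^A(\kappa)=\langle 0^A|H^{\otimes A}V^A_g(\kappa)\,O^A\,V^A_f(\kappa)H^{\otimes A}|0^A\rangle$ with $V^A_f(\kappa),V^A_g(\kappa)$ diagonal, two‑local on $G_A$, and of operator norm $\le 1$ (the single‑qubit singular‑vector factors, of norm $\le 1$, are absorbed into the one‑local data); likewise for $\Phi^B(\kappa)$ on $G_B$. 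Thus $|\Phi^A(\kappa)|,|\Phi^B(\kappa)|\le 1$, and each of $\Phi^A(\kappa),\Phi^B(\kappa)$ is again a (generalized) graph‑based forrelation, computable exactly in time $O(n4^w)$ by the tree‑decomposition dynamic program applied to $G_A$, $G_B$ as in the warm‑up.

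I would then estimate $\Phi$ by importance sampling. Draw $\kappa$ from a distribution $p$ and output $X=\lambda_\kappa\Phi^A(\kappa)\Phi^B(\kappa)/p(\kappa)$, so that $\mathbb{E}[X]=\Phi$. The candidate is to take $p(\kappa)\propto\lambda_\kappa^2|\Phi^A(\kappa)|^2$: this distribution together with its marginals over the cross‑edge labels can be produced by contracting the ``doubled'' network $\overline{\Phi^A(\cdot)}\,\Phi^A(\cdot)$ along a tree decomposition of $G_A$ with the cross‑edge labels left open, after which $\kappa$ is sampled label by label exactly as one samples a matrix product state. The empirical mean of $O(\epsilon^{-2})$ independent copies of $X$, amplified by a median‑of‑means trick, is then within $\epsilon$ of $\Phi$ with probability $\ge 99\%$, for total running time $O(n4^w\epsilon^{-2})$ (a constant number of tree‑decomposition contractions and one sampling pass per sample).

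The main obstacle is controlling the variance of $X$: the crude bound on $\mathbb{E}[|X|^2]$ is exponential in $|E_{AB}|$, which is worthless for dense graphs. Getting $\mathrm{Var}(X)=O(1)$ requires exploiting the structure of the singular value decompositions. Because the left (resp.\ right) singular vectors of every cross‑edge matrix form an orthonormal basis of $\mathbb{C}^2$, the families of single‑qubit diagonal factors that accumulate on the $A$- and $B$-sides assemble, after the appropriate normalization of $\lambda_\kappa$, into Kraus operators of completely positive trace‑preserving maps; the associated Bessel/contractivity inequalities then bound sums over $\kappa$ of squared sub‑amplitudes such as $\sum_\kappa|\Phi^A(\kappa)|^2$ and $\sum_\kappa|\Phi^B(\kappa)|^2$. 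Combining these orthogonality relations with the fact that $p(\kappa)$ is matched to $|\Phi^A(\kappa)|^2$ — so that on the support of $p$ the importance weight $\lambda_\kappa\Phi^B(\kappa)/p(\kappa)$ cannot blow up — is what cancels the exponential prefactor against the typical exponential smallness of $\Phi^A(\kappa)$ and leaves $\mathrm{Var}(X)=O(1)$. Pinning down this cancellation, and checking that the doubled sampling step still fits in time $O(n4^w)$ (e.g.\ by working with a tree‑decomposition canonical form rather than an explicit doubling), is the technical core of the proof.
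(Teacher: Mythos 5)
Your high-level plan (split $\Phi$ across the $A/B$ cut, evaluate each half exactly by a tree-decomposition dynamic program, and average $O(\epsilon^{-2})$ samples of an unbiased estimator) matches the paper's strategy, but the way you cut the graph is where the argument breaks. You yourself flag the variance bound as the unproven ``technical core,'' and in fact it is false for the estimator you specify. Take $G=K_{m,m}$ with the natural bipartition $V=A\cup B$ (so $G_A,G_B$ are edgeless, $w=1$, squarely inside the theorem), $f_{uv}(x_u,x_v)=(-1)^{x_ux_v}$ on every cross edge, $g\equiv 1$, and $O_j=H$ for all $j$. Each cross-edge matrix is $\sqrt{2}H$, with both singular values $\sqrt{2}$; one side of the SVD consists of the projectors $|0\rangle\langle 0|,|1\rangle\langle 1|$ and the other of $\tfrac{1}{\sqrt2}I,\tfrac{1}{\sqrt2}Z$. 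A direct computation gives $\Phi^A(\kappa)=2^{-m(m+1)/2}$ for \emph{every} one of the $2^{m^2}$ labelings $\kappa$, $\lambda_\kappa=2^{m^2/2}$, while $\Phi^B(\kappa)=2^{-m/2}$ for the single all-ones $\kappa$ and $0$ otherwise. Hence your distribution $p(\kappa)\propto\lambda_\kappa^2|\Phi^A(\kappa)|^2$ is uniform over all $2^{m^2}$ labels, and
\[
\mathbb{E}\,|X|^2=\Bigl(\sum_\kappa \lambda_\kappa^2|\Phi^A(\kappa)|^2\Bigr)\Bigl(\sum_\kappa|\Phi^B(\kappa)|^2\Bigr)=2^{m^2-m}\cdot 2^{-m}=2^{m^2-2m},
\]
while $\Phi=2^{-m}$. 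The estimator is $0$ except with probability $2^{-m^2}$, when it is astronomically large; $O(\epsilon^{-2})$ samples are useless. The hoped-for cancellation from the Kraus/Bessel structure only controls the $B$-side sum ($\sum_\kappa Q_\kappa^\dagger Q_\kappa=I$ gives $\sum_\kappa|\Phi^B(\kappa)|^2\le 1$); the $A$-side sum carries all the singular values, $\sum_\kappa\lambda_\kappa^2 P_\kappa^\dagger P_\kappa=4^{|E_{AB}|}I$, and no see-saw inequality rescues the product. Also note the per-edge operator Schmidt factors on a shared vertex multiply entrywise, so the products $P_\kappa$ are not Hilbert--Schmidt orthogonal, undermining the Bessel argument you invoke.

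The missing idea is to decompose across the cut in the \emph{computational basis} rather than via operator Schmidt decompositions of the cross-edge gates. The paper writes $\Phi=\langle\beta|\alpha\rangle$ with $|\alpha\rangle=(O_A\otimes I_B)U_fH^{\otimes n}|0^n\rangle$ and $|\beta\rangle=(I_A\otimes O_B^\dagger)U_g^\dagger H^{\otimes n}|0^n\rangle$, samples $x\sim P(x)=|\langle x|\alpha\rangle|^2$, and outputs $R(x)=\langle\beta|x\rangle/\langle\alpha|x\rangle$; since $\sum_x|x\rangle\langle x|=I$ is an orthonormal resolution of identity, $\mathbb{E}|R|^2=\sum_x|\langle x|\beta\rangle|^2\le 1$ \emph{automatically}, with no cancellation to establish. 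Fixing $x_B$ (resp.\ $x_A$) turns every cross-edge term of $f$ (resp.\ $g$) into a $1$-local factor, so amplitudes of $|\alpha\rangle$ and $|\beta\rangle$ are sums of two-local functions on $G_A$ and $G_B$ alone, computable by exactly the tree-decomposition dynamic program you describe in your warm-up; sampling $P(x)$ reduces to marginals of such sums (or the paper's linear-time tensor-network sampler), and non-unitary $O_j$ are handled by an SVD into a convex combination of unitaries. So your tree-decomposition machinery and Monte Carlo outer loop are sound, but the cross-edge Schmidt expansion must be replaced by the basis-decomposition/ratio estimator (or you must supply a genuinely new variance argument, which the $K_{m,m}$ example shows cannot work for the distribution you chose).
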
 

For example, suppose $G$ is a bipartite graph.
Then one can partition the vertices  of $G$ as  $V=A\cup B$  such that 
every edge of $G$ connects a vertex in $A$ and a vertex in $B$. 
Note that the induced subgraphs $G_A$ and $G_B$ are trivial in this case
(each subgraph has an empty set of edges). Such graphs
have treewidth $w=1$. Thus our algorithm has a runtime $O(n \epsilon^{-2})$. A more sophisticated application is obtained by considering planar graphs $G$. As discussed in Section~\ref{sec:treewidth}, there is an efficiently computable partition of vertices such that $w=2$ for any planar graph $G$ and thus our algorithm has runtime $O(n \epsilon^{-2})$.
In fact, the same runtime scaling holds for any family of graphs 
with a fixed forbidden minor~\cite{devos2004excluding}.

Thus Theorem~\ref{thm:graph_based} provides an efficient classical method for calculating variational energies of the level-2 QAOA for any bipartite or any planar graph.
As a demonstration, we report a classical simulation\footnote{Our simulation will actually use a simpler version of the algorithm in Theorem~\ref{graph_based} whose runtime suffers from an additional factor of $n$.} of the recently
proposed Recursive QAOA~\cite{bravyi2019obstacles} on planar graphs
with the level-2 variational states, see Section~\ref{sec:QAOA} for details. For example, simulating the Recursive QAOA with $n=225$ qubits on a laptop computer
took less than one day. To accomplish this simulation we had
to solve about $300,000$ instances of the graph-based forrelation problem.

In order to convey the main ideas, let us now describe how the
algorithm claimed in Theorem~\ref{thm:graph_based} works in the simplest case where  $G$ is a bipartite graph and all operators $O_j$
are unitary. Partition the vertices  of $G$ as  $V=A\cup B$  such that 
the induced subgraphs $G_A$ and $G_B$ have no edges. 
Let $O_A$ and $O_B$ be the tensor products of operators $O_j$ over all qubits $j\in A$ and $j\in B$ respectively. Then
$O_1\otimes O_2 \otimes \cdots \otimes O_n=O_A \otimes O_B$.
Define normalized $n$-qubit states
\[
|\alpha\ra = (O_A\otimes I_B) U_f H^{\otimes n}|0^n\ra
\quad \mbox{and} \quad
|\beta\ra = (I_A\otimes O_B^\dag) U_g^\dag H^{\otimes n}|0^n\ra
\]
such that $\Phi = \la \beta|\alpha\ra$. 
We claim that the states $|\alpha\ra$ and $|\beta\ra$ are computationally tractable~\cite{nest2009simulating}
in the sense that their amplitudes in the standard basis are easy to compute classically.
Indeed, consider some amplitude $\la x|\alpha\ra$. Write $x=x_A x_B$ 
where $x_A$ and $x_B$ are the restrictions of $x$ onto $A$ and $B$. Then 
\[
\la x|\alpha\ra=
\la x_A x_B|\alpha\ra = 2^{-n/2} \sum_{y_A} f(y_A x_B) \la x_A|O_A|y_A\ra,
\] 
where the sum ranges over $y_A\in  \{0,1\}^{|A|}$. Note that 
fixing the variables $x_B$ transforms  $f(y_Ax_B)$ into a $1$-local
function. Indeed, each two-local term in $f$ depends on some variable
in $A$  and some variable in $B$. Since the latter is fixed, $f$ becomes a product of terms
depending on a single variable each, that is, $f$ is $1$-local. Likewise,  for a fixed $x_A$ the matrix element
$\la x_A|O_A|y_A\ra$ is a $1$-local function of $y_A$. Thus 
$\la x|\alpha\ra =\sum_{y_A} h(y_A)$, where $h(y_A)$ is a $1$-local function.
Such a sum can be computed exactly in linear time. By the same
reasoning, any  amplitude $\la x|\beta\ra$ is easy to compute classically.   A slight generalization
of the above argument shows that  a classical computer can sample
the probability distribution $P(x)\equiv |\la x|\alpha\ra|^2$
in linear time. 
Now the forrelation $\Phi=\la \beta|\alpha\ra$
can be approximated by a Monte Carlo method due to Van den Nest~\cite{nest2009simulating}
which is based on the identity
\[
\Phi = \la \beta|\alpha\ra = \sum_{x\in \{0,1\}^n} P(x) R(x), \qquad R(x)\equiv \frac{\la \beta|x\ra}{\la \alpha|x\ra}.
\]
Therefore it suffices to approximate the mean value of a random variable $R(x)$ with $x$ sampled from $P(x)$.
Fix an integer $S\gg 1$ and let  $x^1,\ldots,x^S\in \{0,1\}^n$ be independent samples from $P(x)$.
Define the desired estimate of $\Phi$ as an empirical mean value of $R(x)$ over the observed samples,
$\mu=(R(x^1)+R(x^2)+ \ldots + R(x^S))/S$.
The random variable $R(x)$ has the mean value $\Phi$ and its variance is bounded as
\be
\mathrm{Var}(R)\le \sum_{x\in \{0,1\}^n} P(x) |R(x)|^2 = \sum_{x\in \{0,1\}^n} |\la x|\beta\ra|^2 = 1.
\ee
By Chebyshev's inequality, $|\mu - \Phi|\le \epsilon$
with probability at least $0.99$  if we choose  $S=100 \epsilon^{-2}$.

The above algorithm is efficient whenever efficient subroutines are available
for computing  amplitudes of the states $|\alpha\ra$, $|\beta\ra$
and for sampling the probability distribution $|\la x|\alpha\ra|^2$. 
In Section~\ref{sec:graph_based} we prove Theorem \ref{thm:graph_based} by constructing the desired subroutines
for more general  (non-bipartite) graphs that admit a partition into two subgraphs with a small treewidth. We leave as an open question complexity of the graph-based forrelation problem on general graphs.

To sample from $|\la x|\alpha\ra|^2$ in linear time for (non-bipartite) planar graphs, we will prove that there is an efficient subroutine for the following general problem on tensor networks that may be of independent interest:
\begin{problem}
	\label{prob:forrsample}
	Given an initial state $\chi = \chi_{1} \otimes \cdots \otimes \chi_{n}$ of $n$ $d$-dimensional qudits, a collection of $m$ diagonal gates $U_1, \ldots, U_m$ (not necessarily unitary, possibly multi-qudit), and single-qudit operators
	$O_1,\ldots,O_n$, sample a string $x\in \{1,2,\ldots,d\}^n$ from the distribution\footnote{We note that there are choices for the operators $\{ O_a \}_{a\in[n]}$ for which $P$ cannot be normalized to a distribution (i.e., is always 0).  In such cases we simply report that there are no valid outputs $x$.} $P(x) \propto \bra{x} O U \chi U^{\dag} O^{\dag} \ket{x}$, where
	\begin{align*}
		U &= \prod_{j=1}^{m} U_j, & O &= \bigotimes_{i=1}^{n} O_i.
	\end{align*} 
\end{problem}
A notable special case of this problem is sampling a Markov random field
which is a ubiquitous problem in statistics~\cite{clifford1990markov}.
Indeed, any Markov random field $P(x)$ with $n$ discrete variables $x_i\in \{1,2,\ldots,d\}$ can be written as
$P(x)\propto \la x|U\chi U^\dag|x\ra$,
where $U_j$ are nonunitary diagonal multi-qudit gates with real non-negative entries on the diagonal and $\chi$ is the maximally mixed state. 

Connectivity of the gates $U_j$ can be described by a graph $G$ with $n$ vertices
such that each vertex represents a qudit
and a pair of vertices is connected by an edge if the corresponding pair of qudits participates in some gate $U_j$.
Let $w$ be the threewidth of $G$. 
In Appendix~\ref{sec:linear_time_sampling}
we give an algorithm that solves Problem~\ref{prob:forrsample}
in time $O(nd^{2w} + m d^{w})$.

The remainder of the paper is organized as follows. Our results concerning oracle-based forrelation, i.e., the proofs of Theorems \ref{thm:fastfor} and \ref{thm:kamplitude}, are provided in Sections \ref{sec:oraclebased} and \ref{sec:kfold_forrelation}, respectively. 
Section~\ref{sec:graph_based}  
describes our classical algorithm for graph-based forrelation.
We first analyze a simplified version of the algorithm 
which has runtime scaling quadratically with the number of qubits,
see Section~\ref{sec:graph_based}. To achieve the runtime stated
in Theorem~\ref{thm:graph_based}, we augment the algorithm
of Section~\ref{sec:graph_based}
by a linear-time subroutine solving Problem~\ref{prob:forrsample}.
This subroutine is described in Appendix~\ref{sec:linear_time_sampling} and
illustrated by an example in Appendix~\ref{app:example}.
We apply the algorithm for graph-based forrelation  to estimate the variational energy of level-2 QAOA and report a numerical simulation
RQAOA in Section~\ref{sec:QAOA}. 
Further details concerning
this simulation can be found in Appendices~\ref{app:eliminate_beta_2},\ref{all:brute_force},\ref{app:tree}.
We conclude by discussing implications of our work
and some open questions concerning graph-based forrelation
in Section~\ref{sec:conclusions}.
Finally, the problem of approximating the graph-based forrelation
with a small relative error is discussed in 
Appendix~\ref{sec:relative_error}.

\section{Oracle-based forrelation}
\label{sec:oraclebased}

In this section we consider the forrelation problem in which functions $f,g$ are accessed via oracle queries.  We begin by proving Theorem \ref{thm:fastfor}, restated below. 

\forr*

\begin{proof}
Let $\epsilon>0$ be given. If $\epsilon \leq 2^{-n/2}$ then we use a very naive algorithm to compute $\Phi(f,g)$ exactly using the desired total runtime $O(n2^n)$ and $O(2^n)$ queries. Indeed, in this case it suffices to exactly compute a length $2^n$ vector $H^{\otimes n}U_g H^{\otimes n} U_fH^{\otimes n}|0^n\rangle$ using matrix vector multiplication. 
Applying each single-qubit Hadamard gate using sparse matrix-vector multiplication requires a runtime $O(2^n)$ \footnote{To see this, note that for any $n$-qubit state $|v\rangle$, each entry of $\left(H\otimes I_{n-1}\right) |v\rangle$ is a sum of two entries of $|v\rangle$.} while applying each gate $U_f$ or $U_g$ uses $O(2^n)$ queries. The forrelation $\Phi(f,g)$ is then obtained as the $0^n$ entry of this vector.

In the following we consider the case 
\begin{equation}
\epsilon\geq 2^{-n/2}.
\label{eq:eps}
\end{equation}

Let $\mathcal{A}_{n,k}$ be the set of all affine spaces $S\subseteq \mathbb{F}_2^{n}$ with $|S|=2^k$. For each $S\in \mathcal{A}_{n,k}$ define a normalized $n$-qubit stabilizer state
\[
|S\rangle=2^{-k/2} \sum_{x\in S}|x\rangle.
\]
For any pair $S,T\in \mathcal{A}_{n,k}$, define
\begin{equation}
\mu(S,T)=2^{n-k} \langle S|U_f H^{\otimes n} U_g|T\rangle.
\label{eq:must}
\end{equation}
\begin{claim}
Suppose $S,T\in \mathcal{A}_{n,k}$ are drawn uniformly at random. Then
\[
\mathbb{E} (\mu(S,T))=\Phi(f,g) \quad \text{ and } \quad \mathrm{Var}(\mu(S,T))\leq 2^{n-2k}+\frac{2}{2^k}.
\]
\label{claim:var}
\end{claim}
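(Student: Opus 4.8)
The plan is to turn $\mu(S,T)$ into an explicit double sum over the points of the two affine subspaces, record the first two moments of the event $\{x\in S\}$ for a uniformly random $S\in\mathcal{A}_{n,k}$, and then compute $\mathbb{E}[\mu(S,T)]$ and $\mathbb{E}[\mu(S,T)^2]$ by direct expansion, using the orthogonality identity $\sum_{x\in\mathbb{F}_2^n}(-1)^{x\cdot z}=2^n\delta_{z,0}$ together with $f(x)^2=g(y)^2=1$ to collapse the cross terms. First, since $\langle x|H^{\otimes n}|y\rangle=2^{-n/2}(-1)^{x\cdot y}$ and $|S\rangle,|T\rangle$ are the uniform superpositions over $S$ and $T$, one finds $\mu(S,T)=2^{n/2-2k}\sum_{x\in S}\sum_{y\in T}F(x,y)$ where $F(x,y):=f(x)g(y)(-1)^{x\cdot y}$; this is a renormalized version of the sample-based estimator \eqref{eq:samplebased} in which the samples are constrained to form affine subspaces. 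Observe that $\Phi(f,g)=2^{-3n/2}\sum_{x,y}F(x,y)$ and that $F(x,y)^2=f(x)^2=g(y)^2=1$.

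Next I would use the fact that a uniformly random affine subspace of dimension $k$ equals $S=v+W$ with $v$ uniform in $\mathbb{F}_2^n$ and $W$ a uniformly random $k$-dimensional linear subspace, drawn independently. This gives $\Pr[x\in S]=2^{k-n}=:p_1$ for each fixed $x$, and for $x\neq x'$, conditioning on $W$ and using the standard fact that a fixed nonzero vector lies in a uniformly random $k$-dimensional subspace with probability $(2^k-1)/(2^n-1)$, it gives $\Pr[x\in S,\ x'\in S]=2^{k-n}(2^k-1)/(2^n-1)=:p_2$. I will also need the elementary inequality $p_2\le p_1^2=2^{2k-2n}$, which holds since $(2^k-1)/(2^n-1)\le 2^{k-n}$ whenever $k\le n$. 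Because $S$ and $T$ are independent, linearity of expectation gives $\mathbb{E}[\mu(S,T)]=2^{n/2-2k}p_1^2\sum_{x,y}F(x,y)=2^{-3n/2}\sum_{x,y}F(x,y)=\Phi(f,g)$.

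For the variance I would expand $\mathbb{E}[\mu(S,T)^2]=2^{n-4k}\sum_{x,x',y,y'}F(x,y)F(x',y')\,\Pr[x,x'\in S]\,\Pr[y,y'\in T]$ and substitute $\Pr[x,x'\in S]=p_2+(p_1-p_2)\delta_{x,x'}$ (and likewise for $T$), so the product of the two probability factors splits into four groups of terms. The fully off-diagonal group contributes $p_2^2\bigl(\sum_{x,y}F(x,y)\bigr)^2=p_2^2\,2^{3n}\Phi(f,g)^2$. Each half-diagonal group collapses via the orthogonality identity: forcing $x=x'$ leaves $p_2(p_1-p_2)\sum_{x,y,y'}g(y)g(y')(-1)^{x\cdot(y\oplus y')}=p_2(p_1-p_2)\,2^n\sum_y g(y)^2=p_2(p_1-p_2)\,2^{2n}$, and symmetrically for $y=y'$. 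The fully diagonal group gives $(p_1-p_2)^2\sum_{x,y}F(x,y)^2=(p_1-p_2)^2\,2^{2n}$. Using $2p_2(p_1-p_2)+(p_1-p_2)^2=p_1^2-p_2^2$, these assemble into $\mathbb{E}[\mu(S,T)^2]=2^{4n-4k}p_2^2\,\Phi(f,g)^2+2^{3n-4k}(p_1^2-p_2^2)$.

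It remains to subtract $\Phi(f,g)^2$. Since $p_2\le p_1^2$ we have $2^{4n-4k}p_2^2\le 2^{4n-4k}p_1^4=1$, so the $\Phi(f,g)^2$ term contributes at most $0$ to the variance; and $2^{3n-4k}(p_1^2-p_2^2)\le 2^{3n-4k}p_1^2=2^{n-2k}$. Hence $\mathrm{Var}(\mu(S,T))\le 2^{n-2k}$, which is in fact slightly stronger than the stated bound $2^{n-2k}+2\cdot 2^{-k}$ and certainly implies it. I expect the main obstacle to be the second-moment bookkeeping: pinning down the pair probability $p_2$ for a uniformly random affine subspace, and tracking exactly which cross terms survive the Fourier collapse; everything else reduces to linearity of expectation and the single inequality $(2^k-1)/(2^n-1)\le 2^{k-n}$.
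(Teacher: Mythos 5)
Your proposal is correct and is essentially the paper's argument in different notation: both rest on the same first and second moments of the membership indicator for a random affine subspace ($\Pr[x\in S]=2^{k-n}$ and $\Pr[x,x'\in S]=2^{k-n}(2^k-1)/(2^n-1)$ for $x\neq x'$), and your diagonal/off-diagonal split of the quadruple sum is exactly the paper's operator identity $\mathbb{E}_S(|S\rangle\langle S|)=a I+b\,|+^n\rangle\langle +^n|$ followed by traces. The only difference is that by keeping the exact identity $\mathbb{E}(\mu^2)=2^{4n-4k}p_2^2\,\Phi(f,g)^2+2^{3n-4k}(p_1^2-p_2^2)$ rather than bounding $a\le 2^{-n}$ and $b\le 2^{k-n}$ term by term, you obtain the marginally tighter bound $\mathrm{Var}(\mu(S,T))\le 2^{n-2k}$, which of course implies the stated claim.
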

\begin{proof}
For any $x\in \mathbb{F}_2^n$, let $\chi_S(x)=1$ if $x\in S$ and $\chi_S(x)=0$ otherwise. Then
\begin{equation}
\mathbb{E}_S(\chi_S(x))=\frac{|S|}{2^n}=2^{k-n}.
\label{eq:mean}
\end{equation}
Likewise, for any pair of binary strings $x,y\in \mathbb{F}_2^n$ such that $x\neq y$ we have
\begin{equation}
\sigma_2\equiv \mathbb{E}_S(\chi_S(x)\chi_S(y))=\frac{|S|\left(|S|-1\right)}{2^n(2^n-1)}=\frac{4^{k-n}(1-2^{-k})}{1-2^{-n}}\leq 4^{k-n}.
\label{eq:var}
\end{equation}
Using Eq.~\eqref{eq:mean} we get 
\[
\mathbb{E}_S(|S\rangle)=2^{(k-n)/2} |+^n\rangle
\]
where $|+^n\rangle=H^{\otimes n} |0^n\rangle$, and consequently
\begin{equation}
\mathbb{E}_S\mathbb{E}_T \mu(S,T)=\Phi(f,g).
\label{eq:exp}
\end{equation}
Using Eq.~\eqref{eq:var} we get
\begin{equation}
\mathbb{E}_S(|S\rangle\langle S|)= a\cdot I +b\cdot |+^n\rangle\langle +^n| \qquad a\equiv 2^{-n}-2^{-k}\sigma_2 \qquad b\equiv 2^{n-k}\sigma_2.
\label{eq:braket}
\end{equation}
and 
\begin{align}
\mathbb{E}_S\mathbb{E}_T (\mu(S,T))^2 &=4^{n-k}\mathbb{E}_S\mathbb{E}_T \mathrm{Tr}\left(|S\rangle\langle S| U_f H^{\otimes n} U_g |T\rangle \langle T|U_g^{\dagger} H^{\otimes n} U_f^{\dagger} \right)\nonumber\\
&=2^{3n-2k} a^2+4^{n-k}b^2\Phi(f,g)^2+2\cdot 4^{n-k} ab\nonumber\\
&\leq 2^{n-2k}+\Phi(f,g)^2+\frac{2}{2^k}.
\label{eq:exp2}
\end{align}
where we used the upper bounds $a\leq 2^{-n}$ and $b\leq 2^{k-n}$. Combining Eqs.~(\ref{eq:exp}, \ref{eq:exp2}) gives
\[
\mathrm{Var}(\mu(S,T))= \mathbb{E}_S\mathbb{E}_T (\mu(S,T))^2-\left(\mathbb{E}_S\mathbb{E}_T \mu(S,T)\right)^2\leq 2^{n-2k}+\frac{2}{2^k}.
\]
\end{proof}

Now let us fix $k$ to be an integer satisfying
\begin{equation}
2^{18}\cdot 2^n\epsilon^{-2} \geq 2^{2k}\geq 2^{16}\cdot 2^n\epsilon^{-2}.
\label{eq:choosek}
\end{equation}
We shall choose our estimate $\mu\equiv\mu(S,T)$ where the affine subspaces $S,T\in \mathcal{A}_{n,k}$ are drawn uniformly at random. Note that the random selection of these affine spaces can be performed using a runtime $O(\mathrm{poly}(n))$, resulting in a parameterization
\begin{equation}
S=\{ Ax+b: x\in \{0,1\}^{k}\} \qquad T=\{ Cx+d: x\in \{0,1\}^{k}\} 
\label{eq:st}
\end{equation}
where $A$ and $C$ are $n\times k$ binary matrices and $b,d\in \{0,1\}^n$, and addition is performed over $\mathbb{F}_2$. 

Using Claim \ref{claim:var} and Eq.~\eqref{eq:choosek} we get
\[
\mathrm{Var}(\mu(S,T))\leq \frac{\epsilon^2}{2^{16}}+\frac{\epsilon}{2^{7}}2^{-n/2}\leq \epsilon^2 \left(\frac{1}{2^{16}}+\frac{1}{2^7}\right)\leq \frac{\epsilon^2}{100},
\]
where in the second inequality we used  Eq.~\eqref{eq:eps}. Applying Chebyshev's inequality we see that 
\[
\mathrm{Pr}\left(|\mu(S,T)-\Phi(f,g)|\geq \epsilon \right)\leq 1/100,
\]
as desired. Below we show that there is a classical algorithm which, given $S,T\in \mathcal{A}_{n,k}$ (as in Eq.~\eqref{eq:st}), computes $\mu(S,T)$ exactly using a runtime $O(k^2n+n 2^k)$ and $O(2^{k})$ queries.  Using Eq.~\eqref{eq:choosek} we see that the total runtime is $O(n2^{n/2}\epsilon^{-1})$ and the total number of queries is $O(2^{n/2}\epsilon^{-1})$.

Let us now describe how to compute $\mu(S,T)$ using the stated runtime and number of queries. Using Eqs.~(\ref{eq:st},\ref{eq:must}) we may write
\[
\mu(S,T)=2^{n/2-2k}\sum_{x,y\in \{0,1\}^{k}} f(Ax+b)g(Cy+d)(-1)^{y^T C^T Ax+x^{T}A^{T} d+b^{T}Cy+b^Td}
\]
To compute this sum on a classical computer we first initialize the following vector of length $2^{k}$ 
\[
|\psi_1\rangle=\sum_{x\in \{0,1\}^{k}}f(Ax+b)(-1)^{x^TA^Td} |x\rangle.
\]
We compute the entries of this vector one at a time using a Gray code ordering $x_1,x_2,\ldots, x_{2^k}$ of the set of binary strings $\{0,1\}^{k}$, so that $x_{i+1}$ differs from $x_i$ in only one bit. With this choice, we can compute $Ax_{i+1}+b$ from $Ax_{i}+b$ using $O(n)$ binary operations. We need only compute $A^Td$ once (using a runtime $O(nk)$) and then each phase factor $(-1)^{x^TA^Td}$ can also be computed using $O(k)$ runtime. From this we see that initializing $\psi_1$ can be performed using $O(nk+n2^{k})$ runtime and $O(2^{k})$ queries.

 In the next step we compute a vector 
\[
|\psi_2\rangle= \sum_{x\in \{0,1\}^{k}}f(Ax+b)(-1)^{x^TA^Td} |C^TA x\rangle.
\]
We first compute the matrix $C^TA$ which takes a runtime $O(k^2 n)$. We initialize $|\psi_2\rangle$ as a vector of length $2^k$ with every entry equal to zero. Then,  using our Gray code order we compute $C^TAx_1, C^TAx_2, \ldots, C^TAx_{2^k}$. Each time we compute a vector $z_i=A^TCx_i$ we set
\[
\langle z_i|\psi_2\rangle\leftarrow \langle z_i|\psi_2\rangle+\langle x_i|\psi_1\rangle.
\]
The total runtime for this step is $O(k^2n+k2^k)$ since $C^TA$ is a $k\times k$ matrix.

Viewing $\psi_2$ as a $k$-qubit state, we then apply a Hadamard gate on each qubit to obtain 
\[
|\psi_3\rangle=H^{\otimes k} |\psi_2\rangle= 2^{-k/2}\sum_{x,y\in \{0,1\}^{k}}f(Ax+b)(-1)^{x^TA^Td}(-1)^{y^TC^TAx} |y\rangle.
\]
Applying a single Hadamard gate to a $k$-qubit state stored in classical computer memory takes a runtime $O(2^k)$ using sparse matrix-vector multiplication. The total runtime to compute $\psi_3$ using a sequential application of the $k$ Hadamard gates is then upper bounded as $O(k2^k)$.

Finally, we compute a vector $\psi_4$ defined by
\[
\langle y|\psi_4\rangle=\langle y|\psi_3\rangle\cdot 2^{n/2-3k/2}(-1)^{b^Td}   g(Cy+d)(-1)^{b^TCy} \qquad y\in \{0,1\}^k.
\]
Performing this step using a Gray code ordering can be done using  $O(nk+n2^{k})$ total runtime and $O(2^{k})$ queries (similar to the procedure used to initialize $\psi_1$).  Here
\[
|\psi_4\rangle= 2^{n/2-2k}\sum_{x,y\in \{0,1\}^{k}}f(Ax+b)(-1)^{x^TA^Td}(-1)^{y^TC^TAx} g(Cy+d)(-1)^{b^TCy} (-1)^{b^Td} |y\rangle.
\]
Finally, $\mu(S,T)$ is obtained by summing all entries of the vector $\psi_4$ using a runtime $O(2^k)$.

In total, this algorithm uses $O(2^k)$ queries and $O(k^2n+n2^k)$ runtime, as claimed.
\end{proof}

\section{Quantum query complexity and \texorpdfstring{$k$-fold}{k-fold} Forrelation}
\label{sec:kfold_forrelation}
Now let us turn our attention to quantum query algorithms and the $k$-fold forrelation $\Phi(f_1,f_2,\ldots, f_k)$ defined in Eq.~\eqref{eq:kfold}. Here $f_1,f_2,\ldots, f_k:\{0,1\}^n\rightarrow \{-1,1\}^n$ are $n$-bit boolean functions. In this Section we prove Theorem \ref{thm:kamplitude}. 

Letting $m\geq n$ we shall describe a randomized classical algorithm which is capable of estimating an $m$-qubit amplitude of the form
\begin{equation}
q=\langle 0^m|M_1U_{f_1} M_2 U_{f_2}\ldots U_{f_k}M_{k+1}|0^m\rangle.
\label{eq:p2}
\end{equation}
where $M_1,\ldots, M_{k+1}$ are $m$-qubit operators such that $\|M_j\|\leq 1$ for all $1\leq j\leq k+1$, and $U_{f_j}|x\rangle|y\rangle=f_j(x)|x\rangle|y\rangle$ for each $x\in \{0,1\}^n, y\in \{0,1\}^{m-n}$.  Clearly the $k$-fold forrelation $\Phi(f_1,f_2,\ldots, f_k)$ can be expressed as in Eq.~\eqref{eq:p2} with $m=n$. Amplitudes of the form Eq.~\eqref{eq:p2} can also express output probabilities of quantum query algorithms. Indeed, suppose $p$ is the probability of measuring an output qubit in the state $|1\rangle$ at the output of a quantum algorithm that makes $t$ quantum queries to an oracle. Then $p$ can be expressed as as in Eq.~\eqref{eq:p2} with $k=2t$.

\begin{theorem}
There exists a classical randomized algorithm which, given $\epsilon>0$,  outputs an estimate $\tilde{q}$ such that
\[
|\tilde{q}-q|\leq \epsilon
\]
with probability at least $99\%$. The algorithm uses  $O(k \epsilon^{-2/k} 2^{n(1-\frac{1}{k})})$ oracle queries.
\label{thm:tforr}
\end{theorem}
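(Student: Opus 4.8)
The plan is to write $q$ as a low-degree multilinear polynomial in the oracle values and to estimate that polynomial by an importance-sampling Monte Carlo method whose sample complexity is governed by the $\ell_1$-weight of its coefficients. First I would expand each diagonal oracle as $U_{f_j}=\sum_{z\in\{0,1\}^n}f_j(z)\,\bigl(|z\rangle\langle z|\otimes I\bigr)$ and substitute into \eqref{eq:p2}, obtaining
\[
q=\sum_{\vec z} a_{\vec z}\prod_{j=1}^{k}f_j(z_j),\qquad
a_{\vec z}=\langle 0^m|\,M_1\bigl(|z_1\rangle\langle z_1|\otimes I\bigr)M_2\cdots \bigl(|z_k\rangle\langle z_k|\otimes I\bigr)M_{k+1}\,|0^m\rangle,
\]
where $\vec z=(z_1,\dots,z_k)\in(\{0,1\}^n)^k$. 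The coefficients $a_{\vec z}$ depend only on $M_1,\dots,M_{k+1}$, so any single one of them can be computed classically (by a matrix--vector calculation) using no oracle queries.

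The crucial structural input is a bound on the total coefficient weight that is independent of the workspace size $m$: $\sum_{\vec z}|a_{\vec z}|\le 2^{n(k-1)/2}$, together with $\max_{\vec z}|a_{\vec z}|\le 1$ and $\sum_{\vec z}|a_{\vec z}|^2\le 1$. I would prove these by an iterated Cauchy--Schwarz / operator-norm telescoping: summing over one index at a time and using that $\sum_z|z\rangle\langle z|\otimes I=I$ (so the ancilla register never produces a factor $2^{m-n}$), that $\|M_j\|\le 1$, and that a pinching is a contraction in operator norm. Each Cauchy--Schwarz step that collapses an index costs one factor $\sqrt{2^{n}}$, and absorbing the last remaining index against the unit vector $M_1^{\dagger}|0^m\rangle$ in $\ell_2$ produces the exponent $(k-1)/2$; the $\ell_2$-bound comes from the same telescoping with the Cauchy--Schwarz steps omitted. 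Keeping all of these estimates $m$-independent is the first technical point.

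Then I would estimate $q$. For each block $j$ let $W_j(z)=\sum_{\vec z\,:\,z_j=z}|a_{\vec z}|$ be the marginal coefficient weight, and let $p_j$ be a sampling distribution on $\{0,1\}^n$ adapted to $W_j$ (a mixture of the normalized weights $W_j(z)/\|a\|_1$ with a uniform component so that $p_j$ is bounded below; equivalently, one balances the coefficients by the variable-copying trick of \cite{aaronson2018forrelation}). Taking $r\asymp 2^{n(1-1/k)}\epsilon^{-2/k}$ and querying each $f_j$ at $r$ independent $p_j$-samples ($kr$ queries in total), the reweighted estimator $\widehat q=r^{-k}\sum_{i_1,\dots,i_k}\bigl(a_{z^{(1)}_{i_1},\dots,z^{(k)}_{i_k}}/\prod_j p_j(z^{(j)}_{i_j})\bigr)\prod_j f_j(z^{(j)}_{i_j})$ is unbiased. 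Grouping the terms of $\mathbb E\bigl[|\widehat q-q|^2\bigr]$ by the collision pattern of the index tuples, the dominant contribution equals $r^{-k}\sum_{\vec z}|a_{\vec z}|^2/\prod_j p_j(z_j)$ and the rest is lower order in $2^n/r$; using the $\ell_1$- and $\ell_2$-bounds and the choice of $p_j$ one shows this dominant quantity is $O(2^{n(k-1)})$, whence $\mathrm{Var}(\widehat q)\le\epsilon^2/100$ and Chebyshev's inequality gives the $99\%$ guarantee. When $\epsilon$ is so small that $2^{n(1-1/k)}\epsilon^{-2/k}\gtrsim 2^n$, I would instead read all $k2^n$ oracle values and compute $q$ exactly, which still fits the claimed query budget.

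The step I expect to be the main obstacle is the variance control, i.e.\ showing that for an \emph{arbitrary} coefficient tensor $a$ --- which may be heavily concentrated rather than spread out like the forrelation coefficients --- one can choose the $p_j$ (equivalently, the blow-up) so that $\sum_{\vec z}|a_{\vec z}|^2/\prod_j p_j(z_j)=O(2^{n(k-1)})$. A uniform subsample is too wasteful here (its variance can be larger by a factor $2^{n}$), and it is exactly the interplay between the $m$-independent $\ell_1$-bound on $a$ and the balancing of coefficients that makes the final query count come out as $O\!\bigl(k\,\epsilon^{-2/k}\,2^{n(1-1/k)}\bigr)$.
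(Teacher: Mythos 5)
Your expansion $q=\sum_{\vec z}a_{\vec z}\prod_{j=1}^k f_j(z_j)$ and the norm bounds on the coefficient tensor are reasonable, but the argument is not a proof: the step you yourself flag as the ``main obstacle'' --- exhibiting product distributions $p_1,\dots,p_k$, computable from the coefficient tensor alone, for which $\sum_{\vec z}|a_{\vec z}|^2/\prod_j p_j(z_j)=O\bigl(2^{n(k-1)}\bigr)$ for an \emph{arbitrary} realizable tensor $a$ --- is precisely the point at which the polynomial-method route you are following (the Aaronson--Ambainis argument with its balancedness/variable-copying preprocessing) was found to contain a critical error, and closing that gap is the entire content of this theorem; you cannot leave it as an expected difficulty, nor cite it from Ref.~\cite{aaronson2018forrelation}. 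Neither $\sum_{\vec z}|a_{\vec z}|\leq 2^{n(k-1)/2}$ nor $\sum_{\vec z}|a_{\vec z}|^2\leq 1$ yields the needed bound by itself: the uniform component of your mixture only gives $2^{nk}$, worse by a factor $2^n$, and it is not shown that reweighting by the $\ell_1$-marginals $W_j$ controls the $\ell_2$-type quantity that actually enters the variance. A secondary soft spot is the variance computation itself: since the same $r$ samples per block are reused across all $r^k$ terms of $\widehat q$, you must bound the partial-collision covariances of this U-statistic-like estimator, and the assertion that they are ``lower order in $2^n/r$'' is stated rather than proved (note $r\approx 2^{n(1-1/k)}\epsilon^{-2/k}\ll 2^n$ in the relevant regime).

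For comparison, the paper avoids choosing any fixed product sampler. It builds random sparse states sequentially: $|\phi_0\rangle=|0^m\rangle$, and at stage $j$ it draws $L$ strings from the distribution $p_{M_j^{\dagger}\phi_{j-1}}$ induced by the \emph{current random state} $M_j^{\dagger}|\phi_{j-1}\rangle$, forming the importance-sampled state in Eq.~\eqref{eq:phipsi}. Unbiasedness (Eq.~\eqref{eq:ephipsi}) and the operator inequality Eq.~\eqref{eq:phiphi} make the second moment of $X_r=\langle\phi_r|M_{r+1}U_{f_{r+1}}\cdots M_{k+1}|0^m\rangle$ telescope (Lemma~\ref{lem:xtvar}), giving $\mathbb{E}|X_k|^2-|q|^2\leq 2^{n(k-1)}L^{-k}(1+L/2^n)^k$ with $kL$ queries and only $\|M_j\|\leq 1$ used. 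This adaptivity is exactly what substitutes for the distributions $p_j$ you would need to construct; to complete your route you would have to supply and prove the corresponding variance lemma for non-adaptive product sampling, which is currently missing.
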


\begin{proof}
In the following we shall write 
\[
\Pi(z)=|z\rangle\langle z|\otimes I \quad \qquad z\in \{0,1\}^n.
\]
where the identity acts on $m-n$ qubits. For any nonzero complex vector $\alpha$ of length $2^{m}$, define a probability distribution over $n$-bit strings
\begin{equation}
p_{\alpha}(z)=\|\alpha\|^{-2}\| \Pi(z)|\alpha\rangle \|^2 \quad \qquad z\in \{0,1\}^n.
\label{eq:p}
\end{equation}

Define $|\phi_0\rangle=|0^m\rangle$ and then for each $1\leq j\leq k$ construct a (possibly un-normalized) state $\phi_j$ as follows. First sample binary strings $z_1,z_2,\ldots, z_L$ independently from the distribution $p_{M^{\dagger}_{j}\phi_{j-1}}$ (see Eq.~\eqref{eq:p}). We will fix $L$ later.  Then define
\begin{equation}
|\phi_{j}\rangle=\frac{1}{L}\sum_{i=1}^{L} \frac{f_j(z_i)\Pi(z_i)M^{\dagger}_j|\phi_{j-1}\rangle}{p_{M^{\dagger}_{j}\phi_{j-1}}(z_i)}
\label{eq:phipsi}
\end{equation}
Note that
\begin{equation}
\mathbb{E}(|\phi_{j}\rangle)=U_{f_j}M^{\dagger}_{j}|\phi_{j-1}\rangle 
\label{eq:ephipsi}
\end{equation}
and we have the operator inequality
\begin{align}
 &\mathbb{E}(|\phi_{j}\rangle\langle \phi_j|)\\
 &=\left(\frac{L-1}{L}\right)U_{f_j}M^{\dagger}_{j}|\phi_{j-1}\rangle\langle \phi_{j-1}|M_{j} U_{f_j}+\frac{\|\phi_{j-1}\|^2}{L}\sum_{z\in \{0,1\}^n}\frac{\Pi(z)M_j^{\dagger}|\phi_{j-1}\rangle\langle \phi_{j-1}|M_j \Pi(z) }{\langle \phi_{j-1}|M_j\Pi(z)M_j^{\dagger}|\phi_{j-1}\rangle}\label{eq:opineq}\\
&\leq U_{f_j}M^{\dagger}_{j}|\phi_{j-1}\rangle\langle \phi_{j-1}|M_{j} U_{f_j} +\frac{\|\phi_{j-1}\|^2}{L}\cdot I.
\label{eq:phiphi}
\end{align}
Using Eq.~\eqref{eq:opineq} we see that 
\begin{equation}
\mathbb{E}(\|\phi_{j}\|^2)=\mathrm{Tr}\left(\mathbb{E}(|\phi_j\rangle\langle \phi_j|)\right)\leq  \left(1+\frac{2^{n}}{L}\right)\|\phi_{j-1}\|^{2}
\label{eq:normj}
\end{equation}
where to bound the first term we used the fact that $\|M_j\|\leq 1$.  For each $0\leq r\leq k$ define a complex-valued random variable 
\[
X_{r}= \langle \phi_{r}|M_{r+1}U_{f_{r+1}}M_{r+2}\ldots U_{f_{k}}M_{k+1}|0^m\rangle.
\]
Note that $X_0=q$ is the desired amplitude. Using Eq.~\eqref{eq:ephipsi} we see that 
\[
\mathbb{E}(X_k)=\mathbb{E}(X_{k-1})=\ldots =\mathbb{E}(X_0)=q.
\]
Note that $X_k$ can be computed using $L$ queries to each of the given oracles $f_1,f_2,\ldots, f_k$.
\begin{lemma}

\begin{equation}
\mathbb{E}(|X_k|^2)-|q|^2\leq 2^{n(k-1)}L^{-k}\left(1+\frac{L}{2^n}\right)^{k}.
\label{eq:xtvar}
\end{equation}
\label{lem:xtvar}
\end{lemma}
\begin{proof}
Let us write $\mathbb{E}_j$ for the expectation over all random variables used in generating the states $\phi_1, \phi_2,\ldots,  \phi_j$. Then, for all $1\leq r\leq k$ we have
\begin{align*}
\mathbb{E}(|X_r|^2)=\mathbb{E}_{r} (|X_r|^2)&=\mathbb{E}_{r-1}\mathbb{E}_{\phi_{r}}\left( |\langle \phi_{r}|M_{r+1}U_{f_{r+1}}M_{r+2}\ldots U_{f_{k}}M_{k+1}|0^m\rangle|^2\right)\\
&\leq \mathbb{E}_{r-1}\left(|X_{r-1}|^2\right)+\frac{1}{L}\mathbb{E}_{r-1}\left( \|\phi_{r-1}\|^2\right)\\
\end{align*}
where in the last line we used Eq.~\eqref{eq:phiphi}. The second term above can be computed using a repeated application of Eq.~\eqref{eq:normj}, which gives, for each $1\leq r\leq k$,
\[
\mathbb{E}(|X_r|^2)-\mathbb{E}\left(|X_{r-1}|^2\right)\leq \frac{1}{L}\left(1+\frac{2^{n}}{L}\right)^{r-1}
\]
 Since $X_0=q$ we get
\[
\mathbb{E}(|X_k|^2)-|q|^2=\sum_{r=1}^{k} \left(\mathbb{E}(|X_r|^2)- \mathbb{E}\left(|X_{r-1}|^2\right)\right)\leq 2^{-n}\left(\left(1+\frac{2^n}{L}\right)^{k}-1\right)
\]
Discarding the subtracted term in the parentheses, we arrive at Eq.~\eqref{eq:xtvar}.
\end{proof}

Now let us fix $B$ as follows
\begin{equation}
 B=2k\cdot \left\lceil2^{n(1-1/k)}(\epsilon^{2}/{400})^{-1/k}\right\rceil
 \label{eq:Bk}
\end{equation}
This will be our query budget. There are two cases to consider. If $B\geq k\cdot 2^n$ then we can simply query each oracle at all $2^n$ possible binary strings and output an exact computation of $q$. If instead $B\leq k\cdot 2^n$ then we compute the estimator $\tilde{q}=X_k$ with the choice $L=B/k$. In this case we have $L\leq 2^n$ and therefore Eq.~\eqref{eq:xtvar} gives
\[
\mathbb{E}(|X_k|^2)-|q|^2\leq 2^{n(k-1)}L^{-k}2^{k}\leq \epsilon^2/400
\]
where in the second inequality we lower bounded $L=B/k$ using Eq.~\eqref{eq:Bk}. Therefore
\[
\mathrm{Var}(\mathrm{Re}(\tilde{q}))+\mathrm{Var}(\mathrm{Im}(\tilde{q}))=\mathbb{E}(|X_k|^2)-|q|^2\leq \frac{\epsilon^2}{400},
\]
By Chebyshev's inequality we see that $|\mathrm{Re}(\tilde{q})-\mathrm{Re}(q)|\leq \epsilon/\sqrt{2}$ with probability at least $0.995$. The same arguments show that $|\mathrm{Im}(\tilde{q})-\mathrm{Im}(q)|\leq \epsilon/\sqrt{2}$ with probability at least $0.995$. By a union bound, we have $|q-\tilde{q}|\leq \epsilon$ with probability at least $0.99$.
\end{proof}

\section{Graph-based forrelation}
\label{sec:graph_based}

The algorithm for estimating a graph-based forrelation $\Phi$ described in the Introduction has a polynomial runtime whenever efficient subroutines are available
for computing  amplitudes of the states 
$|\alpha\ra$ and $|\beta\ra$ such that 
$\Phi=\la \beta|\alpha\ra$
and for sampling the probability distribution $|\la x|\alpha\ra|^2$.

In this section we will construct such subroutines for more general (non-bipartite) graphs that admit a partition into two subgraphs with a small treewidth.  We will actually give two separate algorithms for sampling the probability distribution $|\la x|\alpha\ra|^2$.  We give the first such algorithm in this section because it is conceptually simpler, and contained within the same framework as the algorithm for computing amplitudes.  However, it will suffer from a quadratically worse dependence on $n$.  The second algorithm, which solves Problem~\ref{prob:forrsample}, will be described in full in Appendix~\ref{sec:linear_time_sampling} and will lead to the runtime given in Theorem~\ref{thm:graph_based}.

Section~\ref{sec:treewidth} provides relevant background concerning tree decompositions.
Section~\ref{sec:two_local} describes an efficient subroutine for computing a
sum $\sum_x h(x)$, where $h(x)$ is a two-local function on a graph with a small treewidth.
Section~\ref{sec:graph_based_proof} combines these ingredients and completes the proof
of Theorem~\ref{thm:graph_based}.

\subsection{Low treewidth partitions of planar graphs}
\label{sec:treewidth}

\begin{dfn}[Tree decomposition]
Let $G=(V,E)$ be a graph. A tree decomposition of $G$ consists of a tree $T=(W,F)$, along with a \textit{bag} $B_i\subseteq V$ for each node $i\in W$, such that the following properties hold:
\begin{itemize}
\item For each $v\in V$, there exists a node $i\in W$ such that $v\in B_i$.
\item For each $\{u,v\}\in E$, there exists a node $i\in W$ such that $u\in B_i$ and $v\in B_i$.
\item For each $v\in V$, the subgraph $T_v$ of $T$ induced by the set of nodes $i\in W$ such that $v\in B_i$ is a tree.
\end{itemize}
\end{dfn}
We shall use the shorthand $T,\mathcal{B}$ to denote a tree decomposition, where $\mathcal{B}=\{B_i:i\in W\}$ is the set of bags. The \textit{width} $w$ of a tree decomposition is defined as 
\[
w= \max_{i\in W} |B_i| -1.
\]
The \textit{treewidth} of a graph $G$ is defined as the minimal width of any tree decomposition of it.

\begin{dfn}
A graph $G$ is said to be outerplanar if it admits a planar embedding in which all vertices are incident to the outer face.
\end{dfn}
Boedlaender has shown that outerplanar graphs have tree decompositions of width at most $2$ that can be computed in linear time.
\begin{theorem}[Bodlaender \cite{bodlaender88, bodlaender1996linear}]
If $G$ is an outerplanar graph then its treewidth is $\leq 2$ and a tree decomposition of width $\leq 2$ can be computed in linear time.
\end{theorem}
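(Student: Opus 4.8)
The statement has two parts: (i) $\mathrm{tw}(G)\le 2$, and (ii) a width-$\le 2$ tree decomposition of an outerplanar $G$ can be produced in time $O(n)$. For (i) the plan is to reduce to \emph{maximal} outerplanar graphs and exploit their ear structure. Treewidth is monotone under taking subgraphs, and for $n\ge 3$ every outerplanar graph on $n$ vertices is a spanning subgraph of some connected maximal outerplanar graph $G^+$ (keep adding chords along the outer face, first joining any components side by side on the outer face), so it suffices to bound $\mathrm{tw}(G^+)$; the cases $n\le 2$ are trivial. A maximal outerplanar graph on $n\ge 3$ vertices is a triangulation of a convex polygon, and its weak dual (triangles as nodes, adjacency across shared diagonals) is a tree with $n-2$ nodes; for $n\ge 4$ this tree has at least two leaves, and a leaf triangle $\{a,b,c\}$ shares exactly one edge with the rest, so its apex vertex has degree $2$ and its two neighbours are adjacent — an ``ear''. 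I would then build the decomposition by induction on $n$: for $n\le 3$ the single bag $V$ works; for $n\ge 4$ delete an ear $v$ with neighbours $u,w$, obtain by induction a width-$\le 2$ tree decomposition $(T',\mathcal B')$ of the (still maximal outerplanar) graph $G^+-v$, pick a bag $B_i\supseteq\{u,w\}$ (one exists since $\{u,w\}\in E(G^+-v)$), and attach a new leaf $i'$ to $i$ with bag $B_{i'}=\{u,v,w\}$. Verifying the three tree-decomposition axioms is routine: $v$ occurs only in $B_{i'}$; the new edges $\{u,v\},\{v,w\},\{u,w\}$ lie in $B_{i'}$ and all old edges are covered by $\mathcal B'$; and the subtree condition is preserved because $i'$ is adjacent to $i$, which already contains both $u$ and $w$. (As a sanity check, part (i) also follows instantly from minor characterisations: outerplanar graphs are $K_4$-minor-free, and $K_4$-minor-free graphs are exactly those of treewidth $\le 2$.)

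For (ii) I would convert the inductive argument into an elimination-ordering algorithm. First obtain an outerplanar embedding of $G$ in $O(n)$ time; this is the genuinely external ingredient, using a linear-time outerplanarity test, and equivalently one may simply invoke Bodlaender's general linear-time algorithm that computes a width-$\le k$ decomposition for any fixed $k$. Then repeatedly remove a vertex $v$ of current degree $\le 2$ (one always exists by the ear argument applied to a maximal outerplanar supergraph), pushing onto a stack the pair $(v,N^+(v))$ where $N^+(v)$ is its current neighbourhood, of size $\le 2$; whenever $\deg(v)=2$ with non-adjacent neighbours $u,w$, also insert the fill edge $\{u,w\}$, which preserves outerplanarity since it can be re-routed along the former path $u\!-\!v\!-\!w$ on the outer face. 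Maintaining the vertices in degree buckets makes each removal and each fill-edge insertion $O(1)$ amortised, and since an outerplanar graph has at most $2n-3$ edges the whole elimination costs $O(n)$. Finally, rebuild the decomposition by popping the stack in reverse order: assign each $v$ the bag $\{v\}\cup N^+(v)$ of size $\le 3$ and attach it as a child of any existing bag containing $N^+(v)$ (one exists because $N^+(v)$ was a clique of the partially-eliminated graph at the time $v$ was removed). The result is a tree with $n$ bags, each of size $\le 3$, hence of width $\le 2$; validity follows from the same axiom check as in (i), carried out along the elimination order.

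The main obstacle is (ii)'s linear-time outerplanarity recognition and embedding: once an embedding — or, equivalently, a certified width-$\le 2$ elimination order — is in hand, all the remaining tree-decomposition bookkeeping is elementary, but producing that embedding in linear time (or, via the alternative route, Bodlaender's linear-time fixed-treewidth algorithm) is a substantial result that I would cite rather than reprove. Two secondary points need a little care: arguing that the fill-in during elimination keeps the graph outerplanar (so that a vertex of degree $\le 2$ continues to exist), and confirming that the degree-bucket data structure together with the $O(n)$ bound on $|E|$ really yields overall linear running time.
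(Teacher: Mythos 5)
Your proposal is correct, but be aware that the paper does not actually prove this statement: it is quoted as Bodlaender's theorem, with \cite{bodlaender88} cited for the width bound and \cite{bodlaender1996linear} for the linear-time algorithm, and the only thing the paper supplies itself is the quadratic-time recursive procedure of Appendix~\ref{app:tree} (Algorithm~\ref{alg:outerplanar}) --- which is exactly your elimination scheme: repeatedly remove a vertex of degree at most $2$, add the fill edge between its two neighbours in the degree-$2$ case, recurse on the remaining (still outerplanar) graph, and attach the bag $\{u,v,w\}$ (or $\{u,v\}$) to a node containing the neighbours. Relative to that, your write-up adds a self-contained proof of $\mathrm{tw}(G)\le 2$ via completion to a maximal outerplanar graph, its dual tree and an ear, which is standard and correct (the $K_4$-minor-free observation alone also suffices), plus a sketch of how to make the elimination run in linear time. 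For the latter you still lean, exactly as the paper does, on a cited linear-time ingredient (a linear-time outerplanar embedding, or Bodlaender's fixed-treewidth algorithm, which once the width bound is known already yields the whole statement by itself), so the division of labour is the same; the remaining bookkeeping is right in outline, but two details would need a sentence each in a full write-up: attaching the popped bag $\{v\}\cup N^{+}(v)$ to ``any'' existing bag containing $N^{+}(v)$ is not an $O(1)$ operation as stated --- the standard fix is to attach it to the bag of the neighbour in $N^{+}(v)$ that is eliminated earliest, whose bag provably contains the other neighbour because the (possibly fill) edge between them survives until that elimination --- and the adjacency test before inserting a fill edge needs either the embedding or the remark that parallel edges are harmless for treewidth. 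Your justification that the fill edge preserves outerplanarity is cleanest via minor-closedness (the reduced graph is a contraction minor of $G$), as you partly note. None of this is a genuine gap; you simply spell out more of the elementary part than the paper chose to.
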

We refer the reader to Ref. \cite{bodlaender88} which provides a straightforward inductive proof that the treewidth is at most $2$, and Ref .\cite{bodlaender1996linear} which gives a linear time algorithm to compute such a tree decomposition. In this paper it will suffice for us to use the slower quadratic time Algorithm \ref{alg:outerplanar} (implicit in Ref. \cite{bodlaender88}).

The following theorem states that the vertices of any planar graph can be partitioned into two subsets, each of which induce a graph of low treewidth. For completeness, we include the simple proof below, following Ref. \cite{chartrand1971graphs}. An example is depicted in Fig.~\ref{fig:partitionexample}.
\begin{theorem}[Chartrand, Geller, Hedetniemi 1971 \cite{chartrand1971graphs}]
If $G=(V,E)$ is a planar graph then there exists a partition $V=A\cup B$ such that the induced subgraphs $G_A$ and $G_B$ are outerplanar graphs. Moreover, the partition can be computed in linear time from a planar embedding of $G$.
\label{thm:cgh}
\end{theorem}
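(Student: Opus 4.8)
The plan is to use the classical ``onion-peeling'' decomposition of the plane graph and then two-color the resulting concentric layers alternately. Fix the given plane embedding of $G$. Let $L_1\subseteq V$ be the set of vertices incident to the outer face. Having defined $L_1,\ldots,L_{i-1}$, let $G^{(i-1)}$ be the subgraph of $G$ induced by $V\setminus(L_1\cup\cdots\cup L_{i-1})$, equipped with the plane embedding inherited from $G$, and let $L_i$ be the set of vertices of $G^{(i-1)}$ that are incident to the outer face of this embedding. This partitions $V$ into layers $L_1,L_2,\ldots,L_r$, and I would take $A=\bigcup_{i\text{ odd}}L_i$ and $B=\bigcup_{i\text{ even}}L_i$.

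Two lemmas carry the proof. First, no edge of $G$ joins $L_i$ to $L_j$ when $|i-j|\ge 2$. To see this, suppose $u\in L_i$ has a neighbor $v$ with $v\notin L_1\cup\cdots\cup L_i$; then $v$ is still a vertex of $G^{(i-1)}$ and $uv$ is an edge of $G^{(i-1)}$, in which $u$ is incident to the outer face. Deleting the vertices of $L_i$ one at a time, the vertex to be deleted is always incident to the current outer face (deleting vertices only enlarges faces), so deleting it merges every face incident to it with the outer face; in particular the face of $G^{(i-1)}$ flanking the edge $uv$ merges with the outer face, and since $v$ lies on its boundary and is never deleted, $v$ becomes incident to the outer face of $G^{(i)}=G^{(i-1)}-L_i$. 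Hence $v\in L_{i+1}$, i.e.\ $|i-j|\le 1$. Second, each induced subgraph $G_{L_i}$ is outerplanar: in the inherited embedding of $G^{(i-1)}$ every vertex of $L_i$ is incident to the outer face, and deleting the remaining vertices (those in $L_{i+1},L_{i+2},\ldots$) only enlarges the outer face, so in the resulting plane embedding of $G_{L_i}$ all vertices remain incident to the outer face.

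Combining the lemmas finishes the argument. By the first lemma $G_A$ contains no edge between two distinct odd layers, so $G_A$ is the disjoint union of the graphs $G_{L_i}$ over odd $i$, each outerplanar by the second lemma; a disjoint union of outerplanar graphs is outerplanar (place the pieces side by side in the plane so that every vertex still touches the unbounded face), hence $G_A$ is outerplanar, and the same argument applies to $G_B$. For the running time, recall that $|E|\le 3|V|-6$ for a planar graph; the peeling can be implemented by repeatedly traversing the boundary walk of the current outer face in a doubly-connected-edge-list representation of the embedding, deleting those boundary vertices together with their incident edges and splicing the neighboring faces together. Each edge is touched a constant number of times over the whole process, so the total time is $O(|V|+|E|)=O(n)$.

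I expect the main obstacle to be making the first lemma fully rigorous---precisely the claim that deleting the outermost layer causes every face that is incident to a deleted vertex (in particular the faces flanking each of its edges) to become part of the new outer face, so that no edge of $G$ can ``skip'' a layer. The linear-time bookkeeping for the peeling also requires the standard planar-embedding data structures to be spelled out, but is routine. Everything else---outerplanarity of each individual layer, and closure of the outerplanar property under disjoint union---is immediate.
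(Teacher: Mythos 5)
Your proof is correct and takes essentially the same route as the paper's: iteratively peel off the vertices on the outer face of the current plane graph, and assign the odd layers to $A$ and the even layers to $B$, with each layer outerplanar because its vertices all lie on the outer face of the graph from which it was peeled. In fact you supply more detail than the paper does for the two points it leaves implicit, namely that no edge can skip a layer (so $G_A$ and $G_B$ are disjoint unions of the layer subgraphs) and how to implement the peeling in linear time.
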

\begin{proof}
We begin by defining a sequence of planar graphs $G_k=(V_k, E_k)$ for $0\leq k\leq p$, where $p\leq |V|$ depends on $G$.  Let $V_0=V, E_0=E$, and $G_0=G$. For each $k$ let us write $V^{\mathrm{out}}_k \subseteq V_k$ for the vertices incident to the outer face of $G_k$. If $V_{k}\neq V_{k}^{\mathrm{out}}$ we define
\[
V_{k+1}=V_{k}\setminus V^{\mathrm{out}}_{k}
\]
and let $G_{k+1}$ be the subgraph of $G_k$ induced by $V_{k+1}$. The process terminates at some graph $G_p$ for which $V_{p}=V_{p}^{\mathrm{out}}$ (an outerplanar graph).  Note that $G_{k+1}$ has fewer vertices than $G_k$, so the process stops after $p\leq |V|$ steps.

The vertex set of $G$ can be partitioned as $V=A\cup B$ where
\[
A=V_0^{\mathrm{out}}\cup V_2^{\mathrm{out}}\cup V_4^{\mathrm{out}}\ldots \qquad \qquad B=V_1^{\mathrm{out}}\cup V_3^{\mathrm{out}}\cup V_5^{\mathrm{out}}\ldots.
\]
Finally, we show that the induced subgraphs $G_A$ and $G_B$ are outerplanar. For each $j=0,1,\ldots,p$ let $H_j$ be the induced subgraph of $G$ on vertices $V_j^{\mathrm{out}}$. By construction, $G_A$ consists of disconnected copies of $H_0,H_2,H_4,\ldots$ and $G_B$ consists of disconnected copies of $H_1,H_3,H_5,\ldots$. To complete the proof we show that each graph $H_j$ is outerplanar. Observe that $H_j$ is the induced subgraph of the graph $G_j$ on the vertices incident to its outer face.  Therefore all vertices of $H_j$ are on its outer face and we are done.
\end{proof}
\def\factor{1.155}
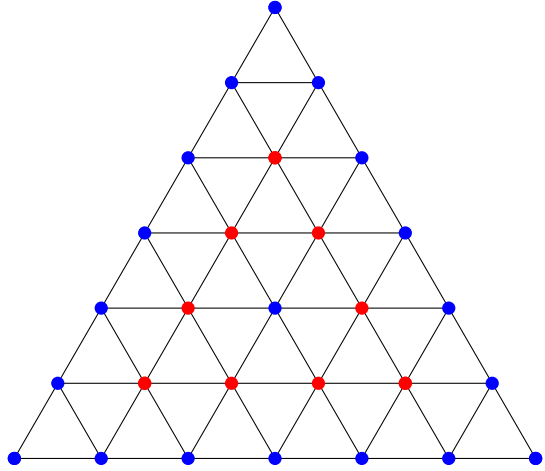
\begin{figure}
\centering
\begin{tikzpicture}[auto]
\pgftransformyscale{-1}
\pgftransformxscale{1.155}
\pgftransformxslant{-0.5}
\tikzstyle{vertexred}=[circle,fill=red,minimum size=5pt, inner sep=0pt]
\tikzstyle{vertexblue}=[circle,fill=blue,minimum size=5pt, inner sep=0pt]
\foreach \j in {1,...,7}
{
	\foreach \i in {1,...,\j}
	{
	\ifnum \i>1
		\draw (\i-1,\j)--(\i,\j);
	\fi
	\ifnum \j>1
		\ifnum \i<\j
		\draw (\i,\j-1)--(\i,\j);
		\fi
	 	\ifnum \i >1 
		\draw (\i,\j)--(\i-1,\j-1);
		\fi
	\fi
	}
}
\foreach \j in {1,...,7}
{
		\node[vertexblue] at (1,\j){};
		\node[vertexblue] at (\j,\j){};
		\node[vertexblue] at (\j,7){};
}

		\node[vertexblue] at (3,5){};
\foreach \j in {3,...,6}
{
		\node[vertexred] at (2,\j){};
		\node[vertexred] at (\j-1,\j){};
		\node[vertexred] at (\j-1,6){};
}
\end{tikzpicture}
\caption{Illustration of Theorem \ref{thm:cgh}. A planar graph and a partition of its vertices into subsets $A$ and $B$ shown in red and blue. The induced subgraphs $G_A$ and $G_B$ are outerplanar graphs.\label{fig:partitionexample}}
\end{figure}

\subsection{Summation of two-local functions}
\label{sec:two_local}

Suppose $G=(V,E)$ is a graph with $m=|V|$ vertices and $\Gamma$ is a finite set. We shall say that $h:\Gamma^m\rightarrow \mathbb C$ is a two-local function on $G$ if it can be written as
\begin{equation}
h(x)=\prod_{\{u,v\}\in E} h_{uv} (x_u,x_v)  \prod_{u\in V} h_{u}(x_u)
\label{eq:twoloc}
\end{equation}
for some functions $h_{uv}: \Gamma^2\rightarrow \mathbb C$ and $h_u:\Gamma \rightarrow \mathbb C$. Here $x=x_1x_2\ldots x_m$ with $x_j\in \Gamma$ for $1\leq j\leq m$. Note that if $G$ has no isolated vertices then the vertex terms $\{h_u\}$ can be folded into the edge terms $\{h_{uv}\}$ and need not explicitly appear in Eq.~\eqref{eq:twoloc}.

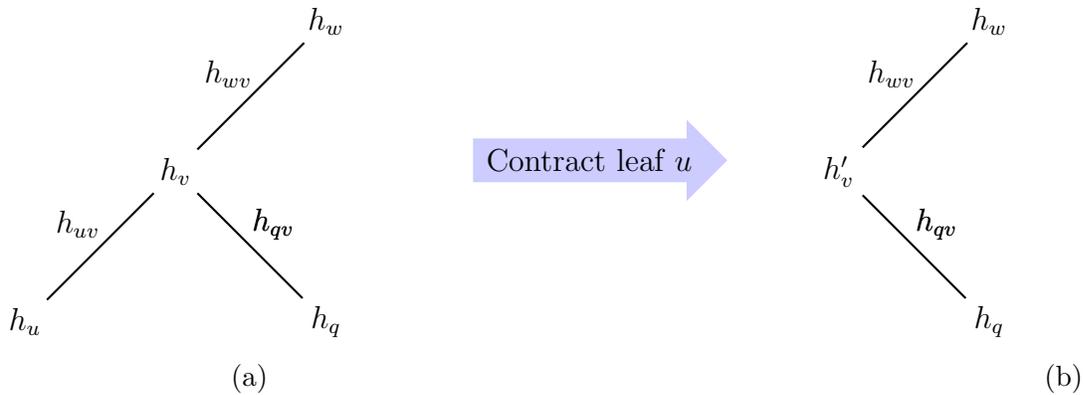
\begin{figure}

\begin{subfigure}[t]{0.4\textwidth}
\begin{tikzpicture}
\node[ inner sep=3pt, minimum size=6pt](A) at (0,0){$h_u$};
\node[ inner sep=3pt, minimum size=6pt](B) at (2,2){$h_v$};
\node[ inner sep=3pt, minimum size=6pt](C) at (4,0){$h_{q}$};
\node[ inner sep=3pt, minimum size=6pt](E) at (4,4){$h_{w}$};

\node[ inner sep=3pt, minimum size=6pt] at (0.7,1.3){$h_{uv}$};
\node[ inner sep=3pt, minimum size=6pt] at (3.3,1.3){$h_{qv}$};
\node[ inner sep=3pt, minimum size=6pt] at (3.3,1.3){$h_{qv}$};

\node[ inner sep=3pt, minimum size=6pt] at (2.7,3.3){$h_{wv}$};
\draw[thick] (A)--(B);
\draw[thick] (B)--(C);
\draw[thick] (B)--(E);

\node[arrow1] at (7.5,2){Contract leaf $u$};
\end{tikzpicture}
\caption{}
\end{subfigure}
\qquad \qquad \qquad \qquad \qquad
\begin{subfigure}[t]{0.4\textwidth}
\begin{tikzpicture}
\node[ inner sep=3pt, minimum size=6pt](B) at (2,2){$h'_v$};
\node[ inner sep=3pt, minimum size=6pt](C) at (4,0){$h_{q}$};
\node[ inner sep=3pt, minimum size=6pt](E) at (4,4){$h_{w}$};

\node[ inner sep=3pt, minimum size=6pt] at (3.3,1.3){$h_{qv}$};
\node[ inner sep=3pt, minimum size=6pt] at (3.3,1.3){$h_{qv}$};

\node[ inner sep=3pt, minimum size=6pt] at (2.7,3.3){$h_{wv}$};

\draw[thick] (B)--(C);
\draw[thick] (B)--(E);
\end{tikzpicture}
\caption{}
\end{subfigure}
\caption{(a) Depiction of a two-local function $h$ on a tree with $4$ vertices. (b) The two-local function resulting from contraction of a leaf vertex $u$. Here $h'_v(x_v)=\sum_{x_u} h_{uv}(x_u,x_v)h_v(x_v)$ can be computed in time $O(|\Gamma|^2)$.\label{fig:contract}}
\end{figure}
\begin{lemma}
Suppose we are given a two-local function $h \colon \Gamma^{m}\rightarrow \mathbb C$ on an $m$-vertex tree $T$.  There is a classical algorithm which computes $\sum_{x\in \Gamma^m} h(x)$ using a runtime $O(m|\Gamma|^2)$.
\label{lem:treecontract}
\end{lemma}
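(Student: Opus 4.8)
The plan is to compute $\sum_{x \in \Gamma^m} h(x)$ by a standard ``leaf elimination'' dynamic program on the tree $T$, processing one vertex at a time and folding its contribution into its unique remaining neighbor. Concretely, I would pick an arbitrary root and repeatedly select a leaf $u$ of the current tree with (unique) neighbor $v$. The sum over $x_u \in \Gamma$ of the factors involving $u$ — namely the edge term $h_{uv}(x_u, x_v)$ and the vertex term $h_u(x_u)$ — depends only on $x_v$, so we may define a new vertex function $h'_v(x_v) = h_v(x_v) \sum_{x_u \in \Gamma} h_{uv}(x_u, x_v) h_u(x_u)$, delete $u$ from $T$, and replace $h_v$ by $h'_v$. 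The key invariant is that after this contraction the product of all remaining edge and vertex terms, summed over the remaining variables, equals the original sum $\sum_x h(x)$; this is immediate from distributivity since $u$ appears in exactly one edge term (as $T$ is a tree and $u$ is a leaf) and one vertex term. Figure~\ref{fig:contract} already illustrates exactly this step.

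The steps in order: (1) root $T$ and fix a processing order (e.g.\ reverse breadth-first order, or repeatedly extract any current leaf that is not the root) so that each vertex is contracted into its parent; (2) maintain the current vertex functions $\{h_u\}$ as a table of $|\Gamma|$ complex numbers each, and the edge functions $\{h_{uv}\}$ as tables of $|\Gamma|^2$ numbers; (3) for each leaf $u$ with parent $v$, compute $h'_v(x_v)$ for all $x_v \in \Gamma$, which costs $O(|\Gamma|^2)$ arithmetic operations, and update; (4) after $m-1$ contractions only the root $r$ remains, with a single vertex function $h_r$ incorporating all of $h$, and we output $\sum_{x_r \in \Gamma} h_r(x_r)$, an $O(|\Gamma|)$ computation. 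Correctness follows by induction on the number of contractions using the invariant above; the base case is the original factorization~\eqref{eq:twoloc} and the final case is a single-vertex tree.

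For the runtime: each of the $m-1$ contraction steps touches one edge and one $O(|\Gamma|)$ vertex table and does $O(|\Gamma|^2)$ work, so the total is $O(m|\Gamma|^2)$; bookkeeping to find the next leaf can be done in $O(m)$ total with a queue of current leaves (decrementing degree counters), which is dominated by the arithmetic. I do not anticipate a genuine obstacle here — the only point requiring a line of care is the invariant in step~(3), i.e.\ verifying that a leaf really participates in just one edge term so that the partial sum over $x_u$ factors cleanly out of the global sum; everything else is routine accounting. (If $T$ has isolated vertices, each simply contributes a multiplicative factor $\sum_{x_u} h_u(x_u)$, handled in $O(|\Gamma|)$ each, within the same budget.)
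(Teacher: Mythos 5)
Your proposal is correct and follows essentially the same leaf-elimination argument as the paper: repeatedly contract a leaf $u$ into its neighbor $v$ by summing out $x_u$, at cost $O(|\Gamma|^2)$ per step, for $m$ steps total. (Your update rule $h'_v(x_v)=h_v(x_v)\sum_{x_u}h_{uv}(x_u,x_v)h_u(x_u)$ is the intended one, correctly folding in the leaf's vertex term.)
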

\begin{proof}
Write $T=(W,F)$. Suppose $u\in W$ is a leaf of the tree and is incident to $v\in W$. Let $T'=(W',F')$ where $W'=W\setminus \{u\}$ and $F'=F\setminus \{u,v\}$ be the tree obtained by removing $u$ and its incident edge from $T$. We may compute a new two-local function $h'$ on $T'$ such that $\sum_{y\in \Gamma^{m-1}} h'(y)=\sum_{x\in \Gamma^{m}} h(x)$. This is depicted in Fig. \ref{fig:contract}. Indeed, define $h'_{r,s}=h_{r,s}$ for all edges $\{r,s\}\in F'$,  $h'_{r}=h_r$ for all vertices $r\in W'\setminus \{v\}$, and 
\begin{equation}
h'_v(x_v)=\sum_{x_u\in \Gamma} h_{uv}(x_u,x_v)h_v(x_v).
\label{eq:hprime}
\end{equation}
We can compute the new function $h'$ from $h$ by evaluating the sum Eq.~\eqref{eq:hprime} for each $x_v\in \Gamma$. Each such evaluation requires  $O(|\Gamma|)$ additions and multiplications, so the total runtime is $O(|\Gamma|^2)$.  To evaluate the sum $\sum_{x}h(x)$ we repeat this process $m$ times until we have removed all vertices of $T$. The total runtime is $O(m|\Gamma|^2)$.
\end{proof}

\begin{lemma}
Let $G=(V,E)$ be an $m$-vertex graph. Suppose we are given a two-local function $h \colon \{0,1\}^m\rightarrow \mathbb{C}$ on $G$, along with a tree decomposition of $G$ of width $w$ and containing $N$ nodes. There is a classical algorithm which computes 
\begin{equation}
\sum_{x\in \{0,1\}^m} h(x)
\end{equation}
using a runtime $O(N w 2^w)$. 
\label{lem:sum2local}
\end{lemma}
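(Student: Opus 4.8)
The plan is to reduce the sum to the tree case already handled in Lemma~\ref{lem:treecontract}, by working directly with the tree $T=(W,F)$ of the given tree decomposition and placing at each node $i$ a variable $y_i$ that ranges over all assignments $\{0,1\}^{B_i}$ of the bag $B_i$; this is an alphabet of size at most $2^{w+1}$.

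For the set-up, assign each vertex $v\in V$ to a node $\nu(v)$ with $v\in B_{\nu(v)}$ and each edge $e=\{u,v\}\in E$ to a node $\eta(e)$ with $\{u,v\}\subseteq B_{\eta(e)}$ (a valid $\eta(e)$ exists by the second tree-decomposition axiom), and define node weights
\[
g_i(y_i)=\prod_{v:\,\nu(v)=i} h_v(y_i(v))\ \prod_{e=\{u,v\}:\,\eta(e)=i} h_{uv}(y_i(u),y_i(v))
\]
together with, for each tree edge $\{i,j\}\in F$, a consistency weight $c_{ij}(y_i,y_j)\in\{0,1\}$ equal to $1$ iff $y_i$ and $y_j$ agree on $B_i\cap B_j$. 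Using the third axiom (for any fixed $v$ the bags containing $v$ induce a connected subtree of $T$) one checks that the tuples $(y_i)_{i\in W}$ with $\prod_{\{i,j\}\in F}c_{ij}=1$ are exactly the restrictions $y_i=x|_{B_i}$ of global assignments $x\in\{0,1\}^V$, and that $\prod_{i\in W}g_i(y_i)=h(x)$ for such a tuple, since $\nu$ and $\eta$ are functions and the first two axioms guarantee that every vertex term and every edge term of $h$ is counted exactly once. Hence
\[
\sum_{x\in\{0,1\}^V}h(x)=\sum_{(y_i)_{i\in W}}\ \prod_{i\in W}g_i(y_i)\ \prod_{\{i,j\}\in F}c_{ij}(y_i,y_j),
\]
which is the sum of a two-local function on the \emph{tree} $T$.

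Running the leaf-by-leaf contraction of Lemma~\ref{lem:treecontract} on $T$ already solves the problem, but only in time $O(N4^w)$; to reach $O(Nw2^w)$ I would redo the contraction exploiting that the tree-edge weights $c_{ij}$ are consistency constraints rather than arbitrary functions. Contracting a leaf $i$ with neighbour $j$ replaces $g_j$ by $g_j(y_j)\cdot G(y_j|_{B_i\cap B_j})$, where the table $G(z)=\sum_{y_i:\,y_i|_{B_i\cap B_j}=z}g_i(y_i)$ is built in $O(w2^w)$ time by one pass over $\{0,1\}^{B_i}$ and the update of $g_j$ costs a further $O(w2^w)$; so each of the $N$ contractions costs $O(w2^w)$. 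Tabulating each node weight $g_i$ in the first place likewise costs $O(w2^w)$ (a bag contributes at most $|B_i|+\binom{|B_i|}{2}$ univariate or bivariate factors, and their pointwise product over $\{0,1\}^{B_i}$ can be assembled by a balanced split of the bag, avoiding the naive $O(w^2 2^w)$), and the elementary preprocessing to build $\nu$ and $\eta$ takes $O(Nw^2)$ time, which is dominated by $O(Nw2^w)$. Summing over the $N$ nodes yields the claimed bound.

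The one genuinely content-bearing step is the correctness of this reduction: the bijection between globally consistent tuples of bag-assignments and global assignments $x$, and the ``each term of $h$ exactly once'' accounting --- which is precisely where all three tree-decomposition axioms get used, and is what I would write out carefully. Everything after that mirrors Lemma~\ref{lem:treecontract}. The only point needing slight extra care to match the stated bound (rather than a weaker $O(N\,\mathrm{poly}(w)\,2^w)$) is tabulating each $g_i$ in time $O(w2^w)$ instead of $O(w^2 2^w)$.
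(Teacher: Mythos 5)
Your proposal is correct and follows essentially the same route as the paper: place a variable for each bag's assignment on the decomposition tree, couple neighbouring bags by consistency indicators, check that consistent tuples correspond bijectively to global assignments with each term of $h$ counted once, and then exploit the special (indicator) form of the tree-edge weights to speed up the leaf contraction of Lemma~\ref{lem:treecontract} from $O(N4^w)$ to within the stated $O(Nw2^w)$. The only cosmetic difference is the tabulation of the node weights: the paper iterates over bag assignments in Gray code order (an $O(w)$ update per flipped bit), while you propose a balanced split of the bag; either device achieves the needed $O(w2^w)$ per node.
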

\begin{proof}
Let us write $T=(W,F)$ and $\mathcal{B}=\{B_i: i\in W\}$ for the given tree decomposition. With this notation we have $N=|W|$ and $w+1=\max_{i} |B_i|$. To simplify notation in the following we redefine $w\leftarrow w+1$ to be the width of the tree decomposition plus one.


For each bag $B_i$ with $i\in W$ we define a variable $y_i\in \{0,1\}^{w}$. Here $y_i$ contains a bit labeled by each vertex of $G$ in bag $i$ (i.e., each element of $B_i$), and also contains $w-|B_i|$ ancilla bits. These ancilla bits are only present to ensure, for convenience, that all variables $y_i$ are of the same length. We say that $y_i$ and $y_j$ are consistent if, (a) for each $v\in B_i\cap B_j$ the corresponding bit of $y_i$ agrees with that of $y_j$, and (b) all ancilla bits of $y_i$ are $0$ and all ancilla bits of $y_j$ are $0$.

Let $\Gamma=\{0,1\}^{w}$. We now show how to compute a function $Q \colon \Gamma^{N} \rightarrow \mathbb C$ which is two-local on $T$ and such that
\begin{equation}
\sum_{y\in \Gamma^{N} } Q(y)=\sum_{x\in \{0,1\}^m} h(x).
\label{eq:hw}
\end{equation}
By definition, such a two-local function is of the form 
\[
Q(y)=\prod_{\{i,j\}\in F} Q_{ij}(y_i, y_j)\prod_{i\in W} Q_i(y_i).
\]
We choose 
\[
Q_{ij}(y_i,y_j)=\begin{cases}1 &  y_i \text{ and } y_j \text{ are consistent }\\ 0 & \text{otherwise}.\end{cases}
\]
With this choice, the binary strings $y$ such that 
\[
\prod_{\{i,j\}\in F} Q_{ij}(y_i, y_j)=1
\]
are in one-to-one correspondence with assignments $x\in \{0,1\}^m$ of bits to each vertex of $G$. Let us write $x\sim y$ for this correspondence. We choose the functions at each node so that 
\begin{equation}
\prod_{i\in W} Q_i(y_i)=h(x)=\prod_{\{u,v\}\in E} h_{uv}(x_u,x_v)\prod_{u\in V} h_u(x_u) \qquad   \text{ for } \qquad x\sim y.
\label{eq:choicew}
\end{equation}
which then ensures Eq.~\eqref{eq:hw}. To satisfy Eq.~\eqref{eq:choicew}, note that (by definition of a tree decomposition) for each edge $\{u,v\}\in E$, there is a node $i$ such that $u\in B_i$ and $v\in B_i$.  The weight $h_{uv}(x_u,x_v)$ can then be incoporated into the function $Q_i$. Likewise, the weight $h_u(x_u)$ can be incorporated into any function $Q_i$ such that $u\in B_i$.  

Let's now upper bound the time needed to compute the truth table of each $Q_i$.  First, we can compute $Q_i(0\ldots0)$ in $O(w^2)$ time since there are at most $O(w^2)$ terms of $h$ on $w$-many vertices.  Notice that if we compute $Q_i(0\ldots01)$, there are $O(w)$ terms of $h$ whose contribution to $Q_i$ changes.  That is, given $Q_i(0\cdots0)$, computing $Q_i(0\ldots01)$ can be done in $O(w)$ time.  Of course, this argument generalizes for any two inputs which differ in a single input bit.  Therefore, we sequentially compute $Q_i(y_i)$ for $y_i \in \{0,1\}^w$ in Gray code order (i.e., each $y_i$ differs in at most 1 bit from the previously computed bit), leading to an $O(N w 2^w)$ time algorithm to compute $Q_i(y_i)$ for all $i \in W$ and $y_i \in \{0,1\}^w$. 

Since $Q$ is two-local on the tree $T$, we may then apply Lemma~\ref{lem:treecontract} which gives a classical algorithm to evaluate the sum Eq.~\eqref{eq:hw} in time $O(N4^w)$.  However, we can speed up this algorithm using the fact that $Q_{ij}(y_i,y_j)$ is particularly simple.  Using the same algorithm from Lemma~\ref{lem:treecontract}, we must compute
$$
Q'_v(y_v) = Q(y_v) \sum_{y_u \in \{0,1\}^w} Q_{uv}(y_u, y_v) Q_u(y_u)
$$
when we are contracting a child $u$ to its parent $v$ in the tree.  Suppose that node $v$ and node $u$ share $k$ many vertices, and for simplicity, let's assume that these are the first $k$ bits of both $y_u$ and $y_v$.  These $k$ bits partition $\sum_{y_u \in \{0,1\}^w} Q_{uv}(y_u, y_v) Q_u(y_u)$ into $2^k$ non-overlapping sums.   For example, if the first $k$ bits of $y_v$ are $0^k$, then 
$$
\sum_{y_u \in \{0,1\}^w} Q_{uv}(y_u, y_v) Q_u(y_u) = \sum_{x \in \{0,1\}^{w-k}} Q_u(0^k x) 
$$
where we have ignored the ancilla bits since we can always just set those bits to be 0.  If the first $k$ bits of $y_v$ are $z \neq 0^k$, then we get the sum $\sum_{x \in \{0,1\}^{w-k}} Q_u(z x)$ instead.  Therefore, computing $\sum_{y_u \in \{0,1\}^w} Q_{uv}(y_u, y_v) Q_u(y_u)$ for all the possible $k$-bit prefixes of $y_v$ can be done in time $O(2^w)$.  Finally, to compute $Q'(y_v)$ we must also multiply by the factor $Q(y_v)$.  Once again, we iterate over all $y_v$ in time $O(2^w)$.  Repeating this process for each node, we get a total runtime of $O(N 2^w)$, which completes the proof.
\end{proof}

\subsection{Classical algorithm for graph-based forrelation}
\label{sec:graph_based_proof}

In this section we prove Theorem~\ref{thm:graph_based}.
We first consider the special case in which all operators $O_j$ are unitary; at the end of this section we show how to handle the more general case. As discussed in Section~\ref{sec:graph_based}, 
it suffices to give efficient algorithms for computing amplitudes of a state
\[
|\alpha\ra= (O_A\otimes I_B)U_fH^{\otimes n}|0^n\ra.
\]
and for sampling the probability distribution $|\la x|\alpha\ra|^2$.
This is accomplished in Lemmas~\ref{lemma:amplitudes},\ref{lemma:sampling} below.
Let $G_A=(A,E_A)$  be the subgraph of $G$ induced by $A$.
By assumption, we are given a tree decomposition $T,\mathcal{B}$ of $G_A$ of width $w$ such that $T$ has  $O(n)$ nodes. 
\begin{lemma}
\label{lemma:amplitudes}
There is a classical algorithm which, given $x\in \{0,1\}^n$, computes an amplitude $\langle x|\alpha\rangle$ using a runtime $O(n w 2^w)$.
\end{lemma}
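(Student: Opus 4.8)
The plan is to carry out precisely the reduction sketched for bipartite graphs in the Introduction, but now routing the final exponential sum through Lemma~\ref{lem:sum2local} so that it works for any $G_A$ of small treewidth.

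First I would expand the amplitude. Write $x = x_A x_B$, where $x_A\in\{0,1\}^{|A|}$ and $x_B\in\{0,1\}^{|B|}$ are the restrictions of $x$ to $A$ and $B$. Since $H^{\otimes n}|0^n\rangle = 2^{-n/2}\sum_{z\in\{0,1\}^n}|z\rangle$, since $U_f$ is diagonal, and since $(O_A\otimes I_B)$ acts trivially on the $B$-register, only summation indices whose $B$-part equals $x_B$ survive, giving
\[
\langle x|\alpha\rangle = 2^{-n/2}\sum_{z_A\in\{0,1\}^{|A|}} f(z_A x_B)\,\langle x_A|O_A|z_A\rangle .
\]

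The key step is the structural claim that, with $x_A$ and $x_B$ held fixed, the summand $h(z_A):= f(z_A x_B)\langle x_A|O_A|z_A\rangle$ is a two-local function on the induced subgraph $G_A=(A,E_A)$ in the sense of Eq.~\eqref{eq:twoloc}. This follows by inspecting the factorization of $f$: an edge of $G$ with both endpoints in $B$ contributes a constant; an edge $\{u,v\}$ with $u\in A,\,v\in B$ contributes $f_{uv}((z_A)_u,(x_B)_v)$, a function of the single variable $(z_A)_u$; an edge with both endpoints in $A$ contributes an edge term of $G_A$; and each single-variable term $f_w$ with $w\in A$ is a vertex term of $G_A$ (with $w\in B$ a constant). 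Moreover $\langle x_A|O_A|z_A\rangle = \prod_{j\in A}\langle (x_A)_j|O_j|(z_A)_j\rangle$ is a product of single-variable terms in $z_A$. Folding all constants into a scalar $C$ and all single-variable contributions into vertex terms, we obtain $h(z_A)=C\prod_{\{u,v\}\in E_A}h_{uv}((z_A)_u,(z_A)_v)\prod_{u\in A}h_u((z_A)_u)$, which is two-local on $G_A$; its tables $h_{uv},h_u$ and the scalar $C$ can be assembled from the input data $f_{uv},f_w,O_j$ in $O(|E_A|+|A|)$ time, and $|E_A|=O(nw^2)$ since a graph with an $N$-node width-$w$ tree decomposition has $O(Nw^2)$ edges.

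Finally I would apply Lemma~\ref{lem:sum2local} to $h$ on $G_A$ using the given tree decomposition of width $w$ with $N=O(n)$ nodes: this computes $\sum_{z_A}h(z_A)$ in time $O(Nw2^w)=O(nw2^w)$, and multiplying by $2^{-n/2}C$ yields $\langle x|\alpha\rangle$ within the claimed runtime. I do not anticipate a real obstacle; the one point demanding care is the structural claim — in particular verifying that fixing the $B$-variables collapses every cross-edge term of $f$ into a genuine single-variable term, so that what remains is two-local on $G_A$ and not on all of $G$. The amplitude $\langle x|\beta\rangle$ is handled identically, with $g,G_B,O_B^\dagger$ in place of $f,G_A,O_A$.
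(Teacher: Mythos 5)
Your proposal is correct and follows essentially the same route as the paper: expand $\langle x|\alpha\rangle$ as a sum over the $A$-variables, observe that fixing $x_B$ turns the cross-edge factors of $f$ and the matrix elements of $O_A$ into single-variable terms so the summand is two-local on $G_A$, and then invoke Lemma~\ref{lem:sum2local} on the given width-$w$ tree decomposition with $O(n)$ nodes to get the $O(nw2^w)$ bound. The extra bookkeeping you add (assembling the tables in $O(|E_A|+|A|)$ time) is harmless and within the stated budget.
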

\begin{proof}
Below we use notation introduced in Section~\ref{sec:treewidth}.
Let us begin by writing
\begin{equation}
\langle x|\alpha\rangle=2^{-n/2} \sum_{y_A} 
\la x_A|O_A|y_A\ra
f(y_Ax_B).
\label{eq:px}
\end{equation}
Partition the edges of $G$ as $E=E_A\cup E_{\mathrm{cut}}\cup  E_B $ where $E_A$ are the edges with both endpoints in $A$, $E_B$ are the edges with no endpoints in $A$, and $E_{\mathrm{cut}}$ are the edges with exactly one endpoint in $A$. Since $f$ is a two-local function on $G$ we may write
\[
f(y_A x_B)=\prod_{\{u,v\}\in E_A} f_{uv}(y_u,y_v) \prod_{\{u,v\}\in E_{\mathrm{cut}}} f_{uv}(y_u, x_v) \prod_{\{u,v\}\in E_B} f_{uv}(x_u,x_v) \prod_{u\in A} f_u(y_u)\prod_{u\in B} f_u(x_u).
\]
We claim that for any fixed $x=x_Ax_B$, 
\[
h(y_A)\equiv 2^{-n/2} \la x_A|O_A|y_A\ra f(y_Ax_B), \qquad y_A\in \{0,1\}^{|A|}
\]
defines a two-local function on $G_A$. Indeed, we may write
\begin{equation}
h(y_A)=\omega \prod_{\{u,v\}\in E_A}h_{uv} (y_u,y_v) \prod_{u\in A} h_{u}(y_u)
\label{eq:F}
\end{equation}
where 
\[
h_{uv}(y_u,y_v)=  f_{uv}(y_u,y_v),
\]
\[
h_{u}(y_u)=\la x_u|O_u|y_u\ra
f_u(y_u)\prod_{v: \{u,v\}\in E_{\mathrm{cut}}} f_{uv}(y_u, x_v) 
\]
and $\omega=2^{-n/2}\prod_{\{u,v\}\in E_B} f_{uv}(x_u,x_v)\prod_{u\in B} f_u(x_u) \in \mathbb{C}$ can be absorbed into (say) one of the functions $h_u$ to make Eq.~\eqref{eq:F} of the form Eq.~\eqref{eq:twoloc}. We then apply Lemma \ref{lem:sum2local} which completes the proof.
\end{proof}
\begin{lemma}
\label{lemma:sampling}
There is a classical algorithm which samples a binary string $x\in \{0,1\}^n$ from the distribution $P(x)=|\langle x|\alpha\rangle|^2$ using a runtime $O(n^2 w 4^w)$.
\end{lemma}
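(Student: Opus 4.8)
The plan is to sample $x\in\{0,1\}^n$ from $P(x)=|\langle x|\alpha\rangle|^2$ in two stages: first draw the $B$-coordinates $x_B$, then draw the $A$-coordinates $x_A$ conditionally, one bit at a time, with each conditional marginal computed by summing a two-local function on $G_A$ via Lemma~\ref{lem:sum2local}.

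\emph{Stage one.} First I would observe that the marginal of $x_B$ under $P$ is uniform. Using $\langle x_Ax_B|\alpha\rangle=2^{-n/2}\sum_{y_A}f(y_Ax_B)\langle x_A|O_A|y_A\rangle$ and summing $|\langle x_Ax_B|\alpha\rangle|^2$ over $x_A\in\{0,1\}^{|A|}$, the inner sum collapses because $\sum_{x_A}\langle x_A|O_A|y_A\rangle\overline{\langle x_A|O_A|y'_A\rangle}=\langle y'_A|O_A^\dagger O_A|y_A\rangle=\delta_{y_A,y'_A}$ (here $O_A$ is unitary), so $\sum_{x_A}P(x_Ax_B)=2^{-n}\sum_{y_A}|f(y_Ax_B)|^2=2^{|A|-n}=2^{-|B|}$ since $|f|\equiv 1$. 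Hence it is correct to draw $x_B$ uniformly at random, which takes $O(n)$ time and, notably, uses no property of $G_B$. It then remains to sample $x_A$ from the conditional $P(x_A\mid x_B)=2^{|B|}P(x_Ax_B)$, which is a genuine probability distribution for every $x_B$.

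\emph{Stage two.} Fix an arbitrary ordering $a_1,\dots,a_r$ of $A$. Having already drawn $x_{a_1},\dots,x_{a_j}$, I would draw $x_{a_{j+1}}$ by computing the two numbers $q_b=\sum_{x_A}P(x_Ax_B)$, where the sum is over $x_A$ with $x_{a_1},\dots,x_{a_j}$ held at their sampled values and $x_{a_{j+1}}=b$, for $b\in\{0,1\}$, and then setting $x_{a_{j+1}}=0$ with probability $q_0/(q_0+q_1)$; by the chain rule this produces $x_A$ with the right conditional law. The crux is to evaluate each $q_b$ fast. Let $F=\{a_1,\dots,a_{j+1}\}$ be the set of fixed vertices (with prescribed values $x_u$, $u\in F$), and let $|\xi\rangle=\sum_{y_A}f(y_Ax_B)|y_A\rangle$; its amplitude $\langle y_A|\xi\rangle=f(y_Ax_B)$ is a two-local function of $y_A$ on $G_A$ once $x_B$ is fixed (exactly as in the proof of Lemma~\ref{lemma:amplitudes}: each cut edge $\{u,v\}$ with $v\in B$ becomes a vertex term $f_{uv}(y_u,x_v)$, and the purely-$B$ factors form a constant of modulus $1$). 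Since on the $x_B$-sector $\langle x_Ax_B|\alpha\rangle=2^{-n/2}\langle x_A|O_A|\xi\rangle$, we get
\[
q_b=2^{-n}\,\langle\xi|\,O_A^\dagger\Big(\bigotimes_{u\in F}|x_u\rangle\langle x_u|\ \otimes\ I_{A\setminus F}\Big)O_A\,|\xi\rangle
=2^{-n}\,\langle\xi|\,\Big(\bigotimes_{u\in F}M_u^{(x_u)}\ \otimes\ I_{A\setminus F}\Big)\,|\xi\rangle,
\]
where $M_u^{(c)}=O_u^\dagger|c\rangle\langle c|O_u$ and the factorization uses $O_A=\bigotimes_{u\in A}O_u$ together with $O_u^\dagger O_u=I$ for $u\notin F$ (unitarity again). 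Expanding the quadratic form over $y_A,y'_A$, the identity on $A\setminus F$ contributes $\prod_{u\notin F}\delta_{y_u,y'_u}$, so $q_b=\sum_Z Q(Z)$ where $Z$ assigns to each $u\in F$ a pair $Z_u=(y_u,y'_u)\in\{0,1\}^2$ and to each $u\notin F$ a single bit $Z_u\in\{0,1\}$, and $Q$ is the product of $f(y_Ax_B)\overline{f(y'_Ax_B)}$ (two-local on $G_A$, being a product of two such functions), the vertex factors $\langle y'_u|M_u^{(x_u)}|y_u\rangle$ for $u\in F$, and the scalar $2^{-n}$. Thus $Q$ is two-local on $G_A$ with per-vertex alphabet of size at most $4$. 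Feeding the given width-$w$, $O(n)$-node tree decomposition of $G_A$ into Lemma~\ref{lem:sum2local} — whose proof goes through verbatim for a finite per-vertex alphabet of size $d$ with the replacement $2^w\to d^w$, giving runtime $O(nwd^w)$ — evaluates each $q_b$ in time $O(nw4^w)$.

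\emph{Total cost and the main obstacle.} Stage one is $O(n)$, precomputing the $2\times2$ matrices $M_u^{(c)}$ is $O(n)$, and stage two samples $|A|\le n$ bits, each at the cost of $O(1)$ marginal evaluations, for a total of $O(n\cdot nw4^w)=O(n^2w4^w)$, as claimed. The step requiring the most care is the bookkeeping in the formula for $q_b$: only the already-fixed vertices of $A$ carry a doubled pair $(y_u,y'_u)$, while the vertices still being marginalized are collapsed to a single variable by the identity block (equivalently, the factor $\delta_{y_u,y'_u}$). Keeping track of this is exactly what ensures the effective alphabet at every vertex is at most $4$ rather than $8$, which keeps the per-marginal cost at $4^w$ and hence the total at $n^2 w 4^w$.
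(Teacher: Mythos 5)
Your proposal is correct and takes essentially the same route as the paper's proof: reduce sampling to a chain of conditional marginals, note that the $B$-marginals are uniform by unitarity of the $O_j$, and evaluate each remaining marginal as a sum of a two-local function in which the already-fixed $A$-vertices carry doubled variables $(y_u,y'_u)$ while the traced-out vertices carry a single bit, using the tree-decomposition summation machinery for $O(nw4^w)$ per marginal and $O(n^2w4^w)$ overall. The only difference is bookkeeping: the paper doubles the fixed vertices into an augmented graph $\tilde{G}$ on $C\cup C'\cup D$, shows its treewidth is at most $2w+1$, and applies Lemma~\ref{lem:sum2local} with the binary alphabet as stated, whereas you keep $G_A$ and enlarge the per-vertex alphabet to size at most $4$, invoking the straightforward $d$-ary generalization of that lemma (which indeed goes through, since Lemma~\ref{lem:treecontract} is already stated for a general finite alphabet).
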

\begin{proof}
It will be convenient to order the bits of $x$ as $x_Bx_A$, that is, all bits of $B$ appear first. We shall use the well-known linear time reduction from sampling to computing marginal probabilities.
That is, for each $\ell\in [n]$ and binary string $x_1x_2\ldots x_\ell$, we define a marginal probability
\begin{equation}
P_\ell(x_1x_2\ldots x_\ell)=\sum_{x_{\ell+1}x_{\ell+2}\ldots x_n} P(x).
\label{eq:marg}
\end{equation}
To sample from $P(x)$ we first sample $x_1\in \{0,1\}$ from $P_1(x_1)$, then we sample $x_2$ from $P_2(x_1 x_2)/P_1(x_1)$, then $x_3$  from $P_3(x_1 x_2 x_3)/P_2(x_1x_2)$, and so on.  To complete the proof, below we show that each marginal Eq.~\eqref{eq:marg} can be computed using a runtime $O(nw4^w)$.

If $\ell\leq |B|$, then 
\[
P_\ell(x_1x_2\ldots x_\ell)=
\la 0^n|H^{\otimes n} U_f^\dag
(|x_1x_2\ldots x_\ell\rangle \langle x_1x_2\ldots x_\ell|\otimes I_{|B|-\ell}\otimes O_A^\dag O_A) U_f H^{\otimes n}|0^n\ra.
\]
Since all operators $O_j$ are unitary, one can replace $O_A^\dag O_A$
with the identity. Then
$U_f^\dag$ and $U_f$ cancel each other since
$U_f$ commutes with any diagonal operator. Thus 
\[
P_\ell(x_1x_2\ldots x_\ell)=
\la 0^n|H^{\otimes n}(|x_1x_2\ldots x_\ell\rangle \langle x_1x_2\ldots x_\ell|\otimes I_{n-\ell})H^{\otimes n}|0^n\ra
=\frac{1}{2^\ell}.
\]

Suppose now that $\ell\geq |B|+1$.
Define a partition $A=CD$, where $C=A\cap \{1,2,\ldots,\ell\}$
includes all bits of $A$ that contribute to the marginal probability and
$D$ includes all bits of $A$ that are traced out. 
Then $x_1x_2\ldots x_\ell=x_Bx_C$.
 Using Eq.~\eqref{eq:px} we may write
\begin{align}
P_\ell(x_B x_C)&=2^{-n}\sum_{x_D} \left| \sum_{y_{A}} 
\la x_C x_D|O_A|y_A\ra
f(x_By_A)\right|^2\nonumber\\
&
=2^{-n}\sum_{x_D,y_A,z_A}
\la x_C x_D|O_A|y_A\ra^*\la x_C x_D|O_A|z_A\ra
f^*(x_By_A) f(x_Bz_A).
\end{align}
For any subset of qubits $M$ we shall write $O_M$ for 
the tensor product of $O_j$ over $j\in M$. Then
$O_A = O_C\otimes O_D$ and thus
\[
\sum_{x_D}
\la x_C x_D|O_A|y_A\ra^*\la x_C x_D|O_A|z_A\ra=
\left(\la x_C|O_C|y_C\ra^* \la x_C|O_C|z_C\ra \right)
\sum_{x_D} \la x_D|O_D|y_D\ra^* \la x_D|O_D|z_D\ra.
\]
The unitarity of $O_j$ implies
\[
\sum_{x_D} \la x_D|O_D|y_D\ra^* \la x_D|O_D|z_D\ra = 
\la y_D|O_D^\dag O_D|z_D\ra = \la y_D|z_D\ra = \delta_{y_D,z_D}.
\]
We arrive at 
\begin{align}
P_\ell(x_B x_C)&=2^{-n} \sum_{y_C,y_C',y_D}
\la x_C|O_C|y_C\ra^* \la x_C|O_C|y_C'\ra f^*(x_By_Cy_D) f(x_By_C' y_D) \nonumber \\
&= \sum_{y_C,y_C',y_D} h(y_C,y_C',y_D),\label{eq:probsum}
\end{align}
where
\[
h(y_C,y_C',y_D) = 2^{-n} \la x_C|O_C|y_C\ra^* \la x_C|O_C|y_C'\ra f^*(x_By_Cy_D) f(x_By_C' y_D).
\]
We claim that $h$ is a two-local function on a certain augmented
graph
$\tilde{G}=(\tilde{V},\tilde{E})$ with vertex set $\tilde{V}=C\cup C'\cup D$, where $C'$ is a second copy of $C$. To construct this graph, start with a disjoint union of $G_A$ and $G_{C'}$ (the induced subgraphs of $G$ on vertices in $A,C'$ respectively). For each edge $\{u,v\}\in E_A$ with 
$u\in C$ and $v\in D$, we then add a new edge $\{u',v\}$ 
between $v$ and the copy $u'\in C'$. Since $f$ is two-local on $G$, a direct inspection
shows that each two-local term in $f^*(x_By_C y_D)$
or $f(x_B y_C' y_D)$ 
that depends on the variables $y$ or $y'$
involves nearest-neighbor variables
in the augmented graph $\tilde{G}$. 
The remaining two-local terms
in $f^*(x_By_C y_D)$
or $f(x_B y_C' y_D)$ 
that depend on at least one variable $x$ become $1$-local 
since $x_B$ is fixed. 
Furthermore,
the matrix elements $\la x_C|O_C|y_C\ra$
and $\la x_C|O_C|y_C'\ra$ are $1$-local functions
of $y_C$ and $y_C'$. Thus $h$ is a two-local function on the augmented graph $\tilde{G}$.

We claim that the treewidth $\tilde{w}$ of $\tilde{G}$ satisfies $\tilde{w}\leq 2w+1$. Indeed, let $T, \mathcal{B}$ be the given tree decomposition of $G_A$ of width $w$. Define a tree decomposition $\tilde{T}, \tilde{\mathcal{B}}$ of $\tilde{G}$ as follows. First set $\tilde{T}=T$. Then, for each node $i$ of $T$,  let $S_i=B_i\cap C$ be the set of vertices of $C$ that appear in $B_i$.  Let $S'_i$ contain the corresponding vertices of $C'$, and define 
\[
\tilde{B}_{i}=B_i\cup S'_i.
\]
One can straightforwardly check that this defines a tree decomposition $\tilde{T}, \tilde{\mathcal{B}}$ of $\tilde{G}$. The upper bound  $\tilde{w}\leq 2w+1$ follows from the fact that each bag can at most double in size.

To complete the proof, we apply Lemma \ref{lem:sum2local} which allows us to compute the sum Eq.~\eqref{eq:probsum} using a runtime $O(n (2w+1) 2^{2w+1})=O(n w 4^w)$. Since sampling $x$ from $P(x)$
requires computation of $n$ marginal probabilities
the overall runtime of the algorithm is $O(n^2 w 4^w)$.
\end{proof}

Using the algorithm from Lemma~\ref{lemma:sampling} suffices to give an efficient algorithm for the graph-based forrelation problem as we will describe below.  Indeed, this is the algorithm that will be used in our numerical calculations of Section~\ref{sec:QAOA_numerical}.  Nevertheless, an asymptotically more efficient algorithm exists which we prove in Appendix~\ref{sec:linear_time_sampling}:
\begin{lemma}
\label{lem:linear_sampling_simple}
There is a classical algorithm which samples a binary string $x\in \{0,1\}^n$ from the distribution $P(x)=|\langle x|\alpha\rangle|^2$ using a runtime $O(n 4^w)$.
\end{lemma}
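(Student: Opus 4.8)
The plan is to reduce the sampling task, after a single conditioning step, to the general tensor-network sampling Problem~\ref{prob:forrsample} on the $|A|$ qubits of $A$ alone --- whose connectivity graph is $G_A$, of treewidth $w$, rather than all of $G$ --- and then to invoke the algorithm of Appendix~\ref{sec:linear_time_sampling}. Throughout I assume, as in the rest of this section, that every $O_j$ is unitary; the general case then follows from the block-encoding reduction described at the end of the section (each $O_j$ with $j\in A$ becomes a two-qubit unitary on qubit $j$ together with a pendant ancilla, which leaves the relevant treewidth at $\max(w,1)$).

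First I would order the $n$ bits of $x$ so that those of $B$ come first and observe that the marginal of $P$ on $x_B$ is uniform. Writing $\Pi_{x_B}=|x_B\rangle\langle x_B|\otimes I_A$, which is diagonal and hence commutes with $U_f$, and using $O_A^\dagger O_A=I_A$ and $|+^n\rangle=H^{\otimes n}|0^n\rangle$,
\[
\sum_{x_A}P(x_Bx_A)=\langle\alpha|\Pi_{x_B}|\alpha\rangle=\langle +^n|U_f^\dagger\Pi_{x_B}U_f|+^n\rangle=\langle +^n|\Pi_{x_B}|+^n\rangle=2^{-|B|}.
\]
So the algorithm first draws $x_B\in\{0,1\}^{|B|}$ uniformly at random in time $O(n)$; this is exactly the observation in the proof of Lemma~\ref{lemma:sampling} that the first $|B|$ marginals equal $2^{-\ell}$.

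Second, conditioned on the drawn string $x_B$, the distribution of $x_A$ is
\[
P(x_A\mid x_B)\ \propto\ \bigl|\langle x_Bx_A|\alpha\rangle\bigr|^2\ \propto\ \bigl|\langle x_A|O_A|\tilde\alpha\rangle\bigr|^2,\qquad |\tilde\alpha\rangle\ \equiv\ \sum_{y_A}f(y_Ax_B)\,|y_A\rangle .
\]
Exactly as in the proof of Lemma~\ref{lemma:amplitudes}, for fixed $x_B$ the map $y_A\mapsto f(y_Ax_B)$ is a two-local function on $G_A$ (every cut edge collapses to a single-variable term), so $|\tilde\alpha\rangle=2^{|A|/2}\,\tilde U_f|+^{|A|}\rangle$ with $\tilde U_f$ a product of one- and two-qubit diagonal unitaries whose connectivity graph is $G_A$. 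Hence
\[
P(x_A\mid x_B)\ \propto\ \langle x_A|\,O_A\,\tilde U_f\,|+^{|A|}\rangle\langle +^{|A|}|\,\tilde U_f^\dagger\,O_A^\dagger\,|x_A\rangle ,
\]
which is precisely an instance of Problem~\ref{prob:forrsample}: $d=2$, qudits indexed by $A$, product initial state $\chi_i=|+\rangle\langle+|$, diagonal gates the $O(|E_A|+|A|)$ factors of $\tilde U_f$, single-qudit operators $\{O_j\}_{j\in A}$, and connectivity graph $G_A$. Feeding in the given width-$w$ tree decomposition of $G_A$ and applying the algorithm of Appendix~\ref{sec:linear_time_sampling} draws $x_A$ in time $O(|A|\cdot 2^{2w}+|E_A|\cdot 2^{w})$; since $G_A$ has treewidth $w$ we have $|E_A|\le w|A|$ and $w2^w\le 4^w$, so this is $O(n4^w)$. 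Concatenating $x_B$ with $x_A$ yields a sample of $P$, and the total runtime is $O(n4^w)$.

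The actual work is entirely in Appendix~\ref{sec:linear_time_sampling}: proving that Problem~\ref{prob:forrsample} can be solved in time $O(nd^{2w}+md^{w})$. The main obstacle --- and the reason Lemma~\ref{lem:linear_sampling_simple} is stronger than Lemma~\ref{lemma:sampling} --- is to avoid the $O(n)$-fold overhead incurred when one naively recomputes, for each of the $n$ conditional marginals, a fresh width-$O(w)$ tree-decomposition contraction in the style of Lemmas~\ref{lem:treecontract} and~\ref{lem:sum2local}. Instead one fixes a single tree decomposition of the ``doubled'' graph once and for all, orders the qudits compatibly with it, and maintains the partial contractions (messages along the tree) as a dynamic-programming table; extending the conditioned prefix by one qudit perturbs only a bounded region of the decomposition, so the table can be refreshed in time $O(d^{2w})$ per qudit using the Gray-code and shared-prefix tricks already exploited in Lemma~\ref{lem:sum2local}. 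Getting this bookkeeping right --- together with the degenerate case in which the conditional distribution vanishes identically, so that the sampler reports that there is no valid output --- is the technical heart of the appendix.
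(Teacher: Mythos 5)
Your proposal is correct and follows essentially the same route as the paper: first sample the $B$ qubits uniformly (your direct calculation is exactly the specialization of Lemma~\ref{lem:removeeasy} to trivial operators and $\chi_i=|+\rangle\langle +|$), then view the conditioned problem on $A$, with the diagonal gates obtained by fixing $x_B$, as an instance of Problem~\ref{prob:forrsample} with connectivity graph $G_A$, and invoke the appendix algorithm (Theorem~\ref{thm:appendixmain}) with $d=2$ to get $O(n4^w)$. The only cosmetic differences are your gate-count bound $|E_A|\le w|A|$ in place of the paper's $O(nw^2)$, and your parenthetical block-encoding remark for non-unitary $O_j$, which is unnecessary (and not how the paper proceeds): the lemma is only ever invoked for unitary $O_j$, the general case being handled in Section~\ref{sec:graph_based_proof} by decomposing each $O_j$ into a convex combination of unitaries.
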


In either case, let us now describe how to obtain Theorem~\ref{thm:graph_based} from the lemmas above.  
Let $\cal A$ be a randomized algorithm that samples
a bit string $x\in \{0,1\}^n$ from the distribution $|\la x|\alpha\ra|^2$ and outputs the quantity $\mu = \la \beta |x\ra/\la \alpha|x\ra$. We have
${\mathbb E}(\mu)=\la \beta|\alpha\ra = \Phi$
and ${\mathbb E}(|\mu|^2)=\la \beta|\beta\ra \le 1$.
Thus $\mu$ is an unbiased estimator of $\Phi$ with variance at most one. Lemmas~\ref{lemma:amplitudes} and \ref{lem:linear_sampling_simple} will imply that
imply that $\cal A$ has runtime $O(n 4^w)$.
By generating $S=100\epsilon^{-2}$ independent samples of $\mu$ using $S$ calls to the algorithm $\cal A$ and
computing the sample mean value one can approximate
$\Phi$ with an additive error $\epsilon$ in time
$O(n 4^w \epsilon^{-2})$, as claimed. 

So far we have assumed that $O_j$ are unitary operators.\footnote{While this is not technically required for Lemma~\ref{lem:linear_sampling_simple}, our running will not suffer from making this assumption, and the following argument can be made for both Lemma~\ref{lemma:sampling} and Lemma~\ref{lem:linear_sampling_simple}.} Now let us consider the more general case where we only require $\|O_j\|\le 1$ for all $j$. Using the singular value
decomposition one can write any $2\times 2$ matrix
$M$ as
\[
M=\|M\| \cdot U \left[\begin{array}{cc} 1 & 0 \\ 0 & s \\ \end{array} \right] V
\]
for some $0\le s\le 1$ and some $2\times 2$ unitary matrices $U,V$.
The identity 
\[
\left[\begin{array}{cc} 1 & 0 \\ 0 & s \\ \end{array} \right]
= \frac{(1+s)}2 I + \frac{(1-s)}2 Z
\]
gives  $M=\|M\|(q_0 M_0 + q_1M_1)$,
where $M_0=UV$ and $M_1=UZV$ are unitary, $q_0=(1+s)/2$, and
$q_1=(1-s)/2$.
Applying this decomposition to each operator $O_j$ one gets
\[
\Phi = \omega \sum_{z \in \{0,1\}^n} q(z) \Phi(z),
\]
where $\omega=\prod_{j=1}^n \|O_j\|\le 1$ is the normalization constant,
$q(z)$ is a product distribution, and 
\[
\Phi(z)=\la 0^n|H^{\otimes n} U_g (O_{1,z_1}\otimes O_{2,z_2} \otimes \cdots \otimes O_{n,z_n}) U_f H^{\otimes n}|0^n\ra,
\]
for some unitary operators $O_{j,0}$ and $O_{j,1}$.
As shown above, there exists 
a randomized algorithm
${\cal A}$ that takes as input a bit string $z\in \{0,1\}^n$
and outputs a random variable $\mu(z)\in \mathbb C$ such that
${\mathbb E}(\mu(z))=\Phi(z)$ and ${\mathbb E}(|\mu(z)|^2)\le 1$.
Here the mean values  are taken over the internal randomness of ${\cal A}$.
We can now estimate $\Phi$ by Monte Carlo method.
Generate $S=100\epsilon^{-2}$ independent samples
$z^1,\ldots,z^S$ from the distribution $q(z)$.
For each sample $z^j$ call the algorithm $\cal A$
to obtain the estimator $\mu(z)$. Finally, approximate
$\Phi$ by the quantity
\[
\eta = \frac{\omega}S \sum_{j=1}^S \mu(z^j).
\]
One can straightforwardly check that ${\mathbb E}(\eta) = \Phi$.
Furthermore, 
\[
\mathrm{Var}(\eta)=\frac{\omega^2}S
\mathrm{Var}(\mu(z^1))
\le
\frac1S \sum_z q(z) 
{\mathbb E}(|\mu(z)|^2) \le  \frac1S=\frac{\epsilon^2}{100}.
\]
By the Chebyshev inequality, $|\eta - \Phi|\le \epsilon$
with probability at least $0.99$.

\section{Applications to quantum approximate optimization} 
\label{sec:QAOA}

In this section we discuss the connection between the classical algorithm for graph-based forrelation and certain quantum approximate optimization algorithms (QAOA) \cite{farhi2014quantum}.  In particular, in Section~\ref{sec:QAOA_map} we show how graph-based forrelation can be be used to calculate the variational energies for the level-2 QAOA algorithm on planar and bipartite graphs.  In Section~\ref{sec:QAOA_numerical}, we describe an implementation of this algorithm that can accurately estimate these energies.  Finally, in Section~\ref{sec:RQAOA}, we discuss the recursive QAOA algorithm \cite{bravyi2019obstacles} for which our graph-based forrelation approach can give a complete classical simulation.  We then combine our implementation for computing level-2 QAOA energies with the recursive QAOA algorithm to show that our algorithm can be leveraged for solving large optimization problems.

\subsection{Quantum mean value problem for level-2 QAOA}
\label{sec:QAOA_map}
Suppose $G=(V,E)$ is a graph with $n$ vertices. We place a qubit at each vertex of $G$.
Consider a diagonal $n$-qubit  Hamiltonian
\[
C=\sum_{\{p,q\}\in E} J_{p,q} Z_p Z_q
\]
where $J_{p,q}$ are real coefficients.
It describes a classical Ising-type cost function
defined on the graph $G$.
Note that $\langle z|C |z\rangle$ is a $2$-local function on $G$. 
The Quantum Approximate Optimization Algorithm (QAOA)
introduced in~\cite{farhi2014quantum}
maximizes the expected value $\la \psi|C|\psi\ra$
over a suitable class of $n$-qubit variational states $\psi$
that depend on a few parameters.
Once the optimal variational
state $\psi$ is found, a bit string $x\in \{0,1\}^n$
is sampled from the distribution $|\la x|\psi\ra|^2$ obtained
by measuring every qubit of $\psi$. The expected value of the
cost function $\la x|C|x\ra$ coincides with the optimal
variational energy $\la \psi|C|\psi\ra$.

Here we focus on the level-$2$ version of QAOA.
The corresponding variational states are
generated by a  quantum circuit with two entangling layers,
\[
|\psi\ra = e^{-i\beta_2 B} e^{-i\gamma_2 C} e^{-i\beta_1 B} e^{-i\gamma_1 C}|+^n\ra.
\]
Here, $\beta_\ell,\gamma_\ell\in {\mathbb{R}}$ are variational parameters, $|+^n\ra=H^{\otimes n}|0^n\ra$,
and $B=X_1+X_2+\ldots+X_n$.
Consider the first step of QAOA -- computing
the expected value $\la \psi|C|\psi\ra$.
By linearity, it suffices to compute 
the expected value 
$\la \psi|Z_s Z_t|\psi\ra$ for some fixed edge $\{s,t\}\in E$.
The standard lightcone argument shows that $\la \psi|Z_s Z_t|\psi\ra$ only depends
on coefficients $J_{p,q}$ such that the edge $\{p,q\}$ is incident to $s$, or $t$, or one of nearest
neighbors of $s,t$. The remaining edges are irrelevant and can be removed from the graph. 
Below we assume that the graph has been truncated such that all irrelevant edges are removed.

Inserting the identity decompositions on qubits $s,t$ between the first $B$-layer and the second
$C$-layer one gets
\[
\la \psi|Z_sZ_t|\psi\ra = \sum_{x\in \{0,1\}^4} \mu(x)
\]
where
\[
\mu(x)=
\la +^n| e^{i\gamma_1 C} e^{i\beta_1 B} |x_1x_2\ra\la x_1x_2|_{s,t} e^{i\gamma_2 C} e^{i\beta_2 B}
Z_s Z_t  e^{-i\beta_2 B}e^{-i\gamma_2 C}|x_3x_4\ra\la x_3x_4|_{s,t} e^{-i\beta_1 B} e^{-i\gamma_1 C}|+^n\ra
\]
Using the fact that $X^{\otimes n}$ commutes with $B,C$, $Z_sZ_t$ and $X^{\otimes n}|+^n\ra=|+^n\ra$
one can easily check that $\mu(x)$ has symmetries
$\mu(x\oplus 1111)=\mu(x)$ and $\mu(x_3x_4x_1x_2)=\mu^*(x_1x_2x_3x_4)$.
Since $\la \psi|Z_sZ_t|\psi\ra$ is real, one can write
\be
\label{mu_eq1}
\la \psi|Z_sZ_t|\psi\ra = \mathrm{Re}
\left[ 2\mu(0000) + 4\mu(0010) + 4\mu(0001)+2\mu(0011)+2\mu(0110)+2\mu(0101)\right].
\ee
Let us  show that the  quantity $\mu(x)$ can be expressed as a graph-based forrelation, namely,
\be
\label{mu_eq2}
\mu(x)=\la x_1 x_2 |U|x_3 x_4\ra \cdot \la +^n|e^{i\gamma_1C} O_1(x)\otimes O_2(x)\otimes \cdots \otimes O_n(x)e^{-i\gamma_1 C}|+^n\ra
\ee
for  some single-qubit operators $O_j(x)$ such that $\|O_j(x)\|\le 1$ and
a two-qubit operator
\be
\label{mu_eq3}
U=e^{i\gamma_2 J_{s,t} Z\otimes Z} e^{i\beta_2(X\otimes I + I\otimes X)}
Z\otimes Z e^{-i\beta_2(X\otimes I + I\otimes X)}e^{-i\gamma_2 J_{s,t} Z\otimes Z}.
\ee
Indeed, a simple algebra shows that 
\[
\mu(x)=\la x_1 x_2 |U|x_3 x_4\ra \cdot \la +^n |e^{i\gamma_1 C} e^{i\beta_1 B} e^{i\gamma_2 C'} |x_1x_2\ra\la x_3 x_4|_{s,t}
e^{-i\gamma_2 C'} e^{-i\beta_1 B} e^{-i\gamma_1 C}|+^n\ra
\]
where $C'$ is obtained from $C$ by retaining only the edges incident to exactly one of the vertices $s,t$.
In other words,
\[
C'=  \sum_{p\in V\setminus \{s,t\}} J_{s,p} Z_s Z_p + J_{t,p}Z_t Z_p.
\]
Replacing
$Z_s$ and $Z_t$ in $C'$ by their eigenvalues leads to Eq.~(\ref{mu_eq2}) where
\be
\label{mu_eq4}
O_s(x) = e^{i\beta_1 X} |x_1\ra\la x_3| e^{-i \beta_1 X},
\ee
\be
\label{mu_eq5}
O_t(x) = e^{i\beta_1 X}|x_2\ra\la x_4| e^{-i \beta_1 X},
\ee
and
\be
\label{mu_eq6}
O_p(x) = e^{i\beta_1 X} e^{i\gamma_2 (J_{s,p} (-1)^{x_1}  + J_{t,p} (-1)^{x_2} - J_{s,p} (-1)^{x_3} - J_{t,p}(-1)^{x_4})Z}
e^{-i\beta_1 X}
\ee
for all $p\notin \{s,t\}$.
Combining Eqs.~(\ref{mu_eq1},\ref{mu_eq2},\ref{mu_eq3},\ref{mu_eq4},\ref{mu_eq5},\ref{mu_eq6})
shows that computing the mean value $\la \psi|Z_s Z_t|\psi\ra$ can be reduced
to solving six instances of the graph-based forrelation problem.

Finally, although there are efficient classical algorithms for maximizing Ising cost functions over planar graphs \cite{schraudolph2008efficient}, we note that for Ising cost functions with non-zero magnetic fields (i.e., the Hamiltonian contains $J_a Z_a$ terms) no such polynomial-time algorithm is known.  Furthermore, the above reduction from
level-$2$ QAOA to graph-based forrelation 
can be easily extended to these more general cost
functions $C$ that contain both quadratic and linear terms.

As a consequence, our classical algorithm for graph-based forrelation problems can be deployed to efficiently compute mean values $\langle \psi|C|\psi\rangle$ for level-$2$ QAOA on planar and bipartite graphs. This allows one to study the \textit{performance} of these QAOA algorithms using a classical computer.  Note that it does not provide a classical simulation of the full QAOA algorithm---a quantum computer is still needed to sample from the variational state $\ket{\psi}$ once a suitable choice of variational parameters has been found.

\subsection{Numerical estimation of quantum mean values}
\label{sec:QAOA_numerical}

We implemented our graph-based forrelation algorithm\footnote{To be clear, the running time of our algorithm will scale quadratically in $n$ due to the fact that we use the sampling algorithm from Lemma~\ref{lemma:sampling}} and used it to calculate level-2 QAOA variational energies.  To test our implementation, we calculated $\bra{\psi} C \ket{\psi}$ where $C$ is the 2-local Hamiltonian with randomly-selected $\pm 1$ couplings over the 10-qubit triangular lattice shown below:
\begin{center}
\begin{tikzpicture}[scale=1]
\tikzstyle{vertex}=[draw, circle, minimum size=10pt, inner sep=0pt, outer sep=0pt]

\node[fill, circle, minimum size=3pt, inner sep=0pt, outer sep=0pt] (0) at (0, -0) {};
\node[fill, circle, minimum size=3pt, inner sep=0pt, outer sep=0pt] (1) at (-0.5, -0.866025) {};
\node[fill, circle, minimum size=3pt, inner sep=0pt, outer sep=0pt] (2) at (0.5, -0.866025) {};
\node[fill, circle, minimum size=3pt, inner sep=0pt, outer sep=0pt] (3) at (-1, -1.73205) {};
\node[fill, circle, minimum size=3pt, inner sep=0pt, outer sep=0pt] (4) at (0, -1.73205) {};
\node[fill, circle, minimum size=3pt, inner sep=0pt, outer sep=0pt] (5) at (1, -1.73205) {};
\node[fill, circle, minimum size=3pt, inner sep=0pt, outer sep=0pt] (6) at (-1.5, -2.59808) {};
\node[fill, circle, minimum size=3pt, inner sep=0pt, outer sep=0pt] (7) at (-0.5, -2.59808) {};
\node[fill, circle, minimum size=3pt, inner sep=0pt, outer sep=0pt] (8) at (0.5, -2.59808) {};
\node[fill, circle, minimum size=3pt, inner sep=0pt, outer sep=0pt] (9) at (1.5, -2.59808) {};

\path[color=black!60!green] (2) edge node {} (0);
\path[color=black!10!red] (1) edge node {} (0);
\path[color=black!10!red] (2) edge node {} (1);
\path[color=black!60!green] (4) edge node {} (1);
\path[color=black!60!green] (3) edge node {} (1);
\path[color=black!60!green] (5) edge node {} (2);
\path[color=black!60!green] (4) edge node {} (2);
\path[color=black!10!red] (4) edge node {} (3);
\path[color=black!10!red] (7) edge node {} (3);
\path[color=black!10!red] (6) edge node {} (3);
\path[color=black!60!green] (5) edge node {} (4);
\path[color=black!10!red] (8) edge node {} (4);
\path[color=black!60!green] (7) edge node {} (4);
\path[color=black!60!green] (9) edge node {} (5);
\path[color=black!10!red] (8) edge node {} (5);
\path[color=black!60!green] (7) edge node {} (6);
\path[color=black!10!red] (8) edge node {} (7);
\path[color=black!10!red] (9) edge node {} (8);

\end{tikzpicture}

\end{center}

A green edge corresponds to a $+1$ coupling, while a red edge corresponds to a $-1$ coupling.
The triangular lattice was chosen over the usual square lattice due to the fact that it is not bipartite, and therefore requires the more sophisticated partitioning of vertices used in the graph-based forrelation algorithm for general planar graphs.

In Figure~\ref{fig:epsilon_graph}, we graph the accuracy of the estimate as a function of $\epsilon$ where $\epsilon^{-2}$ samples were taken for each unitary graph-based forrelation instance.  Recall, however, that  we estimate each local term of the Hamiltonian separately (of which there are 18 in the example).  We randomly chose the variational angles in the range $[0, 2\pi)$ ($\beta_1 = 1.44433$, $\beta_2 = 3.56786$, $\gamma_1 = 0.937498$, and $\gamma_2 = 4.93861$).  These choices were fixed for all $\epsilon$.  The exact energy ($-4.35917$) was calculated explicitly using Mathematica.

\begin{figure}
\centering
\input{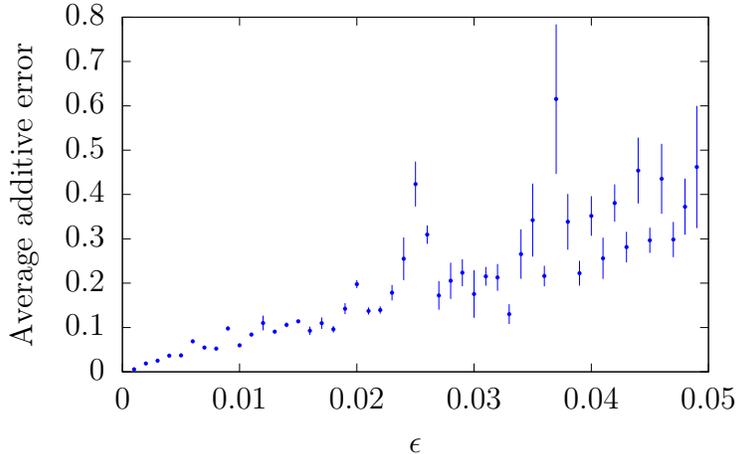}
\caption{Additive estimation error for the energy of the triangular lattice as a function of accuracy parameter $\epsilon$.  Each data point is the average of 10 trials with variance shown by the error bars.  The parameters $\beta_1 = 1.44433$, $\beta_2 = 3.56786$, $\gamma_1 = 0.937498$, and $\gamma_2 = 4.93861$ were fixed for all $\epsilon$, and the exact energy $-4.35917$ was calculated by brute force.  We note that we expect linear scaling with $\epsilon$.}
\label{fig:epsilon_graph}
\end{figure}

\subsection{The level-2 RQAOA algorithm and implementation}
\label{sec:RQAOA}

Recall that the ability to optimize the variational parameters of the QAOA algorithm does not immediately confer the ability to carry out the entire QAOA heuristic due to the fact that a quantum computer is needed to sample from the variational state $\ket{\psi}$.  Instead, we turn to the Recursive QAOA algorithm (RQAOA) of Bravyi, Kliesch, K\"onig, and Tang \cite{bravyi2019obstacles} which can be simulated entirely given an algorithm to compute expected values and optimize variational angles.  The algorithm has also been shown to perform well in certain cases in which the usual QAOA is provably suboptimal.

Let us briefly describe this algorithm.  Suppose we have some cost function $C \colon \{-1,1\}^n \to \mathbb R$ we would like to maximize.  For our purposes, let us also suppose that $C$ is an Ising cost function of the form
$$
C(z) = \sum_{\{p,q\} \in E} J_{p,q} z_p z_q
$$
where $E$ is the edge set of some graph, and the $J_{p,q}$ are real coefficients.  As in the QAOA algorithm, we promote the cost function to a Hamiltonian, i.e., $C = \sum_{\{p,q\} \in E} J_{p,q} Z_p Z_q$, and optimize the variational parameters of the state
$$
\ket{\psi} = e^{-i\beta_2 B} e^{-i\gamma_2 C} e^{-i\beta_1 B} e^{-i\gamma_1 C} \ket{+^n}
$$
to maximize $\bra{\psi} C \ket{\psi}$.  Given these parameters, we compute the values
$$
M_{p,q} = \bra{\psi} Z_p Z_q \ket{\psi}
$$
for each edge $\{p,q\} \in E$.  We then choose whichever edge $\{p,q\}$ showed the strongest correlation, that is,
$$
\{p,q\} = \operatorname*{arg\,max}_{\{a,b\} \in E} |M_{a,b}|.
$$
Finally, we impose the constraint $z_p = \mathrm{sgn}(M_{p,q}) z_q$ and eliminate $z_p$ from the cost function so that we obtain a new Ising cost function on $n-1$ variables.  That is, after eliminating variable $z_p$, the term $J_{r,p} z_r z_p$ in the cost function becomes $J_{r,p}  \mathrm{sgn}(M_{p,q}) z_r z_q$.  We recurse on this smaller instance (finding the new optimal parameters, choosing an edge, etc.) until some threshold number of variables is met, after which we simply brute force the optimal solution.

The RQAOA algorithm has an important property 
of preserving planarity of the underlying graph $G$. Namely, the 
variable elimination step 
of the algorithm, i.e., constraining $z_p$ and $z_q$ to be (anti)correlated, corresponds to a contraction 
of the edge $\{p,q\}$ in the graph $G$
---the term $J_{p,q} z_p z_q$ in the cost function disappears and becomes a constant energy shift, and all edges incident to those two vertices become incident to the merged node.  It is well-known that edge contractions preserve planarity.\footnote{For example, one can appeal to Wagner's theorem which states that a finite graph is planar iff it excludes $K_5$ and $K_{3,3}$ as minors.  By definition, taking minors involves contracting edges, and so if the minor was excluded before then it is clearly excluded after contracting an edge.}
Thus, recursive steps in RQAOA generate a family of Ising cost functions
defined on planar graphs, as long as the initial cost function is defined on a planar graph.

Let us now describe our implementation of the level-2 RQAOA algorithm on 2D grid graphs.  We optimize the variational angles in two phases.  First, we set $\beta_2 = \gamma_2 = 0$ and optimize $\beta_1$ and $\gamma_1$ using the solution for level-1 QAOA \cite{bravyi2019obstacles}.  We fix the optimal $\beta_1$ and $\gamma_1$ obtained at this step.  Second, we search over a grid of possible $\beta_2$ and $\gamma_2$ values and choose whichever angles yield the largest energy using our algorithm for calculating 2-level variational energies.  We note that this approach ensures that the energy computed at each step is at least the energy computed by the level-1 QAOA algorithm. Indeed, we observe empirically that the energy we obtain is almost always strictly greater than the level-1 energy.  That said, our approach does not guarantee we obtain the largest possible level-2 variational energy.

The second step above is aided by the following observation:  for fixed angles $\beta_1, \gamma_1, \gamma_2$, the value of $\beta_2$ which maximizes the variational energy can be computed by calculating the energy at only three different values of $\beta_2$.  Therefore, the two-dimensional search in the second step above essentially reduces to a one-dimensional search.  We describe this reduction in Appendix~\ref{app:eliminate_beta_2}.

Finally, in order to more efficiently probe larger and more complex optimization problems using level-2 RQAOA, we supplement our graph-based forrelation algorithm for calculating the variational energies with an explicit brute force solution.  It is simple to see that the expected value of a particular $Z_s Z_t$ term only depends on neighbors of the neighbors of the edge $\{s,t\}$---that is, the lightcone of the $Z_s Z_t$ term.  When this set of qubits is small, we can brute force the expected values by explicitly writing out the state of the system.  On the other hand, when this set of qubits is large, the exponential running time of the brute force algorithm becomes prohibitive and we must turn to our graph-based forrelation algorithm.  We describe our exact brute force algorithm (which is more sophisticated than the one hinted at above) in Appendix~\ref{all:brute_force}.

Notice that although the 2D grid graph has low degree, the graph obtained by contracting various edges does not in general remain low degree.  Therefore, while the earlier steps of the RQAOA algorithm can be computed by brute force, the later steps must be computed by our graph-based forrelation algorithm. This phenomenon can be observed in Figure~\ref{fig:15x15_rqaoa}.

We test this implementation of the level-2 RQAOA on $10 \times 10$ and $15 \times 15$ grid graphs with random $\pm 1$ couplings.  In both tests, we set $\epsilon = .03$ (once again, taking $\epsilon^{-2}$ many samples), brute force the solution when $10$ vertices remain, and search over a grid of 30 points to optimize the level-2 variational angles.  In ten random instances on the $10 \times 10$ grid, there was only one for which our implementation did not yield an optimal solution (and even in this case, the solution obtained was 98\% of the optimal value).  While impressive, we note that the level-1 RQAOA algorithm also performs very well on such instances, and so we do not attempt to make a qualitative comparison between the two.  That said, we stress that we know of no faster method for calculating the level-2 variational energies and that the energies given by our level-2 algorithm are greater than those given by the level-1 algorithm.  Finally, we tested a single random instance on the $15 \times 15$ grid and give a detailed breakdown of its runtime in Figure~\ref{fig:15x15_rqaoa}.  The algorithm found an optimal assignment to the vertices to maximizes the cost function, which we computed explicitly using Gurobi's integer quadratic programming software \cite{gurobi}.

\begin{figure}
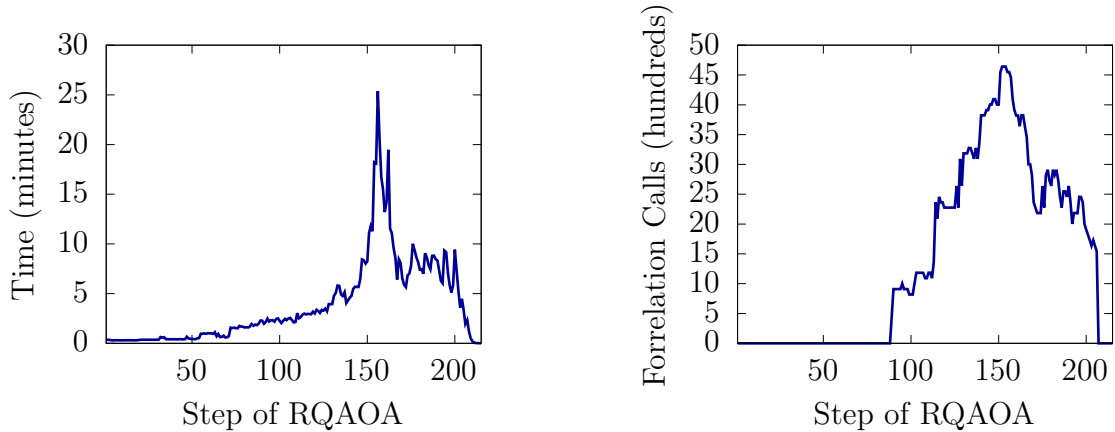

\centering
\parbox{.40\textwidth}{
    \input{15x15_timee}
} \qquad\qquad
\begin{minipage}{.40\textwidth}
    \input{15x15_ZZss}
\end{minipage}
\caption{The graphs above depict a single run of the level-2 RQAOA algorithm on a $15 \times 15$ grid with random $\pm 1$ couplings.  We set the parameters $\epsilon = .03$, brute force the solution when $10$ vertices remain, and search over 30 possible values of $\gamma_2$ to optimize the level-2 variational angles.  The algorithm returned an optimal assignment to the vertices to maximize the cost function. \emph{Left:}  The running time of the algorithm at each step of the RQOA algorithm. \emph{Right:} The number of calls we make to our graph-based forrelation algorithm (e.g., a value of 0 implies that only the brute-force algorithm was used).}
\label{fig:15x15_rqaoa}
\end{figure}

\section{Conclusions}
\label{sec:conclusions}

Quantum computers available today may already be capable of certain specialized tasks that are too hard for existing classical computers. A convincing demonstration of a quantum computational advantage requires evidence that the task is hard for classical computers.  Even if we focus on problems which admit asymptotic quantum speedups, one would like to ensure that the problem size is large enough to defeat existing classical machines.  This requires an accurate estimate of the runtime required by classical algorithms.
Our work provides  such estimates for the forrelation problem -- a powerful computational primitive
that achieves nearly maximal quantum speedup, as measured by query complexity. 
We developed classical algorithms both for the oracle-based and for an explicitly defined graph-based
forrelation problems. In the oracular setting our algorithm is almost optimal as its runtime nearly matches
the query complexity lower bound. This provides a nearly quadratic speedup ($2^{n/2}$ vs $2^n$) compared with the best previously known
algorithms reported in Refs.~\cite{aaronson2010bqp,aaronson2018forrelation}.  In the graph-based setting our algorithm has a
linear runtime for a large family of graphs including any bipartite or any planar graph.
This extends a line of work studying classical algorithms for problems that promise a quantum advantage such as
boson sampling~\cite{clifford2018classical},
random circuit sampling~\cite{aaronson2016complexity},
or IQP circuits~\cite{kahanamoku2019forging}.
Our work also provides a new tool to study the performance of quantum approximate optimization algorithms. The ability to calculate the 
variational energy of QAOA states classically eliminates the need to prepare the state on a quantum device while optimizing the QAOA parameters. Alternatively, our classical simulator can be used to benchmark QAOA on large problem sizes that are not yet accessible
in the experiment. 

While we have provided strong evidence that a relative-error version of graph-based forrelation is intractable, an intriguing open question is whether or not the standard additive-error version is hard for classical machines. Alternatively, one may attempt to improve or extend the reach of our efficient classical algorithms. A modest goal would be to improve the error dependence of the classical algorithm for graph-based forrelation on planar graphs from $\epsilon^{-2}$ to $\epsilon^{-1}$, matching the error scaling of our algorithm for oracle-based forrelation. Such an improvement would likely provide a substantial enhancement of the practical performance of QAOA simulation.

A related open question is whether some special instances of the graph-based forrelation problem
 can be used to demonstrate a computational quantum advantage with an efficient classical verification. 
 For example, suppose $G=(V,E)$ is an $n$-vertex graph
 that admits a partition of vertices $V=AB$ such that the subgraphs $G_A$ and $G_B$ induced by $A$ and $B$ have treewidth $O(\log{n})$. Suppose also that finding a low-treewidth partition as above 
 from scratch is a classically hard problem. The forrelation $\Phi$ based on such graph $G$
 can be efficiently (in time polynomial in $n$) approximated by the classical algorithm of Section~\ref{sec:graph_based} only if 
 a low-treewidth partition $AB$ is explicitly specified. Meanwhile, a quantum computer can 
 efficiently approximate the forrelation $\Phi$ in all cases. Imagine a classical verifier in possession of a secret low-treewidth partition $V=AB$ and a prover who claims to have a quantum computer. The verifier can 
challenge the prover to approximate the forrelation $\Phi$ without access to the secret partition $AB$.
Once the prover provides an estimate of $\Phi$, the verifier can efficiently test whether this
estimate is accurate enough using the algorithm of Section~\ref{sec:graph_based}.
The ability to pass this test may serve as a proof of quantumness, assuming that
graphs $G$ with the desired properties indeed exist.

\section{Acknowledgments}
D. Gosset and D. Grier are supported in part by IBM Research. D. Gosset also acknowledges the support of the Natural Sciences and Engineering Research Council of Canada and the Canadian Institute for Advanced Research.
This work is supported in part by the Army Research Office under Grant Number W911NF-20-1-0014. The views and conclusions contained in this document are those of the authors and should not be interpreted as representing the official policies, either expressed or implied, of the Army Research Office or the U.S. Government. The U.S. Government is authorized to reproduce and distribute reprints for Government purposes notwithstanding any copyright notation herein.


\begin{thebibliography}{10}

\bibitem{grover1997quantum}
Lov~K Grover.
\newblock Quantum mechanics helps in searching for a needle in a haystack.
\newblock {\em Physical review letters}, 79(2):325, 1997.

\bibitem{shor1999polynomial}
Peter~W Shor.
\newblock Polynomial-time algorithms for prime factorization and discrete
  logarithms on a quantum computer.
\newblock {\em SIAM review}, 41(2):303--332, 1999.

\bibitem{mosca1998hidden}
Michele Mosca and Artur Ekert.
\newblock The hidden subgroup problem and eigenvalue estimation on a quantum
  computer.
\newblock In {\em NASA International Conference on Quantum Computing and
  Quantum Communications}, pages 174--188. Springer, 1998.

\bibitem{ettinger2004quantum}
Mark Ettinger, Peter H{\o}yer, and Emanuel Knill.
\newblock The quantum query complexity of the hidden subgroup problem is
  polynomial.
\newblock {\em Information Processing Letters}, 91(1):43--48, 2004.

\bibitem{aaronson2010bqp}
Scott Aaronson.
\newblock {BQP} and the polynomial hierarchy.
\newblock In {\em Proceedings of the forty-second ACM symposium on Theory of
  computing}, pages 141--150, 2010.

\bibitem{aaronson2018forrelation}
Scott Aaronson and Andris Ambainis.
\newblock Forrelation: A problem that optimally separates quantum from
  classical computing.
\newblock {\em SIAM Journal on Computing}, 47(3):982--1038, 2018.

\bibitem{tal2019towards}
Avishay Tal.
\newblock Towards optimal separations between quantum and randomized query
  complexities.
\newblock {\em arXiv preprint arXiv:1912.12561}, 2019.

\bibitem{bansal2020kforrelation}
Nikhil Bansal and Makrand Sinha.
\newblock $k$-forrelation optimally separates quantum and classical query
  complexity.
\newblock arXiv eprint arXiv:2008.07003, 2020.

\bibitem{sherstov2020optimal}
Alexander~A Sherstov, Andrey~A Storozhenko, and Pei Wu.
\newblock An optimal separation of randomized and quantum query complexity.
\newblock {\em arXiv preprint arXiv:2008.10223}, 2020.

\bibitem{raz2019oracle}
Ran Raz and Avishay Tal.
\newblock Oracle separation of bqp and ph.
\newblock In {\em Proceedings of the 51st Annual ACM SIGACT Symposium on Theory
  of Computing}, pages 13--23, 2019.

\bibitem{farhi2014quantum}
Edward Farhi, Jeffrey Goldstone, and Sam Gutmann.
\newblock A quantum approximate optimization algorithm.
\newblock {\em arXiv preprint arXiv:1411.4028}, 2014.

\bibitem{brassard2002quantum}
Gilles Brassard, Peter Hoyer, Michele Mosca, and Alain Tapp.
\newblock Quantum amplitude amplification and estimation.
\newblock {\em Contemporary Mathematics}, 305:53--74, 2002.

\bibitem{aaab21}
Scott Aaronson, Andris Ambainis, Andrej Bogdanov, Krishnamoorthy Dinesh, and
  Cheung~Tsun Ming.
\newblock On quantum versus classical query complexity.
\newblock {\em Electron. Colloquium Comput. Complex.}, page 115, 2021.

\bibitem{beals2001quantum}
Robert Beals, Harry Buhrman, Richard Cleve, Michele Mosca, and Ronald De~Wolf.
\newblock Quantum lower bounds by polynomials.
\newblock {\em Journal of the ACM (JACM)}, 48(4):778--797, 2001.

\bibitem{aaronsonblogpost}
Scott Aaronson.
\newblock ``Yet more mistakes in papers'', blog post:
  \url{https://www.scottaaronson.com/blog/?p=5706}.

\bibitem{bravyi2019classical}
Sergey Bravyi, David Gosset, and Ramis Movassagh.
\newblock Classical algorithms for quantum mean values.
\newblock {\em arXiv preprint arXiv:1909.11485}, 2019.

\bibitem{van2004graphical}
Maarten Van~den Nest, Jeroen Dehaene, and Bart De~Moor.
\newblock Graphical description of the action of local {C}lifford
  transformations on graph states.
\newblock {\em Physical Review A}, 69(2):022316, 2004.

\bibitem{raussendorf2001one}
Robert Raussendorf and Hans~J Briegel.
\newblock A one-way quantum computer.
\newblock {\em Physical Review Letters}, 86(22):5188, 2001.

\bibitem{bremner2011classical}
Michael~J Bremner, Richard Jozsa, and Dan~J Shepherd.
\newblock Classical simulation of commuting quantum computations implies
  collapse of the polynomial hierarchy.
\newblock {\em Proceedings of the Royal Society A: Mathematical, Physical and
  Engineering Sciences}, 467(2126):459--472, 2011.

\bibitem{wang2018quantum}
Zhihui Wang, Stuart Hadfield, Zhang Jiang, and Eleanor~G Rieffel.
\newblock Quantum approximate optimization algorithm for {M}ax{C}ut: A
  fermionic view.
\newblock {\em Physical Review A}, 97(2):022304, 2018.

\bibitem{devos2004excluding}
Matt DeVos, Guoli Ding, Bogdan Oporowski, Daniel~P Sanders, Bruce Reed, Paul
  Seymour, and Dirk Vertigan.
\newblock Excluding any graph as a minor allows a low tree-width 2-coloring.
\newblock {\em Journal of Combinatorial Theory, Series B}, 91(1):25--41, 2004.

\bibitem{bravyi2019obstacles}
Sergey Bravyi, Alexander Kliesch, Robert Koenig, and Eugene Tang.
\newblock Obstacles to variational quantum optimization from symmetry
  protection.
\newblock {\em Physical Review Letters}, 125:260505, 2020.

\bibitem{nest2009simulating}
M~Nest.
\newblock Simulating quantum computers with probabilistic methods.
\newblock {\em arXiv preprint arXiv:0911.1624}, 2009.

\bibitem{clifford1990markov}
Peter Clifford.
\newblock Markov random fields in statistics.
\newblock {\em Disorder in physical systems: A volume in honour of John M.
  Hammersley}, pages 19--32, 1990.

\bibitem{bodlaender88}
Hans Bodlaender.
\newblock Planar graphs with bounded treewidth.
\newblock Technical Report RUUCS-88-14, Department of Computer Science, Utrecht
  University, the Netherlands (1988).

\bibitem{bodlaender1996linear}
Hans~L Bodlaender.
\newblock A linear-time algorithm for finding tree-decompositions of small
  treewidth.
\newblock {\em SIAM Journal on computing}, 25(6):1305--1317, 1996.

\bibitem{chartrand1971graphs}
Gary Chartrand, Dennis Geller, and Stephen Hedetniemi.
\newblock Graphs with forbidden subgraphs.
\newblock {\em Journal of Combinatorial Theory, Series B}, 10(1):12--41, 1971.

\bibitem{schraudolph2008efficient}
Nicol Schraudolph and Dmitry Kamenetsky.
\newblock Efficient exact inference in planar ising models.
\newblock {\em Advances in Neural Information Processing Systems},
  21:1417--1424, 2008.

\bibitem{gurobi}
LLC Gurobi~Optimization.
\newblock Gurobi optimizer reference manual, 2021.

\bibitem{clifford2018classical}
Peter Clifford and Rapha{\"e}l Clifford.
\newblock The classical complexity of boson sampling.
\newblock In {\em Proceedings of the Twenty-Ninth Annual ACM-SIAM Symposium on
  Discrete Algorithms}, pages 146--155. SIAM, 2018.

\bibitem{aaronson2016complexity}
Scott Aaronson and Lijie Chen.
\newblock Complexity-theoretic foundations of quantum supremacy experiments.
\newblock {\em arXiv preprint arXiv:1612.05903}, 2016.

\bibitem{kahanamoku2019forging}
Gregory~D Kahanamoku-Meyer.
\newblock Forging quantum data: classically defeating an {IQP}-based quantum
  test.
\newblock {\em arXiv preprint arXiv:1912.05547}, 2019.

\bibitem{kerzner}
David Gosset, Daniel Grier, Alex Kerzner, and Luke Schaeffer.
\newblock Fast simulation of planar clifford circuits, 2020.

\bibitem{aaronson2004improved}
Scott Aaronson and Daniel Gottesman.
\newblock Improved simulation of stabilizer circuits.
\newblock {\em Physical Review A}, 70(5):052328, 2004.

\bibitem{goldberg2017complexity}
Leslie~Ann Goldberg and Heng Guo.
\newblock The complexity of approximating complex-valued {I}sing and {T}utte
  partition functions.
\newblock {\em computational complexity}, 26(4):765--833, 2017.

\end{thebibliography}

\appendix

\section{Linear-time graph-based forrelation algorithm}
\label{sec:linear_time_sampling}

In this appendix, we prove Lemma~\ref{lem:linear_sampling_simple} which improves the running time of Lemma~\ref{lemma:sampling} to linear in $n$. The key to the proof is Theorem~\ref{thm:appendixmain} stated below, which solves a more general problem than is necessary for Lemma~\ref{lem:linear_sampling_simple}:
\begin{itemize}
	\item Qubits are generalized to qudits of dimension $d$.
	\item The starting state is a tensor product of mixed states, $\chi_{1}\otimes \cdots \otimes \chi_{n}$, rather than $H^{\otimes n} \ket{0^n}$. 
	\item There are operators $O_a$ for all qudits, and they need not have bounded norm. 
	\item The gates $U_1, \ldots, U_m$ may act on a set $S$ of qudits, and induce edges $\{i,j\}$ in $G$ for all $i, j \in S$. We do not require the product of the gates to be unitary.
\end{itemize}
We recall the problem from the introduction.
\begin{repproblem}{prob:forrsample}
	Given an initial state $\chi = \chi_{1} \otimes \cdots \otimes \chi_{n}$, a collection of diagonal gates $U_1, \ldots, U_m$, and single-qudit operators $\{ O_a \}_{a\in[n]}$ on the qudits, sample a string $x$ from the distribution $P$ where $P(x) \propto \bra{x} O U \chi U^{\dag} O^{\dag} \ket{x}$ and 
	\begin{align*}
		U &= \prod_{j=1}^{m} U_j, & O &= \bigotimes_{i=1}^{n} O_i.
	\end{align*} 
\end{repproblem}
As before, there is graph $G$ associated with an instance of this problem. 
\begin{dfn}
	We define the \emph{connectivity graph} $G = (V, E)$ where $V$ is the set of all qudits and
	\begin{align*}
	E &:= \{ (i,j) \in V \times V : \text{some $U_k$ acts on both $i$ and $j$} \}.
	\end{align*}
\end{dfn}

Our main result for this appendix is the following improvement of Lemma~\ref{lemma:sampling}:
\begin{theorem}
    \label{thm:appendixmain}
	Given an instance of Problem~\ref{prob:forrsample}, and a rooted tree decomposition $T$ of width $w$ for the connectivity graph $G=(V,E)$, there is a classical algorithm sampling a solution in time $O(nd^{2w} + m d^{w} + \ell)$ where $\ell$ is the length of the input, $m$ is the number of gates, and $n$ is the number of qudits.
\end{theorem}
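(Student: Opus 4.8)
The plan is to reduce the sampling task to computing conditional marginals of $P$ and then to carry out all of these computations within a single sweep over the given rooted tree decomposition, generalizing the tree--contraction engine of Lemmas~\ref{lem:treecontract} and~\ref{lem:sum2local} from binary variables to the $d^{2}$-dimensional ``doubled'' system obtained by vectorizing the mixed state $\chi$.

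First I would rewrite $P(x)\propto \Tr\!\big[(O^{\dag}|x\rangle\langle x|O)\,U\chi U^{\dag}\big]$ and observe that, for any prefix $x_{1}\cdots x_{\ell}$, summing $P$ over the remaining coordinates just replaces the rank-one single-qudit factor $O_{i}^{\dag}|x_{i}\rangle\langle x_{i}|O_{i}$ by $O_{i}^{\dag}O_{i}$ for each $i>\ell$; call this marginal $Q_{\ell}$. Vectorizing each state $\chi_{i}$ to a vector of dimension $d^{2}$, each diagonal gate $U_{k}$ on a qudit set $S_{k}$ to the diagonal factor $U_{k}(y_{S_{k}})\overline{U_{k}(y'_{S_{k}})}$ on the doubled indices, and each single-qudit operator above to a $d^{2}$-dimensional vector at its qudit, one obtains $Q_{\ell}(x_{1}\cdots x_{\ell})=\sum_{z\in([d]^{2})^{n}}F(z)$, where $F$ is a product of one-``qudit'' factors (one per qudit, depending on $\ell$ and the fixed prefix) and of gate factors, each gate factor supported on the clique $S_{k}$ of the connectivity graph $G$. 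Since each such clique lies inside a bag of the width-$w$ tree decomposition, every gate factor and every one-qudit factor can be folded into a bag, turning $Q_{\ell}$ into the sum of a two-local function over the alphabet $[d]^{2}$ on the tree $T$. By the tree-contraction argument of Lemmas~\ref{lem:treecontract} and~\ref{lem:sum2local}, applied with alphabet $[d]^{2}$, this contracts in time $O(n d^{2w})$, plus $O(m d^{w})$ to read and fold the $m$ gate tables (each $|S_{k}|\le w+1$) and $O(\ell)$ to read the input.

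A naive sampler would run such a contraction once per coordinate, costing $O(n^{2}d^{2w})$; instead I would use the standard linear-time sampler for tree-structured distributions --- one bottom-up message pass, then one top-down sampling pass. In the bottom-up pass, for each node $i$ with separator $S$ to its parent, compute the message $m_{i}(z_{S})$ obtained by contracting the subtree below $i$ with all one-qudit factors in the ``unfixed'' form $O_{a}^{\dag}O_{a}$; this costs $O(n d^{2w}+m d^{w})$ and produces the normalization $Z=Q_{0}$ at the root (if $Z=0$ we report that there is no valid output). In the top-down pass, process the nodes from the root and, at node $i$, sample the $x$-coordinates of the qudits first occurring in $B_{i}$ (its other qudits, lying in the separator to the parent, are already sampled): their conditional distribution is proportional to the product of the folded gate factors in $B_{i}$, the one-qudit factors in $B_{i}$ (the already-sampled coordinates fixed, the rest left free or in $O_{a}^{\dag}O_{a}$ form), and the upward messages $m_{c}$ from the children of $i$, all with the separator coordinates fixed. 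The running-intersection property guarantees the separator carries exactly the dependence linking the subtree below $i$ to the rest, so no further conditioning is needed; each node is handled in $O(d^{2w})$ time (sampling the new coordinates one at a time within the bag), the whole pass is $O(n d^{2w})$, and the total running time is $O(n d^{2w}+m d^{w}+\ell)$.

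I expect the main obstacle to be the top-down pass: arranging the bookkeeping so that, given the upward messages, the conditional of the newly-revealed coordinates at each bag is assembled and sampled within the claimed budget --- in particular sampling those coordinates one at a time, treating the still-unrevealed ones in $O_{a}^{\dag}O_{a}$ form, and correctly handling multi-qudit, possibly non-unitary gates $U_{k}$ by assigning each to one bag containing $S_{k}$. Correctness then follows from the chain rule together with the marginalization identity $O_{i}^{\dag}|x_{i}\rangle\langle x_{i}|O_{i}\mapsto O_{i}^{\dag}O_{i}$; mixed states and non-unitary gates need no special treatment, since the vectorization above assumes neither purity nor unitarity. Everything else --- vectorizing, folding factors into bags, the tree contraction itself --- is a routine extension of Lemmas~\ref{lem:treecontract} and~\ref{lem:sum2local}.
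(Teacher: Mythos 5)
Your overall architecture (vectorize to doubled indices of dimension $d^2$, fold the gate and one-qudit factors into bags, then a bottom-up message pass followed by a top-down sampling pass) matches the paper's, and your upward pass and its cost accounting are fine. The genuine gap is in the top-down pass: you claim that at node $i$ the conditional of the newly revealed qudits, given everything sampled so far, can be assembled from the factors folded into $B_i$, the upward messages from the children, and the already-sampled $x$-values on the separator, and that ``no further conditioning is needed'' by running intersection. That conditional-independence claim is false for the distributions of Problem~\ref{prob:forrsample}. The separator in the junction tree for the vectorized contraction consists of the doubled variables $z_a=(y_a,y'_a)$ of the separator qudits, not of their outcomes $x_a$; fixing $x_a$ does not screen the subtree off from the rest of the network. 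Concretely, take three qubits $a,b,c$ with $\chi_i=\ket{+}\bra{+}$, $U=\mathsf{CZ}_{ab}\,\mathsf{CZ}_{bc}$, and $O_i=H$, so the connectivity graph is the path $a$--$b$--$c$ with tree decomposition $\{a,b\}-\{b,c\}$ of width $1$. This is the $3$-qubit cluster state measured in the $X$ basis; its stabilizer $X_aX_c$ forces $x_a=x_c$, so $x_c$ depends on $x_a$ even after conditioning on $x_b$. Your step at the bag $\{b,c\}$ uses only quantities that are functions of $x_b$ and of factors local to that bag, so it cannot reproduce this dependence --- indeed it is not even specified what weight to put on the summed variable $z_b$, and that weight is exactly the missing object.

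What is needed, and what the paper's proof supplies, is a downward message over the doubled separator variables: the tensor $\Pi_B$, the contraction of everything outside the subtree rooted at $B$, postselected on all outcomes sampled so far. Computing $\Pi_{C_i}$ for a child moreover requires distinguishing siblings already processed (contracted with their sampled outcomes, the tensors $\sigma_{C_j}$) from siblings not yet processed (contracted with their outcomes marginalized, the tensors $\rho_{C_j}$); none of this bookkeeping follows from the chain rule together with the marginalization identity $O_a^{\dagger}\ket{x_a}\bra{x_a}O_a\mapsto O_a^{\dagger}O_a$ alone. Once these downward messages (each of size at most $d^{2w}$ over a separator) are added, the rest of your plan --- sampling the new qudits of each bag locally and the totals $O(nd^{2w}+md^{w}+\ell)$ --- goes through essentially as in the paper.
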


Before we go any further, we note that the algorithm shares some similarities with a result of Gosset et al. \cite{kerzner}. That result describes how to use a tree decomposition to help sample Pauli measurements on a graph state. Since a graph state is just the tensor product state $\ket{+}^{\otimes n}$ with diagonal CZ gates applied on the edges, the \emph{graph state sampling} problem they introduce is actually a special case of Problem~\ref{prob:forrsample}. However, Theorem~\ref{thm:appendixmain} gives a \emph{much} worse bound (exponential in treewidth, rather than polynomial) because it does not assume the state is Clifford. 

We divide the proof into three lemmas. First, we use a tree decomposition for $G$ to build a tensor network for $O U \chi U^{\dag} O^{\dag}$, and prove correctness of the network in Lemma~\ref{lem:networkinvariant}. Second, we present an algorithm in Lemma~\ref{lem:algorithm} to sample measurement outcomes from the tensor network, exploiting its tree structure. Third, we separate out the runtime analysis of the algorithm into Lemma~\ref{lem:runtime}.

We begin with the tensor network, $\mathcal{N}$, for the state $O U \chi U^{\dag} O^{\dag}$. Actually, it is easier if $O U \chi U^{\dag} O^{\dag}$ is a vector, so let $\opvec$ be the linear map which flattens operators to vectors with respect to the standard basis:
$$
\opvec( \ket{i} \bra{j} ) = \ket{i} \ket{j} \text{ for all $i, j$.}
$$
That is, operators on a Hilbert space $\mathcal{H}_1 \otimes \cdots \otimes \mathcal{H}_k$ map to vectors in $\bigotimes_{i=1}^{k} (\mathcal{H}_i \otimes \mathcal{H}_i)$. Then the goal is for $\mathcal{N}$ to be a tensor network for
\begin{equation}
\opvec(O U \chi U^{\dag} O^{\dag}) = (O \otimes \overline{O})(U \otimes \overline{U}) \opvec(\chi) = \bigotimes_i (O_i \otimes \overline{O_i}) \prod_j (U_j \otimes \overline{U_j}) \bigotimes_k \opvec \chi_k. \label{eqn:network}
\end{equation}
For example, one possible tensor network for $\opvec(O U \chi U^{\dag} O^{\dag})$ looks like a normal circuit: an initial state followed by a sequence of \emph{gate tensors} $U_j \otimes \overline{U_j}$ for each gate $U_j$, and finally $O_i \otimes \overline{O_i}$ \emph{operator tensors} for the operators. The wires are $d^2$-dimensional, the states are flattened density matrices, gates and operators are doubled as $U_j \otimes \overline{U_j}$ or $O_i \otimes \overline{O_i}$, but otherwise it is just like a normal circuit. 

Of course, we could readily simulate a circuit on a mixed state without the formalism of $\opvec$ and tensor networks, but we will also be using \emph{merge tensors} $M_b$ to combine two wires representing the same qudit $b$. If $\opvec(\chi_b) = \sum_{i} \alpha_i \ket{i}$ (note: this is a classical basis of size $d^2$) then we define the merge tensor $M_b$ as 
$$
M_b := \sum_{i : \alpha_i \neq 0} \alpha_i^{-1} \ket{i} (\bra{i} \otimes \bra{i}).
$$
We need three key properties of these tensors. See Figure~\ref{fig:identities} for a diagram of the first two. 

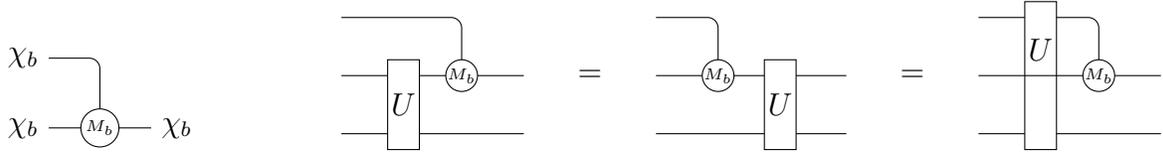
\begin{figure}
	\centering
		\begin{tikzpicture}[scale=1.200000,x=1pt,y=1pt]
\filldraw[color=white] (0.0, -11.0) rectangle (72.0, 33.0);
\draw[color=black, rounded corners] (12.500000,22.0) -- (36.0,22.0) -- (36,0);
\draw[color=black] (12.500000,0.0) -- (59.500000,0.0);
\draw[color=black] (20.0,22.0) node[fill=white,left,minimum height=22.0pt,minimum width=15.0pt,inner sep=0pt] {\phantom{$\chi_b$}};
\draw[color=black] (20.0,22.0) node[left] {$\chi_b$};
\draw[color=black] (20.0,0.0) node[fill=white,left,minimum height=22.0pt,minimum width=15.0pt,inner sep=0pt] {\phantom{$\chi_b$}};
\draw[color=black] (20.0,0.0) node[left] {$\chi_b$};
\begin{scope}
\draw[fill=white] (36.0, 0.0) circle(6.0pt);
\clip (36.0, 0.0) circle(6.0pt);
\draw (36.0, 0.0) node {\tiny $M_b$};
\end{scope}
\draw[color=black] (52.0,0.0) node[fill=white,right,minimum height=22.0pt,minimum width=15.0pt,inner sep=0pt] {\phantom{$\chi_b$}};
\draw[color=black] (52.0,0.0) node[right] {$\chi_b$};
\end{tikzpicture}
		\vspace{1em}\hspace{2em}
		\begin{tikzpicture}[scale=1,x=1pt,y=1pt]
\filldraw[color=white] (0, -11) rectangle (335, 55);
\draw[color=black,rounded corners] (12.500000,44) -- (58,44) -- (58,22);
\draw[color=black,rounded corners] (131.500000,44) -- (155,44) -- (155,22);
\draw[color=black,rounded corners] (253.500000,44) -- (299,44) -- (299,22);
\draw[color=black] (12.500000,22) -- (81.500000,22);
\draw[color=black] (131.500000,22) -- (203.500000,22);
\draw[color=black] (253.500000,22) -- (322.500000,22);
\draw[color=black] (12.500000,0) -- (81.500000,0);
\draw[color=black] (131.500000,0) -- (203.500000,0);
\draw[color=black] (253.500000,0) -- (322.500000,0);
\draw (36,22) -- (36,0);
\begin{scope}
\draw[fill=white] (36, 11) +(-45:8.485281pt and 24.041631pt) -- +(45:8.485281pt and 24.041631pt) -- +(135:8.485281pt and 24.041631pt) -- +(225:8.485281pt and 24.041631pt) -- cycle;
\clip (36, 11) +(-45:8.485281pt and 24.041631pt) -- +(45:8.485281pt and 24.041631pt) -- +(135:8.485281pt and 24.041631pt) -- +(225:8.485281pt and 24.041631pt) -- cycle;
\draw (36, 11) node {$U$};
\end{scope}
\begin{scope}
\draw[fill=white] (58, 22) circle(6pt);
\clip (58, 22) circle(6pt);
\draw (58, 22) node {\tiny $M_b$};
\end{scope}
\draw[fill=white,color=white] (99, -6) rectangle (114, 50);
\draw (106.500000, 22) node {$=$};
\begin{scope}
\draw[fill=white] (155, 22) circle(6pt);
\clip (155, 22) circle(6pt);
\draw (155, 22) node {\tiny $M_b$};
\end{scope}
\draw (178.500000,22) -- (178.500000,0);
\begin{scope}
\draw[fill=white] (178.500000, 11) +(-45:8.485281pt and 24.041631pt) -- +(45:8.485281pt and 24.041631pt) -- +(135:8.485281pt and 24.041631pt) -- +(225:8.485281pt and 24.041631pt) -- cycle;
\clip (178.500000, 11) +(-45:8.485281pt and 24.041631pt) -- +(45:8.485281pt and 24.041631pt) -- +(135:8.485281pt and 24.041631pt) -- +(225:8.485281pt and 24.041631pt) -- cycle;
\draw (178.500000, 11) node {$U$};
\end{scope}
\draw[fill=white,color=white] (221, -6) rectangle (236, 50);
\draw (228.500000, 22) node {$=$};
\draw (277,44) -- (277,0);
\begin{scope}
\draw[fill=white] (277, 22) +(-45:8.485281pt and 39.597980pt) -- +(45:8.485281pt and 39.597980pt) -- +(135:8.485281pt and 39.597980pt) -- +(225:8.485281pt and 39.597980pt) -- cycle;
\clip (277, 22) +(-45:8.485281pt and 39.597980pt) -- +(45:8.485281pt and 39.597980pt) -- +(135:8.485281pt and 39.597980pt) -- +(225:8.485281pt and 39.597980pt) -- cycle;
\draw (277, 32) node {$U$};
\end{scope}
\draw[color=black] (271, 22) -- (283, 22);
\begin{scope}
\draw[fill=white] (299, 22) circle(6pt);
\clip (299, 22) circle(6pt);
\draw (299, 22) node {\tiny $M_b$};
\end{scope}
\end{tikzpicture}
	\caption{Identities with the merge tensor. (1) Two copies of $\chi_b$ are merged into one. (2) A gate tensor for diagonal $U$ can be moved to any one of the three arms.}
	\label{fig:identities}
\end{figure}

\begin{prop}
	\label{prop:mergetensor}
	For any qudit $b$, the merge tensor $M_b$ has the following properties:
	\begin{enumerate}
	\item Applying $M_b$ to two copies of $\opvec(\chi_b)$ produces $\opvec(\chi_b)$. 
	$$M_b(\opvec(\chi_b) \otimes \opvec(\chi_b)) = \opvec(\chi_b).$$
	\item Any diagonal gate $U$ commutes through from either input of $M_b$ to the output.
	$$(M_b \otimes I) (I \otimes U) = U (M_b \otimes I).$$
	\item $M_b(\ket{u} \otimes \ket{v}) = \ket{c} \circ \ket{u} \circ \ket{v}$ where $c$ is a constant vector and $\circ$ denotes entrywise multiplication. 
	\end{enumerate}
\end{prop}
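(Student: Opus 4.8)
The plan is to prove all three items by direct computation from the definition of the merge tensor, with no auxiliary machinery. Write $\opvec(\chi_b) = \sum_{i} \alpha_i \ket{i}$, where $i$ ranges over the $d^2$ standard basis vectors of the doubled Hilbert space of qudit $b$, so that by definition
\[
M_b = \sum_{i \,:\, \alpha_i \neq 0} \alpha_i^{-1} \ket{i}\bigl(\bra{i}\otimes\bra{i}\bigr).
\]
The only two elementary facts I will use are that $(\bra{i}\otimes\bra{i})(\ket{u}\otimes\ket{v}) = u_i v_i$ for vectors $\ket{u}=\sum_i u_i\ket{i}$, $\ket{v}=\sum_i v_i\ket{i}$, and that a diagonal operator $U = \sum_i \lambda_i \ket{i}\bra{i}$ satisfies $\bra{i}U = \lambda_i \bra{i}$ and $U\ket{i} = \lambda_i\ket{i}$.

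For item~3 I would apply $M_b$ to $\ket{u}\otimes\ket{v}$ termwise, obtaining $M_b(\ket{u}\otimes\ket{v}) = \sum_{i\,:\,\alpha_i\neq 0} \alpha_i^{-1} u_i v_i \ket{i}$; defining the fixed vector $\ket{c}$ by $c_i := \alpha_i^{-1}$ when $\alpha_i\neq 0$ and $c_i := 0$ otherwise (so $\ket{c}$ depends only on $\chi_b$), the right-hand side is exactly $\ket{c}\circ\ket{u}\circ\ket{v}$, since the indices with $\alpha_i = 0$ contribute $0$ on both sides. Item~1 is then the special case $\ket{u}=\ket{v}=\opvec(\chi_b)$: the $i$-th output entry is $\alpha_i^{-1}\alpha_i^2 = \alpha_i$ when $\alpha_i\neq 0$ and $0$ otherwise, so the output is $\sum_i \alpha_i\ket{i} = \opvec(\chi_b)$, the dropped terms being zero anyway.

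For item~2 I would first treat the case where $U$ acts only on the merged wire of qudit $b$. Writing $U = \sum_i \lambda_i\ket{i}\bra{i}$, pushing $U$ into one input of $M_b$ gives $M_b(I\otimes U) = \sum_{i\,:\,\alpha_i\neq 0}\alpha_i^{-1}\ket{i}\bigl(\bra{i}\otimes(\bra{i}U)\bigr) = \sum_{i\,:\,\alpha_i\neq 0}\alpha_i^{-1}\lambda_i\ket{i}\bigl(\bra{i}\otimes\bra{i}\bigr)$, while $U M_b = \sum_{i\,:\,\alpha_i\neq 0}\alpha_i^{-1}\lambda_i\ket{i}\bigl(\bra{i}\otimes\bra{i}\bigr)$; these coincide, and the symmetric statement (gate on the other input) is the same computation with $\bra{i}U$ moved to the first factor. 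The remaining point is the $\otimes I$ in the statement, i.e.\ a gate tensor $U_j \otimes \overline{U_j}$ acting jointly on wire $b$ and other wires: since this tensor is still diagonal in the product standard basis, the computation above runs entry-by-entry on the $b$-wire indices with the spectator indices unchanged, so the identity localizes to wire $b$. I expect this bookkeeping for multi-qudit diagonal gates to be the only, and rather mild, obstacle; the underlying algebra is immediate.
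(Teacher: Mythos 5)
Your proposal is correct and follows essentially the same route as the paper: all three properties by direct computation from the definition of $M_b$, with the multi-qudit case of property~2 handled by exploiting diagonality on the $b$-wire (your blockwise/spectator-index argument is just the paper's decomposition $U=\sum_j \ket{j}\bra{j}\otimes U_j$ read the other way, and it goes through exactly as you expect). The only cosmetic difference is that you obtain property~1 as a special case of property~3 rather than by a separate one-line calculation.
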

\begin{proof}
	For the first property, we have the calculation
	\begin{align*}
	M_b(\opvec(\chi_b) \otimes \opvec(\chi_b)) &= \sum_{i : \alpha_i \neq 0} \alpha_i^{-1} (\bra{i} \opvec(\chi_b))^2 \ket{i}
	 = \sum_{i : \alpha_i \neq 0} \alpha_i \ket{i} = \opvec(\chi_b).
	\end{align*}
	Now consider a diagonal gate $U$ and decompose it as $U = \sum_{j} \ket{j} \bra{j} \otimes U_j$. Then 
	\begin{align*}
	(M_b \otimes I) (I \otimes U) &= \left( \sum_{i : \alpha_i \neq 0} \alpha_i^{-1} \ket{i} (\bra{i} \otimes \bra{i}) \otimes I \right) \left( \sum_{j} I \otimes \ket{j} \bra{j} \otimes U_j \right) \\
	 &= \left( \sum_{i : \alpha_i \neq 0} \sum_j \delta_{ij} \alpha_i^{-1} \ket{i} (\bra{i} \otimes \bra{j}) \otimes AUj \right) \\
	 &= \left( \sum_{j} \ket{j} \bra{j} \otimes U_j \right) \left( \sum_{i : \alpha_i \neq 0} \alpha_i^{-1} \ket{i} (\bra{i} \otimes \bra{i}) \right) \\
	 &= U (M_b \otimes I).
	\end{align*}
	The final property is straightforward when we take $\ket{c} = \sum_{i}c_i \ket{i}$ where
	$$
	c_i = \begin{cases} \alpha_i^{-1} & \text{if $\alpha_i \neq 0$} \\
	0 & \text{if $\alpha_i= 0$.}
	\end{cases}
	$$
\end{proof}

Finally, we can define the tensor network $\mathcal{N}$. We construct $\mathcal{N}$ from the tree decomposition $T$ by creating a subnetwork $\mathcal{N}_B$ for each node $B$ in $T$, and connecting subnetworks of neighbouring nodes in $T$. For organizational purposes, each wire in the network is directed, and labeled by a qudit. Then for each child $C$ of $B$ in the tree, $\mathcal{N}_B$ has incoming wires from $\mathcal{N}_C$ for each qudit in $B \cap C$. There are outgoing wires for each qudit in $B$; the parent $A$ of $B$ takes wires $A \cap B$ and the remaining $B \backslash A$ are free. Since each vertex ``leaves'' a tree decomposition (appears in some node, but not its parent) exactly once, $\mathcal{N}$ has one free leg for each vertex (qudit). See Figure~\ref{fig:network1} for an example tensor network. 

Within each subnetwork $\mathcal{N}_B$, there are four kinds of tensors that can appear: state tensors (i.e., $\opvec(\chi_i)$), gate tensors (i.e., $U_j \otimes \overline{U_j}$), operator tensors (i.e., $O_i \otimes \overline{O_i}$), and merge tensors ($M_b$). These are split across four stages appearing in the following order. 
\begin{description}	
	\item[Merge stage:] We have incoming wires from child subnetworks $\mathcal{N}_{C_i}$, and there may be more than one wire for some qudit $b \in B$. Use the merge tensor $M_b$ to combine wires for $b$ in pairs until there is only one wire for qudit $b$. 
	\item[Introduction stage:] If some qudit $b$ in $B$ does not appear in any child of $B$, then add the tensor $\opvec(\chi_b)$, so there is a wire for that qudit leading into the next stage. 
	\item[Gate stage:] At this point, there is exactly one incoming wire per qudit in $B$. Apply gate tensors for all gates which (i) act only on qudits in $B$, and (ii) touch some qudit in $B \backslash A$ where $A$ is $B$'s parent (or $A = \emptyset$ if $B$ is the root). That is, apply $\prod_{j \in S} U_j \otimes \overline{U_j}$ for the prescribed set $S$.
	\item[Operator stage:] Wires for qudits in $A \cap B$ go to the parent; the remaining $B \backslash A$ qudits are due to be measured, so we apply operator tensors to these qudits. 
\end{description}

\begin{lemma}
	\label{lem:networkinvariant}
	The tensor network $\mathcal{N}$ contracts to $\opvec(O U \chi U^{\dag} O^{\dag})$. 
\end{lemma}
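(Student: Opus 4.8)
The plan is to prove the claim by induction on the rooted tree $T$, contracting the network from the leaves toward the root while tracking a precise invariant for the partially contracted object. For a node $B$ of $T$ write $W_B$ for the union of all bags in the subtree of $T$ rooted at $B$, and let $\mathcal{G}_B$ denote the set of indices $j$ such that the gate $U_j$ is applied inside that subtree (i.e.\ the topmost bag containing the support $S_j$ of $U_j$ lies in the subtree). Let $\psi_B$ be the tensor obtained by contracting $\mathcal{N}_C$ over all $C$ in the subtree of $B$, but using only the merge, introduction and gate stages of $\mathcal{N}_B$ itself (not its operator stage); its free legs are one per qudit in $B$ together with one per qudit in $W_B\setminus B$. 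The invariant I would establish is
\[
\psi_B = \Bigl(\bigotimes_{v\in W_B\setminus B}(O_v\otimes\overline{O_v})\Bigr)\Bigl(\prod_{j\in\mathcal{G}_B}(U_j\otimes\overline{U_j})\Bigr)\bigotimes_{v\in W_B}\opvec(\chi_v),
\]
where wires are identified with qudits via the leg labels and the product of gate tensors may be taken in any order (the gates are all diagonal, hence mutually commuting). Applied to the root $r$, where $W_r=V$ and $\mathcal{G}_r$ is the set of all gates, this together with the operator stage at $r$ and the identity~\eqref{eqn:network} yields $\opvec(OU\chi U^{\dag}O^{\dag})$, which is the lemma.

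For the base case ($B$ a leaf of $T$) the merge stage is empty, the introduction stage produces $\opvec(\chi_v)$ for every $v\in B$, and the gate stage applies exactly the gates whose support lies in $B$ but not in the parent bag — which one checks, using that each $T_v$ is a subtree of $T$ and the Helly property of subtrees of a tree (so the supports' topmost common bag is well defined), are precisely the gates whose topmost common bag equals $B$. Hence $\psi_B$ has the claimed form with $W_B\setminus B=\emptyset$.

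For the inductive step I would combine the invariants at the children $C_1,\dots,C_s$ of $B$. Feeding those partial states into $\mathcal{N}_B$, the merge stage contracts, for each qudit $b\in B$ appearing in several children, the corresponding incoming wires with merge tensors $M_b$. Here is where Proposition~\ref{prop:mergetensor} does the work: property~(2) lets every (diagonal) gate tensor already applied to a copy of $b$ in some child be commuted through $M_b$ to its output wire, so the order in which gates were applied across different children is immaterial; and property~(1) then collapses the several copies of $\opvec(\chi_b)$ coming from the children into a single $\opvec(\chi_b)$. The introduction stage supplies $\opvec(\chi_v)$ for the vertices of $B$ not yet present, the gate stage applies the gates whose topmost common bag is $B$, and one checks — again from the subtree and Helly structure — that this accounts for every gate in $\mathcal{G}_B$ exactly once and no gate twice; similarly each $\opvec(\chi_v)$ and, in the operator stages scattered over the tree, each $O_v\otimes\overline{O_v}$ is produced exactly once (at the topmost bag of $T_v$). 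Reassembling gives $\psi_B$ in the claimed form.

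The main obstacle I anticipate is making the merge bookkeeping airtight: different children may each have applied gates touching a shared qudit $b$, so before merging there can be several $\opvec(\chi_b)$ copies carrying different gate ``histories,'' and the argument must use the commutation property (Proposition~\ref{prop:mergetensor}(2)) carefully — pulling all such gates past the relevant $M_b$'s before invoking the collapse identity (Proposition~\ref{prop:mergetensor}(1)) — and must verify via the tree-decomposition axioms that this reshuffling never creates or destroys a gate or state tensor, and that a multi-qudit gate is only ever applied in a bag containing all of its qudits. Everything else is routine tensor-network contraction.
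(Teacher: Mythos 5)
Your proposal is correct and rests on exactly the same two ingredients as the paper's proof---Proposition~\ref{prop:mergetensor}(1)--(2) to commute diagonal gate tensors through the merge tensors and collapse duplicate copies of $\opvec(\chi_b)$, and the Helly property of the subtrees $T_v$ to show each gate tensor occurs at precisely one node---so it is essentially the paper's argument, repackaged as a leaves-to-root induction with an explicit invariant instead of the paper's global elimination of merge tensors (in their partial order) followed by inspection of the residual circuit. The only wording to tighten is the claim that each $\opvec(\chi_v)$ is ``produced exactly once'': the network introduces a copy at every leaf of $T_v$, and it is only after the merges (which your inductive step already handles via Proposition~\ref{prop:mergetensor}) that a single copy survives in the invariant, whereas it is the operator tensor $O_v\otimes\overline{O_v}$ that appears exactly once, at the topmost bag of $T_v$.
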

\begin{proof}
	We have already established that $\mathcal{N}$ has a free wire for each qudit, so the tensor network has the right dimension.  
	
	Next, note that the edges between subnetworks are directed towards the root of the tree, and edges within subnetworks are directed through the sequence of stages ($\to$ merge $\to$ introduction $\to$ gate $\to$ operator $\to$), so the full network is a directed acyclic graph. As a result, individual merge tensors are partially ordered. If there are any merge tensors in $\mathcal{N}$, then there is an ``earliest'' merge tensor $M_b$, preceded only by gate tensors and state tensors (operator tensors are last under this ordering, so they do not interfere). By the second property of Proposition~\ref{prop:mergetensor}, we can commute the gate tensors past $M_b$, until both input wires to $M_b$ come directly from state tensors. 
	
	Recall that all wires are labeled, and note that all the wires of a merge tensor $M_b$ are labeled $b$ in the original network. This is preserved when we commute gates through with Proposition~\ref{prop:mergetensor}. It follows that the inputs to $M_b$ are labeled $b$, and therefore are $\opvec(\chi_b)$ state tensors. By the first property of Proposition~\ref{prop:mergetensor}, we replace $M_b$ and its inputs with $\opvec(\chi_b)$. Repeat the whole process with the next earliest merge tensor, until we eliminate all merge tensors. 
	
	The remaining network has only state tensors, gate tensors, and operator tensors, but it contracts to the same tensor as the original network, $\mathcal{N}$. The new network is essentially a circuit: we can follow each output wire back through the circuit to the input, a state tensor. There is a free wire and operator tensor for each qudit $i$ at the \emph{end} of the circuit, so there must be a matching state tensor $\opvec(\chi_i)$ at the \emph{start} of the circuit. It only remains to check that we have the right set of gate tensors. Since the gate tensors commute, it will follow that the network is correct by (\ref{eqn:network}). 
	
	Let $U_k$ be an arbitrary gate, on some subset of qudits $S \subseteq V$. For each $i$, let $T_i$ be the non-empty subtree of $T$ with all nodes containing $i$. Note that any $i, j \in S$ define an edge in $G$, so $i,j$ appear together somewhere in $T$, i.e., $T_i \cap T_j \neq \emptyset$. The intersection of a collection of pairwise intersecting subtrees is non-empty, so there is some highest node $B$ in $\bigcap_{s \in S} T_s$. We conclude that the gate tensor $U_k \otimes \overline{U_k}$ appears in the gate stage of $\mathcal{N}_B$ because it contains all the necessary vertices $S$ (it is in $T_i$ for all $i \in S$), but its parent does not (or $B$ would not be highest). 
	
	Finally, we claim this occurrence is unique. $U_k$ cannot appear in some $B'$ below $B$ because there is no vertex $b$ that leaves (i.e., isn't in the parent of $B'$) since $b$ is in $B$, $B'$, and all the nodes in between by tree decomposition definitions. On the other hand, $U_k$ cannot appear outside the subtree rooted at $B$, since there \emph{is} some vertex $b \in S$ missing from $B$'s parent, and hence in the rest of the tree.
\end{proof}
We encourage the reader to work through the steps on the example in Figure~\ref{fig:network1} to help visualize the process. 

It is perhaps not surprising that a tensor network $\mathcal{N}$ derived from a bounded-treewidth graph can be sampled efficiently, but we could not find a blackbox way to get the runtime we want from existing algorithms.\footnote{We depend too much on the gates being diagonal, among other things.} We describe a new contraction-based algorithm, while arguing its correctness, in Lemma~\ref{lem:algorithm}, and then analyze its running time in Lemma~\ref{lem:runtime}. 

The algorithm works with subtrees of $T$ and corresponding subnetworks of $\mathcal{N}$, so let us introduce some notation for this. For a node $B$, let $T_B$ be the subtree of $T$ rooted at $B$. Given an arbitrary subtree $T'$ of $T$, let $\mathcal{N}(T')$ be the restriction of $\mathcal{N}$ to nodes appearing in $T'$. 

\begin{lemma}
	\label{lem:algorithm}
	There is a classical algorithm (described within) which solves Problem~\ref{prob:forrsample} given an instance and an accompanying tree decomposition $T$.
\end{lemma}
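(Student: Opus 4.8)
The plan is to give a two-phase contraction procedure on the tensor network $\mathcal{N}$ of Lemma~\ref{lem:networkinvariant} and then prove by induction that the string it outputs has law $P$. Recall that each free wire of $\mathcal{N}$ is labeled by a distinct qudit $a$ and is produced in the operator stage of the subnetwork $\mathcal{N}_B$ at the unique node $B=\mathrm{leave}(a)$ where $a$ leaves $T$; in particular $\{a:\mathrm{leave}(a)=B\}=B\setminus A$ where $A$ is the parent of $B$, and the only wires joining $\mathcal{N}(T_B)$ to the rest of $\mathcal{N}$ are the $|A\cap B|\le w+1$ wires (each of dimension $d^2$) on the edge from $B$ to $A$. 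We use the trace covector $\tau:=\opvec(I_d)=\sum_{x}\ket{x}\ket{x}$: contracting a free wire of $\opvec(\sigma)$ against $\tau$ traces out that qudit, while contracting it against $\ket{x}\ket{x}$ fixes that qudit's measurement outcome to $x$. Since $OU\chi U^{\dagger}O^{\dagger}$ is positive semidefinite, all weights produced below will be nonnegative.

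\medskip

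\textbf{Phase 1 (leaves to root).} Processing the nodes of $T$ in post-order, for each node $B$ I would compute the \emph{upward message} $\mu_B$, a tensor on the edge-$(B,A)$ wires (at most $d^{2(w+1)}$ entries), by contracting $\mathcal{N}(T_B)$ with (i) each child subnetwork replaced by its already-computed $\mu_C$ and (ii) every free wire of $\mathcal{N}(T_B)$ contracted against $\tau$; concretely, run the merge/introduction/gate stages of $\mathcal{N}_B$ on the $\mu_C$'s and trace out the operator-stage outputs for $a\in B\setminus A$. At the root $B_1$ one keeps those outputs open; by Lemma~\ref{lem:networkinvariant} the resulting tensor $\rho^{(B_1)}$ on the wires of $B_1$ satisfies $\langle\bigotimes_{a\in B_1}\ket{x_a}\ket{x_a}\,|\,\rho^{(B_1)}\rangle=\sum\bra{x'}OU\chi U^{\dagger}O^{\dagger}\ket{x'}$ over all $x'$ agreeing with $x$ on $B_1$, i.e.\ it encodes the unnormalized $P$-marginal on $B_1$.

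\medskip

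\textbf{Phase 2 (root to leaves).} Traverse $T$ in depth-first pre-order, sampling $(x_a)_{a\in B\setminus A}$ on visiting $B$. The invariant is that on reaching $B$ we hold a \emph{conditional message} $\sigma_B$ on the edge-$(B,A)$ wires, equal to all of $\mathcal{N}$ outside $T_B$ contracted with every already-sampled qudit fixed to its outcome and every not-yet-sampled qudit outside $T_B$ traced against $\tau$; this is well defined because every $a\in A\cap B$ has $\mathrm{leave}(a)$ strictly above $B$ and so is already sampled, and $\sigma_{B_1}$ is the trivial (scalar) contraction. Contract $\sigma_B$ with the merge/introduction/gate stages of $\mathcal{N}_B$, the children's upward messages $\mu_C$, and the operator tensors for $a\in B\setminus A$ (leaving those wires open) to get $\rho^{(B)}$; we sample $(x_a)_{a\in B\setminus A}$ from the normalization of $x\mapsto\langle\bigotimes_{a\in B\setminus A}\ket{x_a}\ket{x_a}|\rho^{(B)}\rangle$ (if these weights all vanish, report the degenerate ``no valid output'' case of the problem's footnote --- which can only occur at the root, and only when $OU\chi U^{\dagger}O^{\dagger}=0$). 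To recurse into a child $C_i$ of $B$ (children taken in DFS order) we pass it $\sigma_{C_i}$ built from $\sigma_B$, the merge/introduction/gate stages of $\mathcal{N}_B$ with the $B\setminus A$ wires set to the sampled values, the \emph{fixed} messages $\tilde\mu_{C_1},\dots,\tilde\mu_{C_{i-1}}$ of the siblings already fully processed (each $\tilde\mu_{C_j}$ being $\mathcal{N}(T_{C_j})$ contracted with all its free wires set to the values sampled inside $T_{C_j}$, a tensor on $\le w+1$ wires obtained by a local bottom-up sweep of $T_{C_j}$), and the plain $\mu_{C_{i+1}},\dots$ of the siblings not yet processed; once all children are done, the analogous contraction of $\mathcal{N}_B$'s stages against the $\tilde\mu_{C_i}$'s and the fixed $B\setminus A$ outcomes yields $\tilde\mu_B$, returned to the parent.

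\medskip

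\textbf{Correctness, and the hard part.} I would prove correctness by induction over the DFS traversal, and this is where essentially all the work is. The engine is the factorization --- immediate from Lemma~\ref{lem:networkinvariant} together with the fact that $T_B$ touches the rest of $\mathcal{N}$ only along the $\le w+1$ boundary wires --- that for every fixing $x$ and every fixing of the boundary wires, $\langle\bigotimes_a\ket{x_a}\ket{x_a}|\mathcal{N}\rangle$ is a single contraction of ``$\mathcal{N}(T_B)$ with its free wires set by $x$'' against ``the rest of $\mathcal{N}$ with its free wires set by $x$''. Summing this over the coordinates of $x$ not yet sampled, and specializing the boundary wires together with the already-sampled coordinates, identifies $\sigma_B$ and the $\mu_C$ with precisely the partial contractions appearing in $\rho^{(B)}$, so that $\rho^{(B)}$ realizes the true conditional $P$-marginal of $(x_a)_{a\in B\setminus A}$; the product of these conditionals along the traversal telescopes to $P(x)$. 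The one genuinely delicate point --- the main obstacle --- is purely bookkeeping: keeping the meaning of ``already sampled'' versus ``traced out'' consistent across each induction step (hence the use of $\tilde\mu$, not $\mu$, for completed sibling subtrees), verifying that the qudits of $A\cap B$ are all sampled before $B$ is processed, and verifying that every qudit is sampled exactly once, at $\mathrm{leave}(a)$. The running time of the procedure is analyzed separately in Lemma~\ref{lem:runtime}.
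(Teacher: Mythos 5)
Your proposal is correct and follows essentially the same route as the paper: a bottom-up pass computing partial contractions of each subtree ($\mu_B$, the paper's $\rho_B$), followed by a top-down DFS that maintains a conditional ``outside'' message (your $\sigma_B$, the paper's $\Pi_B$) and post-sampling subtree messages (your $\tilde\mu_B$, the paper's $\sigma_B$), sampling the qudits $B\setminus A$ at each node and arguing correctness by the same induction that each local contraction realizes the state post-selected on the outcomes sampled so far. The only differences are cosmetic (which boundary wires are kept open in the upward messages, and making the zero-state degenerate case explicit), so no changes are needed.
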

\begin{proof}
    Assume the tree decomposition $T$ is rooted and ordered, and we have constructed the tensor network $\mathcal{N}$. The algorithm traverses $T$ (and the corresponding pieces of the network) twice: first to compute tensors $\rho_B$ for all nodes $B$, then again to compute tensors $\sigma_B$, $\Pi_B$, and the sampled outcomes $\ket{x_b}$.
	\begin{enumerate}
		\item Tensor $\rho_{B}$ is the contraction of $\mathcal{N}(T_B)$. This represents the initial state of the subtree network before any sampling. 
		\item Tensor $\sigma_{B}$ is the contraction of $\mathcal{N}(T_B)$ with the sampled outcomes $\opvec(\ket{x_b} \bra{x_b})$ for all qudits measured in the subtree $T_B$. This represents the final state of the subtree as the second traversal leaves it.
		\item Tensor $\Pi_B$ is the contraction of $\mathcal{N}(T \backslash T_B)$ with sampled outcomes $\opvec(\ket{x_b} \bra{x_b})$ for all qudits measured in nodes visited before $B$ in the second traversal. $\Pi_B$ represents the rest of the network during the (second) traversal of $T_B$: the contraction everything outside $T_B$, postselected on the current sampled outcomes at the moment the traversal descends into $B$ and $T_B$. 
	\end{enumerate}
	These three tensors are explicit, i.e., stored as vectors of complex numbers at some point in the algorithm. As an \emph{essential} performance optimization, we restrict $\rho_{B}$ to qudits $B$, and restrict $\sigma_{B}$ and $\Pi_{B}$ to qudits $A \cap B$ where $A$ is the parent of $B$---we trace out all other qudits in these tensors. Since tracing out a qudit is itself a tensor contraction, and the order of contractions in a tensor network does not affect the result, we can reason about the \emph{unrestricted} versions of $\rho_{B}$, $\sigma_{B}$, and $\Pi_{B}$, with the understanding that we still have to trace out the qudits at the end. 
	
	In the first traversal, the algorithm computes $\rho_{B}$ for each $B$ from the bottom up. We simply take $\rho_{C_1}, \ldots, \rho_{C_k}$ for $B$'s children $C_1, \ldots, C_k$ (if they exist), contract them with $\mathcal{N}_B$, call the result $\rho_B$ and pass it up the tree.

	The second traversal works from the top down. See Figure~\ref{fig:complicatedtraversal} for a diagram of the steps in node $B$. In this pass, we assume the algorithm gets $\Pi_B$ from the parent when it first visits $B$, and returns $\sigma_B$ to the parent after traversing the subtree ($T_B$). The base case is the root $R$, where $\Pi_{R}$ is an empty tensor with no wires. Then for an arbitrary node $B$ in the tree, we notice that $\Pi_{B}$ and $\rho_{B}$ represent two halves of the network, and both are available to the algorithm when it first arrives at $B$. Thus, the algorithm contracts $\rho_{B}$ and $\Pi_B$ to get $\mathcal{N}$ (restricted to qudits $B$), and this is enough to sample measurement outcomes for qudits $B \backslash A$ in $B$ which do not connect to the parent, $A$. 
	
	Next, the algorithm visits the children $C_1, \ldots, C_k$ of $B$. Since we need to provide $\Pi_{C_i}$ to traverse $C_i$, we contract $\Pi_{B}$, $\mathcal{N}_B$, projectors onto the measurement outcomes for $B$, and either $\rho_{C_j}$ (if $j > i$) or $\sigma_{C_j}$ (if $j < i$) for each sibling of $C_i$ to get $\Pi_{C_i}$. Then the algorithm recursively traverses $T_{C_i}$, samples measurement outcomes within the subtree, and returns $\sigma_{C_i}$.
	
	Finally, once we have traversed all the subtrees of $B$, we visit $B$ once more to compute $\sigma_{B}$ by contracting $\mathcal{N}_B$, projections onto the sampled outcomes for qudits measured in $B$, and $\sigma_{C_i}$ for each child of $B$. The result is $\sigma_{B}$, and we pass it to the parent. 
	
	Let us consider correctness. Consider an arbitrary qudit $b$ (measured in some node $B$), and suppose inductively that the qudits sampled before it were sampled with the correct distribution. The algorithm samples $b$ by contracting $\mathcal{N}_B$ with $\Pi_B$ and possibly $\rho_{C_1}, \ldots, \rho_{C_k}$. We have already argued that the tensors $\Pi_B$, $\rho_{C_1}, \ldots, \rho_{C_k}$ represent the contractions of $\mathcal{N}(T\backslash T_B)$, $\mathcal{N}(T_{C_1}), \ldots, \mathcal{N}(T_{C_k})$ claimed at the start of this proof, so the main contraction gives the state $O U \chi U^{\dag} O^{\dag}$ post-selected on the measurement outcomes so far. Unless this state is zero\footnote{If the post-selected state is zero then $O U \chi U^{\dag} O^{\dag}$ must have been zero from the start, since we have only post-selected on outcomes with positive probability.}, the algorithm can renormalize the state and sample $b$ from it. Hence, $b$ has the correct distribution, and therefore the algorithm correctly samples. 
\end{proof}

A significant portion of the runtime analysis depends on the cost of contracting tensors together. We will need the following fact about tensor contraction.
\begin{fact}
	\label{fact:basiccontraction}
	Given two tensors $X$ and $Y$ explicitly, we can combine them in $O(D^{c + f})$ arithmetic ops where $c$ is the number of wires between $X$ and $Y$, $f$ is the number of wires in the combined tensor, and $D$ is the dimension of each wire. 
\end{fact}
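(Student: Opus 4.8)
The plan is to recognize the contraction as a single matrix multiplication and then invoke the textbook cost of multiplying two explicitly stored matrices. First I would fix notation: write the index set of $X$ as $\mathcal{A}\cup\mathcal{C}$ and that of $Y$ as $\mathcal{B}\cup\mathcal{C}$, where $\mathcal{C}$ is the set of $c$ wires shared by $X$ and $Y$ (the ones contracted) and $\mathcal{A},\mathcal{B}$ collect the remaining wires, which are precisely the wires of the combined tensor; hence $f=|\mathcal{A}|+|\mathcal{B}|$, and every wire has dimension $D$. By definition the combined tensor $Z$ is
\[
Z_{a,b}=\sum_{s\in[D]^{c}} X_{a,s}\,Y_{b,s},\qquad a\in[D]^{|\mathcal{A}|},\ b\in[D]^{|\mathcal{B}|}.
\]

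Next I would observe that this is exactly a matrix product. Regrouping the indices of $X$ into a $D^{|\mathcal{A}|}\times D^{c}$ matrix $\tilde X$ (rows indexed by $a$, columns by $s$) and those of $Y$ into a $D^{|\mathcal{B}|}\times D^{c}$ matrix $\tilde Y$, we have $Z=\tilde X\,\tilde Y^{\mathsf T}$ after the obvious reshaping of $Z$. Multiplying a $p\times q$ matrix by a $q\times r$ matrix in the naive way costs $O(pqr)$ arithmetic operations, and here $p=D^{|\mathcal{A}|}$, $q=D^{c}$, $r=D^{|\mathcal{B}|}$, so the cost is $O\!\left(D^{|\mathcal{A}|+c+|\mathcal{B}|}\right)=O(D^{f+c})$, the claimed bound. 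Equivalently, one can avoid the matrix language entirely and simply run the triple loop over $(a,b,s)$ suggested by the display: it has $D^{|\mathcal{A}|}\cdot D^{|\mathcal{B}|}\cdot D^{c}=D^{f+c}$ iterations, each a single multiply-and-add.

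The only point that needs a little care is the bookkeeping: one must check that each of these $D^{f+c}$ steps costs $O(1)$ rather than something growing with the number of indices. If the tensors are stored as flat arrays under a fixed (say lexicographic) ordering of their indices, the reshapings above are permutations of the indices of $X$ and of $Y$; such a transpose touches each entry once, with destination offsets maintained incrementally by a mixed-radix counter, costing $O(D^{|\mathcal{A}|+c})+O(D^{|\mathcal{B}|+c})=O(D^{f+c})$. With the contracted indices then laid out as the innermost block of $\tilde X$ and $\tilde Y$, the triple loop can walk the three flat arrays with $O(1)$-amortized pointer updates per iteration, giving $O(D^{f+c})$ arithmetic operations in total. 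I expect this last accounting — confirming that address arithmetic does not inflate the operation count in the chosen model of computation — to be the only mildly delicate part; the arithmetic-operation bound itself is immediate once the contraction is viewed as a matrix multiplication.
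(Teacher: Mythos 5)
Your proposal is correct. The paper states this as a bare Fact with no proof, so there is nothing to compare against line by line; the standard justification is exactly what you give: the combined tensor has $D^{f}$ entries, each obtained by a sum over the $D^{c}$ joint values of the contracted wires of a product of one entry of $X$ and one entry of $Y$, i.e.\ a triple loop (equivalently, a $D^{|\mathcal{A}|}\times D^{c}$ by $D^{c}\times D^{|\mathcal{B}|}$ matrix multiplication) costing $O(D^{f+c})$ multiply--adds. One small remark: the delicate bookkeeping you flag at the end is not actually needed under the paper's cost model, since the surrounding lemma explicitly counts only \emph{arithmetic operations} (complex additions, subtractions, multiplications, divisions), so index/address manipulation is free by fiat; your incremental mixed-radix counter argument is a harmless strengthening showing the bound also holds as a genuine time bound in a RAM-style model.
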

We will take $D = d^2$ when we apply this fact, since recall that all our wires have dimension $d^2$. Note that we are counting the number of \emph{arithmetic operations} (arithmetic ops), meaning complex additions, subtractions, multiplications and divisions.

\begin{lemma}
	\label{lem:runtime}
	Given gates $U_1, \ldots, U_m$ (as dense vectors of their diagonal elements) on a system of $n$ qudits, operators $\{ O_i \}_{i \in [n]}$ (as dense matrices), and a tree decomposition $T$ of width $w$, then the algorithm described above (Lemma~\ref{lem:algorithm}) is dominated by the cost of $O(nd^{2w} + m d^{w} + \ell)$ arithmetic ops, where $\ell$ is the length of the input. 
\end{lemma}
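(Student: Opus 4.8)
The plan is to bound the cost of the algorithm of Lemma~\ref{lem:algorithm} phase by phase: (a) reading the instance and preprocessing the tree decomposition $T$; (b) the first, bottom-up pass, which computes the tensors $\rho_B$; and (c) the second, top-down pass, which computes the $\Pi_B$, the $\sigma_B$, and the sampled outcomes $\ket{x_b}$. Two structural observations do most of the work. First, every bag of $T$ has at most $w+1$ vertices, and the algorithm only ever stores $\rho_B$ restricted to the qudits of $B$, and $\sigma_B,\Pi_B$ restricted to the qudits shared with the parent; since every wire has dimension $d^{2}$, each such explicit tensor has at most $(d^{2})^{w+1}=O(d^{2w})$ entries. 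Second, every gate $U_k$ acts on a set $S_k$ of qudits that forms a clique in the connectivity graph $G$, so $|S_k|\le w+1$, and by the uniqueness argument in the proof of Lemma~\ref{lem:networkinvariant} the gate tensor $U_k\otimes\overline{U_k}$ occurs in exactly one subnetwork $\mathcal{N}_B$. I will also assume, after an $O(\ell)$-time preprocessing that repeatedly absorbs a node of $T$ into a neighbour whose bag contains it, that $T$ has $O(n)$ nodes and $\sum_{B}(1+\deg_{T}(B))=O(n)$; this preserves the width.

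For (a), reading the instance and the reduced decomposition is $O(\ell)$, and forming the conjugates $\overline{O_i}$ is $O(nd^{2})$. The key point is that a gate tensor $U_k\otimes\overline{U_k}$, of size $d^{2|S_k|}$, is \emph{never} materialised: instead, for each node $B$ I build the single length-$d^{|B|}$ vector $\delta_B$ with $\delta_B[i]=\prod_{k\ \text{assigned to}\ B}(U_k)_{i|_{S_k}}$, at cost $O(d^{|B|})$ per gate of $B$, hence $O(md^{w})$ in total; the gate stage of $\mathcal{N}_B$ then acts on the current bag tensor by entrywise multiplication by the diagonal $\delta_B\otimes\overline{\delta_B}$, costing $O(d^{2|B|})=O(d^{2w})$ per node.

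For (b), at a node $B$ with children $C_1,\dots,C_r$ I incorporate the $\rho_{C_i}$ one at a time into a running tensor kept always on the $\le w+1$ qudits of $B$, applying the merge tensors $M_b$ as each child is absorbed. The crucial feature — Proposition~\ref{prop:mergetensor}(3) — is that merging is \emph{entrywise}, not a summed contraction, so forming the updated running tensor is just evaluating a pointwise product of slices of $O(w)$-wire tensors, which costs $O(d^{2w})$; there are $O(n)$ such incorporations. The introduction tensors $\opvec(\chi_b)$, the operator tensors $O_b\otimes\overline{O_b}$ (a genuine single-wire contraction, $O(d^{2w})$ each by Fact~\ref{fact:basiccontraction}), the gate diagonals $\delta_B\otimes\overline{\delta_B}$ from (a), and the final partial trace restricting $\rho_B$ to $B$ each add $O(d^{2w})$ per relevant qudit or node; summing over the $O(n)$ nodes this pass is $O(nd^{2w}+md^{w})$. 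Pass (c) is the same bookkeeping run in reverse: one contracts $\rho_B$ with $\Pi_B$, samples each qudit of $B\setminus A$ by taking and renormalising marginals ($O(d^{2w})$ per qudit), and builds each $\Pi_{C_i}$ from $\Pi_B$, $\mathcal{N}_B$, the outcome projectors, and the $r-1$ sibling tensors $\rho_{C_j}$ ($j>i$) or $\sigma_{C_j}$ ($j<i$) by the same entrywise merges, the sibling work amortising to $O(1)$ operations of cost $O(d^{2w})$ per tree edge; the $\delta_B$ are reused. Hence (c) is $O(nd^{2w})$, and (a)--(c) sum to $O(nd^{2w}+md^{w}+\ell)$, absorbing $\mathrm{poly}(w,d)$ factors into the constant.

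The main obstacle, and the only place care is genuinely needed, is ensuring every explicit tensor stays on $O(w)$ wires at nodes of high degree: a one-shot contraction of $\mathcal{N}_B$ with all $r$ children would pass through an intermediate tensor on $\Theta(rw)$ wires and cost $d^{\Theta(rw)}$. The resolution used above — process children one at a time, trace out immediately any qudit not needed higher up, and exploit that both the merges and the diagonal gates act \emph{entrywise} rather than by summation (Proposition~\ref{prop:mergetensor}) — is exactly what pins the exponent at $2w$ and keeps the gate contribution at $md^{w}$ rather than $md^{2w}$, while the number of operations stays linear in the number of tree edges and gates.
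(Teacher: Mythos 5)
Your accounting matches the paper's proof almost step for step: storing only bag-restricted tensors of size $O(d^{2w})$, treating merges and diagonal gates as entrywise operations (Proposition~\ref{prop:mergetensor}), grouping the gates of each node into a single diagonal $\delta_B$ (the paper's $U_B$) to obtain the $O(md^{w})$ term, applying the dense operator tensors one wire at a time via Fact~\ref{fact:basiccontraction}, and reducing $T$ to $O(n)$ nodes, with $|S_k|\le w+1$ because gate supports are cliques of $G$ and each gate tensor appearing exactly once by the argument in Lemma~\ref{lem:networkinvariant}. The one place you genuinely diverge is the treatment of high-degree nodes in the top-down pass, and that is where your argument has a gap. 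As you describe it, $\Pi_{C_i}$ is built from $\Pi_B$, the node tensors, the sampled outcomes, and the $r-1$ sibling tensors $\rho_{C_j}$ ($j>i$) or $\sigma_{C_j}$ ($j<i$); done literally, this is $\Theta(r)$ merge operations for each of the $r$ children, i.e.\ $\Theta(r^2)$ operations of cost $O(d^{2w})$ at a degree-$r$ node, which for a star-shaped decomposition (which your preprocessing does not exclude) gives $\Theta(n^2 d^{2w})$ rather than the claimed bound. The sentence asserting that ``the sibling work amortises to $O(1)$ operations per tree edge'' states exactly the nontrivial point without supplying a mechanism; note that your ``main obstacle'' paragraph resolves only the intermediate-tensor-size issue at high-degree nodes, not this operation-count issue.

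Two standard fixes exist, and you need to commit to one. The paper's route is to preprocess $T$ into a \emph{binary} tree with $O(n)$ nodes (replace a node with three or more children by a binary tree of copies of its bag), after which each child has at most one sibling and the per-node work is genuinely $O(d^{2w})$; your preprocessing absorbs redundant bags but never binarizes, so it does not control the degree. Alternatively, one can keep arbitrary degree and cache prefix and suffix partial merges, $\sigma_{C_1}\star\cdots\star\sigma_{C_{i-1}}$ and $\rho_{C_{i+1}}\star\cdots\star\rho_{C_r}$, which is legitimate because incorporating a child is an entrywise product into the bag-$B$ wire space and hence associative; this yields $O(r)$ merges per node and recovers the amortized bound you assert, but it has to be said explicitly. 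Everything else in your proposal (the $O(\ell)$ input-reading cost, the one-child-at-a-time computation of $\rho_B$, the geometric-series bound for introducing states, the $O(md^{w})$ gate accounting, and the $O(d^{2w})$ contraction-and-sampling step) is correct and in line with the paper.
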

\begin{proof}
    We now show that we may assume $T$ is a binary tree with $O(n)$ nodes. First, we contract any edge where the parent contains all the vertices of the child. In the remaining tree, every edge has some qudit $b$ that appears in the child but not the parent. However, there is only \emph{one} such edge for each qudit $b$ (the least common ancestor of all nodes containing $b$) by tree decomposition axioms. Hence, there are at most $n$ edges and $n-1$ nodes after contraction. Next, we replace any node $B$ with three or more children $C_1, \ldots, C_k$ by a binary tree having $C_1, \ldots, C_k$ at the leaves and $B$ at all internal nodes, at most doubling the number of nodes. The resulting tree decomposition has the same treewidth as the original, at most $2n-2$ nodes.
	
	The main algorithm computes tensors $\rho_{B}$, $\sigma_{B}$, and $\Pi_{B}$ for each node $B$ in the tree decomposition. In Lemma~\ref{lem:algorithm}, we presented the operations on $\rho_{B}$, $\sigma_{B}$, and $\Pi_{B}$ abstractly as tensor contractions, mostly between $\mathcal{N}_B$ and $\rho$, $\sigma$, and $\Pi$ for various nodes. We explicitly store $\rho$, $\sigma$, and $\Pi$ tensors as vectors of complex numbers, but \emph{not} $\mathcal{N}_B$ because it would be too large (up to $w$ wires to the parent and $w$ from each child is up to $3w$). Instead, every time we contract tensors to $\mathcal{N}_B$, we expand it as a network of individual state, gate, operator, and merge tensors, and contract the network strategically.  

	Note that there are up to four tensor computations that involve $\mathcal{N}_B$, which compute $\rho_{B}$ (in the first traversal), $\Pi_{C_1}$, $\Pi_{C_2}$, $\sigma_{B}$. Each one contracts all the individual tensors of $\mathcal{N}_B$ (for essentially the same cost in each contraction). We analyze the cost of contraction for each stage in the forward and reverse directions (i.e., whether we're given a tensor on the input wires or output wires to contract with). 
	\begin{description}
		\item[Merge tensors:] We wish to combine tensors from the two children of $B$, namely $C_1$ and $C_2$, which overlap on qudits $B \cap (C_1 \cup C_2)$. We apply the merge tensor
		$M_B := \otimes_{b \in B \cap (C_1 \cup C_2)} M_b$ on the duplicates. Fortunately, each individual $M_b$ is an entry-wise product of the two input tensors and a constant vector (by  Proposition~\ref{prop:mergetensor}), and so is the full merge tensor $M_B$. On $w$ qudits, the vectors all have dimension $d^{2w}$, and therefore we use at most $O(d^{2w})$ arithmetic ops. It is similarly $O(d^{2w})$ ops to contract the tensor given a tensor for the output and one of the inputs. 
		
		We can only merge as many times as there are internal nodes of $T$, i.e., at most $O(n)$. Hence, $O(nd^{2w})$ arithmetic ops are spent on merge tensors. 
		\item[State tensors:] The tensor product of $\opvec(\chi_i)$ with an existing tensor $\opvec(\rho)$ takes $O(d^{2f})$ arithmetic ops (by Fact~\ref{fact:basiccontraction}) where $f$ is the final number of wires. If we build up a large product $\opvec(\chi_1) \otimes \cdot \otimes \opvec(\chi_k)$ iteratively then since the final number of wires \emph{increases} in each step, the number of arithmetic ops is at worst a geometric series,
		$$O(d^{2}) + O(d^4) + \cdots + O(d^{2w-4}) + O(d^{2w-2}) + O(d^{2w}) = O(d^{2w}).$$
		The series dominated by its largest term, which is at most $O(d^{2w})$. In the other direction, contracting state tensor $\opvec(\chi_i)$ with an arbitrary tensor is $O(d^{2f})$ (again by Fact~\ref{fact:basiccontraction}) where $f$ is the number of wires in the arbitrary tensor. Since this \emph{decreases} after the contraction, the number of arithmetic ops required for a sequence of these operations is again at worst a geometric series totalling $O(d^{2w})$. 
		
		The cost for the whole tree is $O(d^{2w})$ per node times $O(n)$ nodes: $O(nd^{2w})$ arithmetic ops. 
		\item[Gate tensors:] The gate tensors are linear maps $U_j \otimes \overline{U_j}$, where $U_j$ is diagonal. To update an explicit input tensor, it suffices to multiply entry-wise by the diagonal of $U_j \otimes \overline{U_j}$. Clearly this is $O(d^{2w})$ arithmetic ops, and it is similar in reverse. 
		
		Now let $U_B$ be the product of all the gates applied in $B$. Rather than apply gate tensors for the individual gates, we compute the product in time $O(d^{w})$ times the number of gates, and then use the gate tensor $U_{B} \otimes \overline{U_B}$ in time $O(d^{2w})$. This way there are $O(n)$ gate tensor contractions using $O(d^{2w})$ arithmetic ops each, plus $O(d^{w})$ operations per gate times $O(m)$ gates. Thus, gate tensors cost us $O(nd^{2w} + m d^{w})$. 
		\item[Operator tensors:] Unlike gates, operators may be arbitrary dense matrices, $O_i$. For this reason, it is cheaper to apply the operators one at a time rather than grouping them like gates. From Fact~\ref{fact:basiccontraction}, we need only $O(d^{2w+2}) = O(d^{2w})$ arithmetic ops to contract one wire of the operator tensor the main $w$ wire tensor. There are at most $O(n)$ operator tensors, one per qudit, so the total cost is $O(d^{2w} n)$ arithmetic ops across the tree.
	\end{description}

	There is one final step which does not involve $\mathcal{N}_B$: contracting $\rho_{B}$ and $\Pi_B$ and then sampling outcomes. The two tensors have $A \cap B$ qudits in common, and $B \backslash A$ wires are left at the end, so again the cost of the contraction is $O(d^{2w})$ arithmetic ops by Fact~\ref{fact:basiccontraction}. The result is a tensor for the qudits to be measured, and since we are measuring in the classical basis the distribution can be read off of the tensor. Hence we can sample in $O(d^w)$ time per node, which is negligible in comparison to contraction. Hence, the cost is $O(n d^{2w})$ arithmetic ops across the entire tree. 
\end{proof}

We present a way to tweak the algorithm to compute amplitudes as well, although it does not improve on the earlier algorithm given in Lemma~\ref{lemma:amplitudes}. 
\begin{corol}
    Given an initial pure state $\ket{\chi} = \ket{\chi_1} \otimes \cdots \otimes \ket{\chi_n}$, a collection of diagonal gates $U_1, \ldots, U_m$, single-qudit operators $\{ O_a \}_{a \in [n]}$, and classical state $\ket{x}$, there is a classical algorithm computing $\langle x | \alpha \rangle$ where
	$$
	\ket{\alpha} = \bigotimes_{i} O_i \prod_{j} U_j \ket{\chi}.
	$$
	The algorithm runs in time $O((n+m)d^{w} + \ell)$ where $\ell$ is the length of the input.
\end{corol}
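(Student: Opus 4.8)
The plan is to reuse the machinery of Theorem~\ref{thm:appendixmain} almost verbatim, with two simplifications coming from the fact that here we work with a pure state and compute a single amplitude rather than sampling. First, since $\ket{\chi}$ and $\ket{\alpha}$ are genuine vectors we do not need the $\opvec$ doubling trick: every wire will carry dimension $d$ instead of $d^2$, which is exactly what turns the $d^{2w}$ of Lemma~\ref{lem:runtime} into $d^{w}$. Second, because we want $\langle x|\alpha\rangle$ for a fixed classical string $x$, there is no sampling step and hence no second (top-down) traversal; a single bottom-up contraction suffices.

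Concretely, given also a rooted tree decomposition $T$ of width $w$ for the connectivity graph, I would build a tensor network $\mathcal{N}'$ for $\ket{\alpha}=\bigotimes_i O_i\prod_j U_j\ket{\chi}$ from $T$ node by node, with the same four-stage structure (merge, introduction, gate, operator) used in the proof of Lemma~\ref{lem:networkinvariant}, but with undoubled tensors: state tensors $\ket{\chi_i}$, gate tensors $U_j$, operator tensors $O_i$, and, for a qudit $b$ with $\ket{\chi_b}=\sum_i\beta_i\ket{i}$, a merge tensor $M_b=\sum_{i:\beta_i\neq 0}\beta_i^{-1}\ket{i}(\bra{i}\otimes\bra{i})$. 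The three properties of Proposition~\ref{prop:mergetensor} hold verbatim for this $M_b$, since the doubling played no role in their proofs: $M_b(\ket{\chi_b}\otimes\ket{\chi_b})=\ket{\chi_b}$, any diagonal gate commutes through $M_b$, and $M_b$ acts as an entrywise product against a fixed vector. Consequently the argument of Lemma~\ref{lem:networkinvariant} goes through unchanged to show that $\mathcal{N}'$ contracts to $\ket{\alpha}$: commuting gate tensors past the earliest merge tensor collapses each merge against two copies of the same $\ket{\chi_b}$, leaving a circuit-shaped network in which, by the tree-decomposition axioms exactly as before, each gate tensor $U_k$ occurs exactly once, at the highest node containing all of its qudits.

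To extract the amplitude I would contract $\langle x_i|$ into the free leg of each qudit $i$; equivalently, in the operator stage of the node where qudit $i\in B\setminus A$ is ``measured'', replace $O_i$ by the row vector $\langle x_i|O_i$. Now $\mathcal{N}'$ has no free legs, so contracting it yields the scalar $\langle x|\alpha\rangle$ by the network-invariant lemma. As in Lemma~\ref{lem:runtime} I first normalize $T$ to a binary tree with $O(n)$ nodes, then traverse it once from the leaves up: at each node $B$ with children $C_1,C_2$ I contract $\mathcal{N}'_B$ against the tensors $\rho_{C_1},\rho_{C_2}$ passed up by the children, restricting (i.e.\ contracting out) all but the $A\cap B$ qudits shared with the parent, to obtain a tensor $\rho_B$ on at most $w$ qudits, hence with at most $d^{w}$ complex entries; the root returns $\langle x|\alpha\rangle$.

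For the runtime I would follow the stage-by-stage accounting of Lemma~\ref{lem:runtime} with $d^{w}$ in place of $d^{2w}$ everywhere: each merge step is an entrywise product computable in $O(d^{w})$ arithmetic ops (for each of the $\le d^{w}$ assignments to the qudits of $(B\cap C_1)\cup(B\cap C_2)$, look up and multiply the matching entries of $\rho_{C_1}$, $\rho_{C_2}$, and a constant vector); building up the state tensors and contracting the $\langle x_i|O_i$ operator tensors each cost a geometric series dominated by $O(d^{w})$ per node (by Fact~\ref{fact:basiccontraction}); and for gates I first form the product $U_B$ of the gates assigned to $B$ using $O(d^{w})$ ops per gate and then apply $U_B$ in $O(d^{w})$, for $O((n+m)d^{w})$ overall on gate tensors. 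Summing over the $O(n)$ nodes and adding the $O(\ell)$ cost of reading the input gives the claimed $O((n+m)d^{w}+\ell)$. The only step requiring genuine (though routine) care is re-verifying the network-invariant lemma and the merge-tensor identities in the undoubled setting; everything else is bookkeeping, and, as the statement notes, the resulting bound does not improve on that of Lemma~\ref{lemma:amplitudes}.
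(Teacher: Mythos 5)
Your proposal is correct and follows essentially the same route as the paper: the paper's proof also specializes the tensor network of Theorem~\ref{thm:appendixmain} to pure states (wires of dimension $d$, undoubled state, gate, operator, and merge tensors), attaches $\bra{x_b}$ to the free legs, and notes that the network can be contracted with all intermediate tensors of size $O(d^{w})$, leaving the verification as an exercise. Your write-up simply fills in those details (the undoubled merge-tensor identities and the stage-by-stage $d^{w}$ accounting) explicitly, in line with what the paper intends.
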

\begin{proof}
    Re-use the same tensor network, but specializing it to pure states. That is, wires are dimension $d$, gate tensors become $U_j$, operator tensors become $O_i$, state tensors are $\ket{\chi_i}$. Merge tensors are still 
    $$
    M_b = \sum_{i : \beta_i \neq 0} \beta_i^{-1} \ket{i}(\bra{i} \otimes \bra{i}),
    $$
    but over $d$ dimensions for $i$, and where $\ket{\chi_b} = \sum_{i} \beta_i \ket{i}$. We claim that this tensor network contracts to the vector $\ket{\alpha}$, so if we attach $\bra{x_b}$ to the free wire $b$ for all qudits, then we can compute $\langle x | \alpha \rangle$.
    
    We leave it as an exercise to check that we can contract this network in such a way that all intermediate tensors have at most $O(d^{w})$ entries, and hence we have the desired runtime. 
\end{proof}

Recall that Theorem~\ref{thm:graph_based} partitions the graph into two halves, and splits the operators $O_a$ between them. We sample from one half, and evaluate an amplitude for the other half, so in each calculation, some of the operators are the identity.  Crucially, we get to drop vertices where $O_a = I$ from the graph, and the treewidth is computed for the remaining vertices. Since we have generalized the problem, we need to verify that this is still true: we can sample the easy qudits and reduce to an instance on the rest. However, we require the product of the gates to be unitary.
\begin{lemma}
	\label{lem:removeeasy}
	Given an instance of Problem~\ref{prob:forrsample} where the product $U$ is unitary. Let 
	$$
	V_{easy} = \{ a \in V : O_a = I \},
	$$
	be the set of vertices/qudits with trivial operators $O_a$. Then we can reduce to an instance on $V \backslash V_{easy}$ in linear time, i.e., sample outcomes for $V_{easy}$ and simplify the instance to vertices $V \backslash V_{easy}$. 
\end{lemma}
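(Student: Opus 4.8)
The plan rests on the observation that $U=\prod_{j}U_{j}$, being a product of diagonal matrices, is itself diagonal, and since $U$ is unitary every one of its diagonal entries has modulus one. Write $V=V_{easy}\sqcup V_{hard}$, decompose $x=(x_e,x_h)$ accordingly, and note that $O=\bigotimes_{a}O_{a}=I_{easy}\otimes O_{hard}$ with $O_{hard}=\bigotimes_{a\in V_{hard}}O_{a}$, since $O_{a}=I$ for every $a\in V_{easy}$. Given an assignment $x_e$ to the easy qudits I would define the operator $U_{x_e}$ on the hard qudits obtained from $U$ by feeding in $x_e$ on the easy qudits: each gate $U_{j}$ supported inside $V_{hard}$ is kept; each $U_{j}$ with support meeting both sides becomes the diagonal gate on $S_{j}\cap V_{hard}$ gotten by fixing its easy inputs to the corresponding bits of $x_e$; and each $U_{j}$ supported inside $V_{easy}$ becomes a scalar, which can be absorbed into any one of the remaining gates. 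Because the diagonal entries of $U$ are unit-modulus, $U_{x_e}$ is a diagonal \emph{unitary}, and all of its gates connect only hard qudits that are already adjacent in $G[V_{hard}]$. Since $U$ is diagonal and both $O$ and $\chi=\bigotimes_{a}\chi_{a}$ factor across the easy/hard cut, a direct calculation yields
\[
\bigl(\langle x_e|\otimes I\bigr)\,OU\chi U^{\dagger}O^{\dagger}\,\bigl(|x_e\rangle\otimes I\bigr)\;=\;c(x_e)\cdot O_{hard}\,U_{x_e}\,\chi_{hard}\,U_{x_e}^{\dagger}\,O_{hard}^{\dagger},
\]
where $\langle x_e|$ acts on the easy qudits, $\chi_{hard}=\bigotimes_{a\in V_{hard}}\chi_{a}$, and $c(x_e)=\prod_{a\in V_{easy}}\langle (x_e)_{a}|\chi_{a}|(x_e)_{a}\rangle\ge 0$.

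Taking the trace over the hard qudits gives the marginal law of the easy qudits, $P(x_e)\propto c(x_e)\,\Tr\!\bigl(O_{hard}^{\dagger}O_{hard}\,U_{x_e}\chi_{hard}U_{x_e}^{\dagger}\bigr)$. Using that the operators $O_{a}$ are unitary --- which holds in all our applications of this lemma, the general case $\|O_{a}\|\le 1$ having already been reduced to the unitary case in the proof of Theorem~\ref{thm:graph_based} --- we have $O_{hard}^{\dagger}O_{hard}=I$, so the trace collapses to the constant $\Tr(\chi_{hard})$ and $P(x_e)\propto c(x_e)$ is a product distribution. Thus the easy qudits are independent, and I would sample them in linear time by drawing each $(x_e)_{a}$ from the single-qudit distribution $\langle\cdot|\chi_{a}|\cdot\rangle$.

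Having fixed $x_e$, the displayed identity shows that the conditional law of the remaining qudits is
\[
P(x_h\mid x_e)\;\propto\;\langle x_h|\,O_{hard}\,U_{x_e}\,\chi_{hard}\,U_{x_e}^{\dagger}\,O_{hard}^{\dagger}\,|x_h\rangle,
\]
which is exactly a new instance of Problem~\ref{prob:forrsample} on the qudit set $V_{hard}$: initial product state $\chi_{hard}$, single-qudit operators $\{O_{a}\}_{a\in V_{hard}}$, and gate set consisting of the hard-only gates of $U$ together with the mixed gates of $U$ with their easy inputs set to $x_e$ (their product $U_{x_e}$ being again unitary). Its connectivity graph is a subgraph of $G[V_{hard}]$, and building it amounts to reading off each gate, state, and operator once, so the whole reduction --- sampling $x_e$ and then writing down the reduced instance --- runs in linear time, which proves the lemma.

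The step I expect to be the main obstacle is establishing the displayed identity, and in particular checking that ``plugging in'' $x_e$ really produces a \emph{unitary} diagonal operator on the hard qudits: this is exactly where it matters that the \emph{product} $U$ is unitary (so its diagonal entries are unit-modulus) even though an individual gate $U_{j}$ need not be, and where diagonality of the gates is essential so that all cross terms between the easy and hard subsystems collapse into the single scalar prefactor $c(x_e)$. A secondary point worth flagging is that collapsing the trace in the marginal step uses $O_{hard}^{\dagger}O_{hard}=I$; without unitarity of the operators $O_{a}$ the marginal $P(x_e)$ would in general couple to the hard qudits through the mixed gates, and the reduction would no longer be governed by the treewidth of $G[V_{hard}]$ alone.
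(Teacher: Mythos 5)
Your proof is correct and follows essentially the same route as the paper's: diagonality of $U$ lets you plug the sampled easy outcomes into the gates, unitarity of $U$ makes the restricted product $U_{x_e}$ a diagonal unitary, the marginal of the easy qudits collapses to the diagonal of $\bigotimes_{a\in V_{easy}}\chi_a$, and the conditional law is a fresh instance of Problem~\ref{prob:forrsample} on $V\setminus V_{easy}$ whose connectivity graph sits inside the induced subgraph on the remaining vertices. The paper peels off one easy qudit at a time, replacing each gate $U_j$ that touches qudit $i$ by the submatrix $\Tr_i\bigl((\ket{x_i}\bra{x_i}\otimes I)U_j\bigr)$, whereas you remove all of $V_{easy}$ at once via your factorization identity; that is a presentational difference only.

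The one substantive place where the proofs diverge is your treatment of the operators on the non-easy qudits. You keep $O_{hard}$ in the marginal computation and observe that discarding the factor $\Tr\bigl(O_{hard}^{\dagger}O_{hard}\,U_{x_e}\chi_{hard}U_{x_e}^{\dagger}\bigr)$ requires $O_{hard}^{\dagger}O_{hard}=I$; the paper's proof instead writes the marginal of an easy qudit as $\Tr\bigl((\ket{x_i}\bra{x_i}\otimes I)U\chi U^{\dagger}\bigr)$, silently dropping the operators on all other qudits, which is legitimate only under essentially the hypothesis you flag (for a general instance with non-unitary $O_a$ on $V\setminus V_{easy}$, the easy marginal genuinely need not equal the diagonal of $\chi$, so the outcomes would be drawn from the wrong distribution). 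Your explicit restriction to unitary hard operators, justified by the reduction performed in the proof of Theorem~\ref{thm:graph_based} and consistent with how the lemma is invoked in Lemma~\ref{lem:linear_sampling_simple}, is therefore not a weakness of your argument relative to the paper's; it makes visible an assumption the paper's own proof also relies on.
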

\begin{proof}
	Take an arbitrary qudit $i \in V_{easy}$. Ordinarily, the probability of measuring qudit $i$ in state $\ket{x_i} \bra{x_i}$ would be $\Tr( (\ket{x_i} \bra{x_i} \otimes I) O_i U \chi U^{\dag} O_i^{\dag})$, but since $O_i$ is the identity, we have
	$$
	\Tr((\ket{x_i} \bra{x_i} \otimes I) U \chi U^{\dag}) = \Tr(U^{\dag} U (\ket{x_i} \bra{x_i} \otimes I) \chi) = \Tr((\ket{x_i} \bra{x_i} \otimes I) \chi) = \bra{x_i} \chi_i \ket{x_i},
	$$
	where we used the fact that $U$ is unitary (so $U^{\dag} U = I$) and diagonal (so it commutes with other diagonal matrices, e.g., $\ket{x} \bra{x} \otimes I$). It is easy to sample an outcome $\ket{x_i}$ proportional to $\bra{x_i} \chi_i \ket{x_i}$; no calculation is necessary since the distribution is on the diagonal of $\chi_i$. Once we have $\ket{x_i}$, we are interested in the state $\Tr_{i}( (\ket{x_i} \bra{x_i} \otimes I)U \chi U^{\dag})$, where we have projected onto the outcome and traced out qudit $i$. Since each factor of $U$ commutes with the projector, this is equivalent to replacing each $U_j$ that touches qudit $i$ with $U_j' := \Tr_i((\ket{x_i}\bra{x_i} \otimes I)U_j)$. We leave it as an exercise to check that this is a new instance of Problem~\ref{prob:forrsample}. 

	Let us sample all qudits $V_{easy}$ first, then project onto $\ket{x_{V_{easy}}}$ and trace out $V_{easy}$. Sampling from a given discrete distribution on $d$ elements requires at most $O(d)$ arithmetic ops. Projecting onto a standard basis state $\ket{x_i}$ and tracing out simply restricts $U_j$ to a submatrix $U_j'$ with rows and columns matching $\ket{x_i}$. We pick out this submatrix in linear time, completing the reduction to an instance on $V \backslash V_{easy}$. 
\end{proof}

Finally, we recall Lemma~\ref{lem:linear_sampling_simple} from Section~\ref{sec:graph_based_proof}:
\begin{replemma}{lem:linear_sampling_simple}
There is a classical algorithm which samples a binary string $x\in \{0,1\}^n$ from the distribution $P(x)=|\langle x|\alpha\rangle|^2$ in time $O(n 4^w)$. Recall that $\ket{\alpha} = (O_A\otimes I_B)U_fH^{\otimes n}\ket{0^n}$, and $w$ is the treewidth of the graph restricted to qubits $A$ (i.e., to the qubits having non-trivial operators). We additionally assume that $U_f$ is unitary, and that we are given a tree decomposition of size $O(n)$ and width $w$. 
\end{replemma}
\begin{proof}
    First, we use Lemma~\ref{lem:removeeasy} to sample the qubits from $B$ (i.e., qubits without operators) in linear time. The graph for the remaining qubits $A$ and gates has treewidth $w$, and we assume we are given a tree decomposition $T$ of size $O(n)$ for it. Note that since our gates are $2$-local, there can be at most $w^2$ gates per node of the tree decomposition, and thus $m = O(n w^2)$ gates total. Applying Theorem~\ref{thm:appendixmain} samples the remaining qubits in time 
    \[
    O(n d^{2w} + m d^{w}) = O(n 4^w + n w^2 2^w) = O(n 4^w).
    \]
\end{proof}

\section{Example Tensor Network \texorpdfstring{$\mathcal{N}$}{N}}
\label{app:example}

Consider a circuit on eight qudits labeled $a$ through $h$, with diagonal gates on the following subsets of qudits:
$$
\{ a, b\}, \{ a, c \}, \{b \}, \{ b, c \}, \{ b, e \}, \{ b, f \}, \{ b, g \}, \{ b, h \}, \{c, d, e\}, \{ d, e\}, \{ e, g, h \}, \{f, g \}.
$$
We wish to initialize the qudits in state $\chi = \chi_a \otimes \cdots \otimes \chi_{h}$, apply these gates, and measure qudits after applying operators $O_{a}, \ldots, O_{h}$. 

On closer inspection, we notice that $\{ d, e\}$ is a subset of $\{ c, d, e \}$, so the two gates may be combined. Similarly, $\{ b\}$ can be combined with any of the other gates on $b$. Also, suppose $O_h = I$, so we can remove $h$ entirely: sample a classical outcome for $h$ based on $\chi_{h}$ and restrict $U_{egh}$ to that outcome, creating a new gate $U_{eg}$. The other operators, $O_a, \ldots, O_g$, are all nontrivial, so at this point we construct the graph $G$ in Figure~\ref{fig:graph}. The figure shows the remaining qudits, and edges between pairs of qudits that appear together in some gate.

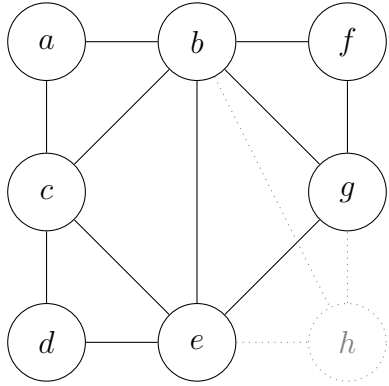
\begin{figure}
\centering
\begin{tikzpicture}
\node[state] (a) at (-2,2) {$a$};
\node[state] (b) at (0,2) {$b$};
\node[state] (c) at (-2,0) {$c$};
\node[state] (d) at (-2,-2) {$d$};
\node[state] (e) at (0,-2) {$e$};
\node[state] (f) at (2,2) {$f$};
\node[state] (g) at (2,0) {$g$};
\node[state,dotted,gray] (h) at (2,-2) {$\color{gray}{h}$};

\draw (a) -- (b) -- (c) -- (a); 
\draw (c) -- (d) -- (e) -- (c);
\draw (b) -- (f) -- (g) -- (b);
\draw (b) -- (e) -- (g);

\draw[dotted,gray] (b) -- (h);

\draw[dotted,gray] (e) -- (h) -- (g);
\end{tikzpicture}
\caption{A graph on qudits $a$ through $g$, where there is an edge $\{ u, v\}$ whenever some gate acts on $u$ and $v$. The grayed out node $h$ and edges $\{ e, h \}$ and $\{ g, h\}$ show what the graph would be if we had not removed $h$, i.e., if $h$ was measured in a different basis.}
\label{fig:graph}
\end{figure}

\begin{figure}
\centering
\begin{quantikz}[transparent,row sep=1mm,column sep=3mm]
	\lstick{$\chi_a$} \qw & 
	\gate[wires=2]{U_{ab}} & \qw & 
	\qw & \qw & 
	\qw & \qw & 
	\qw & \qw & 
	\qw & \qw &
	\gate[wires=3,label style={yshift=5mm}]{U_{ac}} & \gate{O_a} & \qw \\
	\lstick{$\chi_b$} \qw &
	\qw & \qw &
	\gate[wires=2]{U_{bc}} & \qw & 
	\gate[wires=4]{U_{be}} & \qw &
	\gate[wires=5,label style={yshift=5mm}]{U_{bf}} & \qw &
	\gate[wires=6]{U_{bg}} & \qw &
	\linethrough & \gate{O_b} & \qw \\
	\lstick{$\chi_c$} \qw & 
	\gate[wires=3]{U_{cde}} & \qw &
	\qw & \qw &
	\linethrough & \qw &
	\linethrough & \qw &
	\linethrough & \qw &
	\qw & \gate{O_c} & \qw \\
	\lstick{$\chi_d$} \qw &
	\qw & \qw &
	\qw & \qw & 
	\linethrough & \qw &
	\linethrough & \qw &
	\linethrough & \qw &
	\qw & \gate{O_d} & \qw \\
	\lstick{$\chi_e$} \qw &
	\qw & \qw &
	\gate[wires=3,label style={yshift=5mm}]{U_{eg}} & \qw & 
	\qw & \qw &
	\linethrough & \qw &
	\linethrough & \qw &
	\qw & \gate{O_e} & \qw \\
	\lstick{$\chi_f$} \qw & 
	\gate[wires=2]{U_{fg}} & \qw &
	\linethrough & \qw &
	\qw & \qw &
	\qw & \qw & 
	\linethrough & \qw &
	\qw & \gate{O_f} & \qw \\
	\lstick{$\chi_g$} \qw &
	\qw & \qw &
	\qw & \qw & 
	\qw & \qw & 
	\qw & \qw & 
	\qw & \qw &
	\qw & \gate{O_g} & \qw \\
\end{quantikz}	
\caption{A straightforward circuit for the example.}
\label{fig:circuit}
\end{figure}
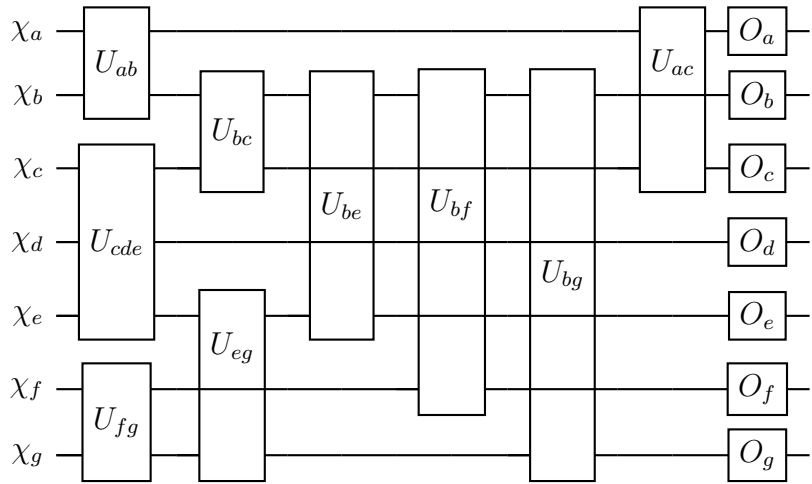

At this point, we \emph{could} sample outcomes for the rest of the qudits with a na\"{i}ve simulation of a circuit like in Figure~\ref{fig:circuit}. That is, construct the full density matrix for $\chi_1 \otimes \cdots \otimes \chi_g$, apply the gates one at a time, and then measure the qudits $a$ through $g$. The problem is that the density matrix grows exponentially with the number of qudits, so it is extremely inefficient to store all the qudits. 

Depending on the graph, we may be able to avoid storing all the qudits at once, by carefully sequencing the gates, and scheduling the introduction/measurement of qudits. For example, separate the graph into two overlapping halves, $\{ a, b, c, d, e \}$ and $\{ b, e, f, g \}$. We can simulate the qudits and gates on $\{ a, b, c, d, e\}$, measure $\{a, c, d\}$ leaving the overlap, $\{ b, e \}$, and then introduce $\{ f, g \}$ and finish the circuit. The density matrix never has more than five qudits by this strategy, but otherwise does the same operations as a full seven qudit simulation. In fact, we can make do with only four qudits at a time, e.g., in groups like this
$$
\{ a, b, c\}, \{ b, c, d, e\}, \{ b, e, g\}, \{ b, f, g \}.
$$
This grouping is a tree decomposition of $G$, and in fact we can do even better with the width-$2$ decomposition in Figure~\ref{fig:treewidth}. 

We turn the tree decomposition into a tensor network, Figure~\ref{fig:network1}. Each node becomes a subnetwork, shown as a dotted box in the figure. Recall that we justify the correctness of the network by pushing merge gadgets to the beginning (left, in Figure~\ref{fig:network1}) of the network, and eliminating them. Without merge tensors, there is one line through the network for each qudit (like a circuit), and it is easy to check that all the gates appear. Correctness of $\mathcal{N}$ follows. 

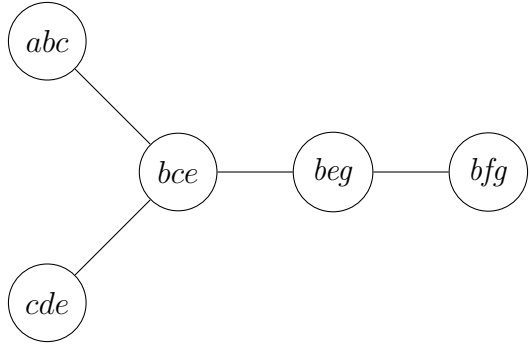
\begin{figure}
\begin{center}
\begin{tikzpicture}
	\node[state] (bce)  {$bce$};
	\node[state,above left=of bce] (abc) {$abc$};	
	\node[state,below left=of bce] (cde) {$cde$};
	\node[state,right=of bce] (beg) {$\mathit{beg}$};
	\node[state,right=of beg] (bfg) {$\mathit{bfg}$};
	
	\draw (abc) -- (bce) -- (cde); 
	\draw (bce) -- (beg) -- (bfg);
\end{tikzpicture}
\end{center}
\caption{A tree decomposition for $G$}
\label{fig:treewidth}
\end{figure}

\begin{figure}
\begin{center}
\input{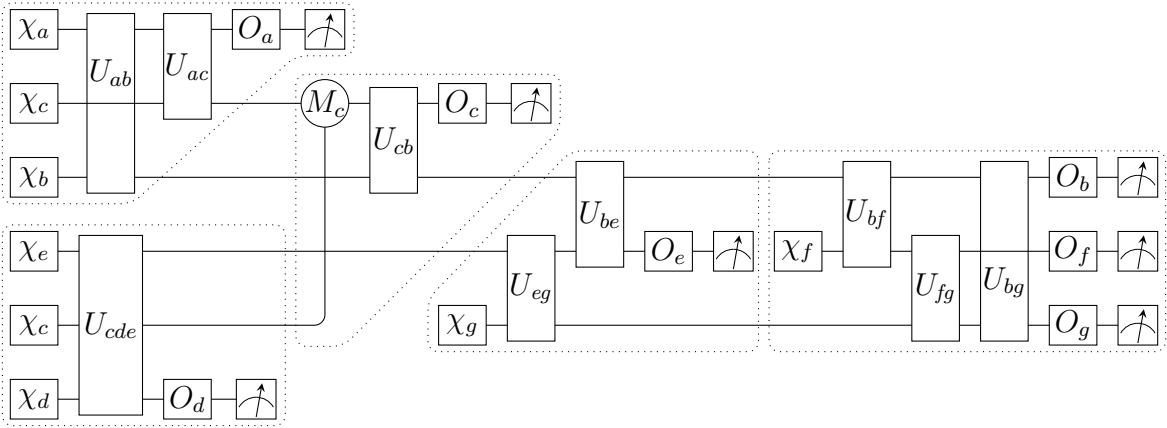}
\end{center}
\caption{A tensor network for the example. For brevity, $\chi_b$ represents the state tensor $\opvec \chi_b$, $U_{xy}$ represents a gate tensor $U_{xy} \otimes \overline{U_{xy}}$, and $O_{z}$ represents the operator tensor $O_{z} \otimes \overline{O_z}$.}
\label{fig:network1}
\end{figure}

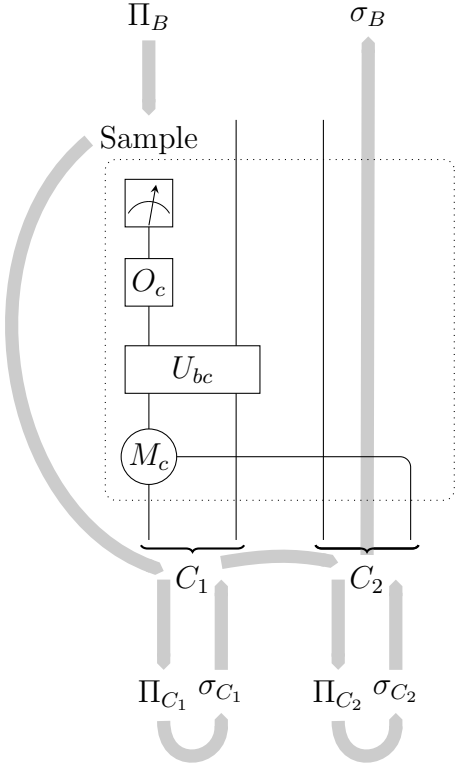
\begin{figure}
	\begin{center}
		 \begin{tikzpicture}[scale=1.500000,x=1pt,y=1pt]
	\coordinate  (BTOP) at (-44,30);
	\coordinate  (ETOP) at (-22,30);

	\coordinate  (BBOT) at (-44,-76);
	\coordinate  (C1BOT) at (-66,-76);
	\coordinate  (EBOT) at (-22,-76);
	\coordinate  (C2BOT) at (-0,-76);
										
	\coordinate (METER) at (-66,9);

	\node [above=5mm of METER] (SAMPLE) {Sample};
	\node [above=of SAMPLE] (START) {$\Pi_B$};
	
	\draw [decorate,decoration = {brace,raise=2pt},thick] (-42,-76) -- node[below=2mm] (C1) {$C_1$} (-68,-76);
	\draw [decorate,decoration = {brace,raise=2pt},thick] (2,-76) -- node[below=2mm] (C2) {$C_2$} (-24,-76);
	
	\node[below=12mm of C1.west] (C1OUT) {$\Pi_{C_1}$};
	\node[below=12mm of C1.east] (C1IN) {$\sigma_{C_1}$};
	\node[below=12mm of C2.west] (C2OUT) {$\Pi_{C_2}$};
	\node[below=12mm of C2.east] (C2IN) {$\sigma_{C_2}$};
	
	\draw (START) edge[-Triangle Cap,line width=5pt,draw=black!20] (SAMPLE);
	\draw (SAMPLE.west) edge[bend right=60,-Triangle Cap,line width=5pt,draw=black!20] (C1);
	\draw (C1.west) edge[-Triangle Cap,line width=5pt,draw=black!20] (C1OUT);
	\draw (C1IN) edge[-Triangle Cap,line width=5pt,draw=black!20] (C1.east);
	\draw (C1.north east) edge[-Triangle Cap,line width=5pt,draw=black!20,bend left=10,transform canvas={yshift=-1mm}] (C2.north west);
	\draw (C2.west) edge[-Triangle Cap,line width=5pt,draw=black!20] (C2OUT);
	\draw (C2IN) edge[-Triangle Cap,line width=5pt,draw=black!20] (C2.east);
	
	\draw (C1OUT.south) edge[-Triangle Cap,line width=5pt,draw=black!20,out=270,in=270,looseness=2.2] (C1IN.south);
	\draw (C2OUT.south) edge[-Triangle Cap,line width=5pt,draw=black!20,out=270,in=270,looseness=2.2] (C2IN.south);
	
	\node[right=21mm of START] (END) {$\sigma_{B}$};
	\draw (C2) edge[-Triangle Cap,line width=5pt,draw=black!20] (END);
	\draw [decorate,decoration = {brace,raise=2pt},thick] (2,-76) -- (-24,-76);
	
	
	\draw[dotted,rounded corners=4pt] (11, 20) rectangle (-77, -66);			
	\draw[color=black] (BTOP) -- (BBOT);
	\draw[color=black] (-66,9) -- (C1BOT);
	\draw[color=black] (ETOP) -- (EBOT);
	
	\draw[fill=white] (-72, -17) rectangle (-60, -5);
	\draw (-66,-11) node {$O_c$};
	
	\draw[fill=white] (-72, 3) rectangle (-60, 15);
	\draw[very thin] (-66, +9.400000) arc (90:150:6pt);
	\draw[very thin] (-66, +9.400000) arc (90:30:6pt);
	\draw[->,>=stealth] (-66, 3.400000) -- +(80:10.392305pt);
	\begin{scope}
		\draw[fill=white] (-55, -33) +(-45:24.041631pt and 8.485281pt) -- +(45:24.041631pt and 8.485281pt) -- +(135:24.041631pt and 8.485281pt) -- +(225:24.041631pt and 8.485281pt) -- cycle;
		\clip (-55, -33) +(-45:24.041631pt and 8.485281pt) -- +(45:24.041631pt and 8.485281pt) -- +(135:24.041631pt and 8.485281pt) -- +(225:24.041631pt and 8.485281pt) -- cycle;
		\draw (-55, -33) node {$\vspace{1em}U_{bc}$};
	\end{scope}
	\draw[rounded corners] (-66,-55) -- (0,-55) -- (C2BOT);
	\begin{scope}
		\draw[fill=white] (-66, -55) circle(7pt);
		\clip (-66, -55) circle(6pt);
		\draw (-66, -55) node {$M_c$};
	\end{scope}

\end{tikzpicture}
	\end{center}
	\caption{The second traversal at a node $B$: (1) $\Pi_B$ is given, (2) sample outcomes for qudits measured in the node, (3) compute $\Pi_{C_1}$, (4) recurse on $C_1$ and get $\sigma_{C_1}$, (5) compute $\Pi_{C_2}$, (6) recurse on $C_2$ and get $\sigma_{C_2}$, (7) compute $\sigma_B$ and return.}
	\label{fig:complicatedtraversal}
\end{figure}

Now we just need to contract the tensor network and sample outcomes, as described in Lemma~\ref{lem:algorithm}. Recall the tensors computed by the algorithm.
\begin{itemize}
	\item $\rho_{B}$ is the contraction of the subtree rooted at $B$ before any measurements, with all qudits traced out except those in $B$ itself.
	\item $\sigma_{B}$ is the contraction of the subtree rooted at $B$ after all measurements in the subtree have been made. 
	\item $\Pi_{B}$ is the contraction of the subtree ``above'' $B$, i.e., everything in the network except the subtree. All qudits not in $B$ are traced out or measured, depending on what parts of the tree the algorithm traversed before $B$. 
\end{itemize}
We compute all the $\rho_{B}$ in a bottom-up traversal, i.e., given $\rho_{C_1}$ and $\rho_{C_2}$ for the children, trace out any qudits not in $B$, contract the tensors with $\mathcal{N}$, and the result is $\rho_{B}$. 

The more interesting half is the second traversal from the top down. Figure~\ref{fig:complicatedtraversal} shows the flow of this phase of the algorithm. We are given $\Pi_{B}$, representing the state of the tree above us. Using $\rho_{B}$ and $\Pi_{B}$, we can get the state of the qudits in $B$, and sample measurement outcomes for them. Then we recurse on the children, but need to provide $\Pi_{C_1}$ and $\Pi_{C_2}$ respectively. So, we use $\Pi_B$, the outcomes we just sampled, and $\rho_{C_2}$ to compute $\Pi_{C_1}$, and recurse on the first child. We get $\sigma_{C_1}$ back, and combine with $\Pi_B$ and the samples to get $\Pi_{C_2}$. We recurse on $C_2$ and get $\sigma_{C_2}$ back. Then we use $\sigma_{C_1}$, $\sigma_{C_2}$, and the outcomes within $B$, to compute $\sigma_B$, the post-measurement state of the subtree (restricted to $B$). This completes the traversal of $B$ (and the subtree $T_B$).

\section{Simplifying level-2 QAOA variational optimization}
\label{app:eliminate_beta_2}

Let $G=(V,E)$ be a graph with $n$ vertices,
\[
B = \sum_{p\in V} X_p \quad \mbox{and} \quad
C=\sum_{\{p,q\}\in E} J_{p,q} Z_p Z_q.
\]
For level-2 QAOA, we would like to set $\beta_1, \beta_2, \gamma_1, \gamma_2$ to maximize the energy
$$
E(\beta_1, \beta_2, \gamma_1, \gamma_2) = \bra{+^n} W^\dag C W \ket{+^n},
$$
where $W= e^{-i\beta_2 B} e^{-i\gamma_2 C} e^{-i\beta_1 B} e^{-i\gamma_1 C}$ is the QAOA circuit with two entangling layers.  

Our goal for this section will be to show that if you fix the values of $\beta_1, \gamma_1$ and $\gamma_2$, then the value of $\beta_2$ that maximizes the energy $E$ can be computed analytically using the energy at three distinct values of $\beta_2$.

To show this, let us first focus on the contribution to the energy by single $Z_s Z_t$ term of $C$.  Since $e^{-i\beta_2 B}$ is a product of single-qubit operations, the conjugation of $Z_s Z_t$ by $e^{-i\beta_2 B}$ only depends on the gates on qubits $s$ and $t$.  We get
\begin{align*}
    e^{i\beta_2 B} Z_s Z_t e^{-i\beta_2 B} &= e^{i\beta_2 (X_s + X_t)} Z_s Z_t e^{-i\beta_2 (X_s + X_t)} = e^{2i \beta_2 (XI + IX)} ZZ \\
    &= \left(\cos^2(2\beta_2)II +i \sin (2\beta_2) \cos(2 \beta_2) (IX+XI) - \sin^2(2\beta_2)XX \right) ZZ \\
    &= \frac{1 + \cos(4\beta_2)}{2} ZZ + \frac{ \sin(4\beta_2)}{2} (ZY+YZ) + \frac{1 - \cos(4\beta_2)}{2} YY.
\end{align*}
By linearity, this implies that we can write the expectation as
$$
E(\beta_1, \beta_2, \gamma_1, \gamma_2) = a \cos(4 \beta_2) + b \sin(4\beta_2) + c
$$
for some real coefficients $a, b,$ and $c$ which are complicated functions of $\beta_1$, $\gamma_1$, and $\gamma_2$. We can compute these coefficients exactly using only three evaluations of the expectation function:
\begin{align*}
    E(\beta_1, \pi/8, \gamma_1, \gamma_2) &= b + c \\
    E(\beta_1, -\pi/8, \gamma_1, \gamma_2) &= - b + c \\
    E(\beta_1, 0, \gamma_1, \gamma_2) &= a + c.
\end{align*}
So, $a$, $b$, and $c$ can be calculated easily by solving the system of equations.  Once the values are known, we can calculate the max energy over $\beta_2$ using some elementary trigonometry:
$$
\max_{\beta_2} E(\beta_1, \beta_2, \gamma_1, \gamma_2) = c + \sqrt{a^2 + b^2}
$$
where the optimal $\beta_2$ is found by solving
\begin{align*}
    \tan(4 \beta_2) &= b/a \\
    a \cos(4 \beta_2) &\ge 0 \\
    b \sin(4 \beta_2) &\ge 0. 
\end{align*}
As a final remark, we note that this calculation also shows that the optimal value for $\beta_2$ is between $[0, \pi/2)$ since $\arctan(b/a) \in [0, 2\pi)$.

\section{Simulation of level-2 QAOA for low-degree graphs}
\label{all:brute_force}

Let $G=(V,E)$ be a graph with $n$ vertices,
\[
B = \sum_{p\in V} X_p \quad \mbox{and} \quad
C=\sum_{\{p,q\}\in E} J_{p,q} Z_p Z_q.
\]
Consider some fixed edge $\{s,t\}\in E$.
Our goal is to compute a quantum mean value 
\be
\label{QAOA2}
\mu = \la +^n|W^\dag Z_s Z_t W|+^n\ra,
\ee
where $W= e^{-i\beta_2 B} e^{-i\gamma_2 C} e^{-i\beta_1 B} e^{-i\gamma_1 C}$ is the QAOA circuit
with two entangling layers. 
In this section we give two classical algorithms $\calA'$ and $\calA''$ that compute the mean value $\mu$
using the Schr\"odinger and the Heisenberg pictures respectively.
The runtime of these algorithms scales exponentially with the size 
of certain local neighborhoods of the vertices $s$ and $t$ defined as follows.
Suppose $M\subseteq V$ is a subset of vertices and $r\ge 0$ is an integer.
Let $\calN_r(M)$ be the set of all vertices $i\in V$ such that the graph distance
between $i$ and $M$ is at most $r$. Let $n_r(M)=|\calN_r(M)|$.
The runtime of our algorithms depends on $n_2(s)$, $n_2(t)$, $n_1(s,t)$,
and $n_2(s,t)$. For example, 
suppose each vertex of $G$ has at most $d$ neighbors.  
Then $n_2(j)\le 1+d^2$ for any vertex $j$,
$n_1(s,t)\le 2d$, and $n_2(s,t)\le 2(d^2-d+1)$.
These bounds are tight if $G$ is a $d$-regular tree.

The unitary operator $W$ can be implemented by a quantum circuit with
two layers of single-qubit $X$-rotations and two entangling layers composed of 
nearest-neighbor $ZZ$-rotations. Such circuit can propagate information
from a qubit $j\in V$ only within the lightcone $\calN_2(j)$.
Accordingly, the mean value $\mu$ can be computed by restricting
the circuit $W$ onto the lightcone $\calN_2(s,t)$. 
The restricted version of $W$ acting on $n_2(s,t)$ qubits
contains $O(n_2(s,t))$ single-qubit gates
and two diagonal unitary operators  describing a
time evolution under the restricted version of $C$.
Such circuit can be simulated using the standard methods
in time $O(n_2(s,t) 2^{n_2(s,t)})$. Once the state $W|+^n\ra$
has been obtained, computing the expected value of $Z_s Z_t$
on this state takes time $O(2^{n_2(s,t)})$. Thus
a naive algorithm for computing $\mu$ 
that exploits the lightcone structure has runtime $O(n_2(s,t) 2^{n_2(s,t)})$.
Below we show how to improve upon this naive algorithm.

\begin{lemma}
\label{lemma:brute_force}
There exist classical algorithms $\calA'$ and $\calA''$ that compute
the  mean value Eq.~(\ref{QAOA2})  in time 

\be
\label{T1T2}
T' = O( n_2(s) 2^{n_2(s)} + n_2(t) 2^{n_2(t)})  
\quad \mbox{and} \quad T'' = O\left( n_1(s,t)4^{n_1(s,t)} + n_2(s,t) 3^{n_1(s,t)}\right)
\ee
respectively.
\end{lemma}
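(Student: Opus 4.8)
Both algorithms exploit the lightcone structure of $W$. Since $W=e^{-i\beta_2 B}e^{-i\gamma_2 C}e^{-i\beta_1 B}e^{-i\gamma_1 C}$ consists of two single-qubit $X$-rotation layers interleaved with two diagonal $ZZ$-rotation layers, conjugating a single $Z$ by $W$ enlarges its support by at most one vertex per $C$-layer, so $W^\dagger Z_j W$ is supported on $\calN_2(j)$, $W^\dagger Z_sZ_tW=(W^\dagger Z_sW)(W^\dagger Z_tW)$ is supported on $\calN_2(s,t)$, and any such conjugated operator can be evaluated on the corresponding $2^{n_2(\cdot)}$-dimensional Hilbert space using only the $O(n_2(\cdot))$ gates of $W$ inside that lightcone. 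The naive algorithm computes $W^\dagger Z_sZ_tW$ on $\calN_2(s,t)$ and contracts with $|+^n\ra$; $\calA'$ and $\calA''$ distribute the work differently.

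\textbf{Algorithm $\calA'$ (Schr\"odinger picture).} Write $A=W^\dagger Z_sW$ and $A_t=W^\dagger Z_tW$, so that $\mu=\la +^n|AA_t|+^n\ra$ with $A$ supported on $S_s:=\calN_2(s)$ and $A_t$ on $S_t:=\calN_2(t)$; put $R:=S_s\cap S_t$, $S_s^\circ:=S_s\setminus R$, $S_t^\circ:=S_t\setminus R$, which are pairwise disjoint, and observe that qubits outside $S_s\cup S_t$ contribute trivially. Inserting $\sum_{y\in\{0,1\}^{|R|}}|y\ra\la y|_R$ between $A$ and $A_t$ and using that $A$ acts trivially on $S_t^\circ$ while $A_t$ acts trivially on $S_s^\circ$, the expectation factorizes:
\[
\mu=\sum_{y\in\{0,1\}^{|R|}}\alpha_y\beta_y,\qquad
\alpha_y=\la +^{S_s}|\,A\,\bigl(|+^{S_s^\circ}\ra\otimes|y\ra_R\bigr),\qquad
\beta_y=\bigl(\la +^{S_t^\circ}|\otimes\la y|_R\bigr)\,A_t\,|+^{S_t}\ra.
\]
To obtain all $\alpha_y$ at once, compute the single vector $|K\ra=W^\dagger Z_sW|+^{S_s}\ra$ by a lightcone-restricted circuit simulation in time $O(n_2(s)2^{n_2(s)})$, and set $\alpha_y=2^{-|S_s^\circ|/2}\sum_{z\in\{0,1\}^{S_s^\circ}}\overline{\la z,y|K\ra}$, which is produced for all $y$ by one pass over $|K\ra$ in time $O(2^{n_2(s)})$; symmetrically obtain the $\beta_y$ from $|K_t\ra=W^\dagger Z_tW|+^{S_t}\ra$ in time $O(n_2(t)2^{n_2(t)})$. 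Since $|R|\le\min(n_2(s),n_2(t))$, the final sum costs $O(2^{n_2(s)})$, so the total runtime is $T'$.

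\textbf{Algorithm $\calA''$ (Heisenberg picture).} Back-propagate the observable through the two inner layers and the second $X$-layer: the operator $Q:=e^{i\beta_1 B}e^{i\gamma_2 C}e^{i\beta_2 B}\,Z_sZ_t\,e^{-i\beta_2 B}e^{-i\gamma_2 C}e^{-i\beta_1 B}$ is supported on $\calN_1(s,t)$ (an $X$-layer does not spread support, one $C$-layer spreads by one vertex), so it can be built as a dense $2^{n_1(s,t)}\times 2^{n_1(s,t)}$ operator by applying the $O(n_1(s,t))$ relevant single-qubit and $ZZ$-rotation conjugations, each costing $O(4^{n_1(s,t)})$, for a total of $O(n_1(s,t)4^{n_1(s,t)})$; and $\mu=\la +^n|e^{i\gamma_1 C}Q\,e^{-i\gamma_1 C}|+^n\ra$. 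Expanding $Q=\sum_P c_P P$ in the Pauli basis, a short case check of which Paulis can occur---any Pauli on the measured qubits $s,t$, but only $I$, $Y$, or $Z$ on each of the other $n_1(s,t)-2$ vertices of $\calN_1(s,t)$---shows that at most $O(3^{n_1(s,t)})$ terms are nonzero. For each such string $P$ (supported inside $\calN_1(s,t)$), the quantity $\la +^n|e^{i\gamma_1 C}Pe^{-i\gamma_1 C}|+^n\ra$ is a level-one QAOA expectation value: $e^{i\gamma_1 C}Pe^{-i\gamma_1 C}$ equals $P$ times a diagonal operator, and taking the $|+^n\ra$-expectation reduces it to the standard closed-form expression involving only the edges of $C$ incident to $\mathrm{supp}(P)$, all of which lie in $\calN_2(s,t)$ (cf.\ \cite{wang2018quantum}); this is evaluable in time $O(n_2(s,t))$. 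Summing over the $O(3^{n_1(s,t)})$ strings yields the term $O(n_2(s,t)3^{n_1(s,t)})$.

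\textbf{The main obstacle.} The conceptual structure is as above; the delicate part is the bookkeeping that pins down the precise exponents. For $\calA'$ one must check carefully that projecting onto the interface $R=\calN_2(s)\cap\calN_2(t)$ genuinely decouples the computation into one on $\calN_2(s)$ and one on $\calN_2(t)$ (this uses only $[Z_s,Z_t]=0$ and the disjointness of $S_s^\circ,S_t^\circ$) and that no step exceeds the claimed budget. For $\calA''$ one must justify the Pauli-support count that yields $3^{n_1(s,t)}$ rather than $4^{n_1(s,t)}$, and confirm that each local level-one expectation value is computable in time linear in $|\calN_2(s,t)|$---this is precisely where the explicit combinatorial formula for QAOA-$1$ Pauli correlators does the real work.
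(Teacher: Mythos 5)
Your algorithm $\calA'$ is correct and is essentially the paper's construction: computing $W^\dag Z_j W|+\ra$ restricted to the lightcone $\calN_2(j)$ by direct simulation (lookup table of the restricted cost function via Gray code, sparse application of the single-qubit rotations) and then contracting the two resulting states across the overlap $\calN_2(s)\cap\calN_2(t)$; your resolution-of-identity factorization $\mu=\sum_y\alpha_y\beta_y$ is just the paper's inner product $\la\psi_s'|\psi_t'\ra$ of states projected onto $|+\ra$ outside the overlap, so this half of the lemma is fine.

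Your $\calA''$, however, has a genuine gap at exactly the step you flag as doing "the real work." After back-propagating to get $Q$ on $\calN_1(s,t)$ and Pauli-expanding it, you claim each correlator $\la +^n|e^{i\gamma_1 C}P\,e^{-i\gamma_1 C}|+^n\ra$ admits a closed form computable in time $O(n_2(s,t))$, citing the level-1 QAOA formulas. Those formulas only cover observables whose $X/Y$-support has size at most two. Here the generic Pauli strings in your expansion carry $Y$ on many vertices of $\calN_1(s,t)$ (that is precisely why there are $\Theta(3^{n_1(s,t)})$ of them), and for such a $P$ one has $e^{i\gamma_1 C}Pe^{-i\gamma_1 C}=P\,e^{-2i\gamma_1\sum_{e\in E_P}J_eZ_uZ_v}$, where $E_P$ is the set of edges with exactly one endpoint in the $X/Y$-support $S$ of $P$; taking the $|+^n\ra$ expectation then requires a signed sum over all $2^{|S|}$ configurations of $S$ (after summing out the outer endpoints one is left with $\sum_{x\in\{0,1\}^{S}}(-1)^{\cdot}\prod_v\cos/\sin(2\gamma_1\sum_{u\sim v}J_{uv}(-1)^{x_u})$), and already for $|S|=2$ with two shared outer neighbors this does not factor into per-edge or per-vertex terms. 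So the per-Pauli cost is exponential in the $Y$-weight, and summing over all Paulis gives roughly $\sum_P 2^{|S(P)|}=\Theta(4^{n_1(s,t)})$ with an extra factor for the outer-shell products, not the claimed $O(n_2(s,t)\,3^{n_1(s,t)})$. The paper obtains the $3^{n_1(s,t)}$ term by a different mechanism: it never Pauli-expands, but instead computes the reduced density matrix $\rho$ of the $\calN_1$ qubits after tracing out the outer shell, using the explicit formula $\la x|\rho|y\ra=2^{-n_1}\prod_j\cos\bigl(2\gamma_1\sum_q J_{p(j),q}(x_q-y_q)\bigr)$, which depends only on $z=x-y\in\{-1,0,+1\}^{n_1}$ and is filled in by a ternary Gray-code sweep in time $O(n_2 3^{n_1}+n_1 4^{n_1})$; it then applies the diagonal phases $e^{\mp i\gamma_1 C_{\mathsf{loc}}}$ and evaluates $\Tr(O_{\calN_1}\tilde\rho)$ in $O(4^{n_1})$ by splitting $\calN_1$ into two halves. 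To repair your $\calA''$ you would need either this density-matrix route or a genuinely new argument for evaluating high-weight Pauli correlators; as written, the second runtime bound in the lemma is not established.
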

For simplicity here we assume that arithmetic operations with real numbers and evaluation of the standard trigonometric functions can be performed with an infinite precision in a unit time. Under this assumption the algorithms $\calA'$ and $\calA''$ compute the mean value $\mu$ exactly. 
For a given problem instance one  can estimate the runtimes 
using Eq.~(\ref{T1T2})  and pick the algorithm with the smallest runtime. 
In the rest of this section we prove Lemma~\ref{lemma:brute_force} by explicitly describing
the two algorithms. 

\noindent
{\em Algorithm $\calA'$:}
Consider a vertex $j\in \{s,t\}$.
Write $C=C_{\mathsf{loc}}(j) + C_{\mathsf{else}}(j)$, where
$C_{\mathsf{loc}}(j)$ is the sum of all terms $J_{p,q} Z_p Z_q$ with 
$p,q\in \calN_2(j)$ and $C_{\mathsf{else}}(j)$ is the sum of all remaining
terms.  Since all terms in $C$ pairwise commute, the standard lightcone argument shows that
\[
W^\dag Z_j W =W_{\mathsf{loc}}(j)^\dag Z_j W_{\mathsf{loc}}(j)
\]
where
\[
W_{\mathsf{loc}}(j) = e^{-i\beta_2 B}e^{-i\gamma_2 C_{\mathsf{loc}}(j) } e^{-i\beta_1 B} e^{-i\gamma_1 C_{\mathsf{loc}}(j) }.
\]
Note that $W_{\mathsf{loc}}(j)$ acts non-trivially only on $\calN_2(j)$. 
Thus
$W^\dag Z_j W|+^n\ra=|\psi_j\ra \otimes |+^{n-n_2(j)}\ra$,
where 
\[
|\psi_j\ra = W_{\mathsf{loc}}(j)^\dag Z_j W_{\mathsf{loc}}(j)|+^{n_2(j)}\ra
\]
is a state supported in $\calN_2(j)$.
One can compute a lookup table of the function $C_{\mathsf{loc}}(j) \, : \, \{0,1\}^{n_2(j)} \to \mathbb{R}$ in time
$O(n_2(j)2^{n_2(j)})$
by iterating over the set  $\{0,1\}^{n_2(j)}$
using Gray code. Indeed, since $C_{\mathsf{loc}}(j)$ is a quadratic function, updating its
value upon flipping a single bit of the input bit string takes time $O(n_2(j))$.
The quantum circuit $W_{\mathsf{loc}}(j)^\dag Z_j W_{\mathsf{loc}}(j)$
contains $O(n_2(j))$ single-qubit $X$-rotations
and $O(1)$ diagonal unitary operators implementing the time evolution under  $C_{\mathsf{loc}}(j)$.
The latter can be simulated in time $O(2^{n_2(j)})$ since the lookup table
of $C_{\mathsf{loc}}(j)$ is available.  Each single-qubit $X$-rotation can be simulated in time
$O(2^{n_2(j)})$ using a sparse matrix-vector multiplication. 
Thus the full vector specifying the state 
$|\psi_j\ra$ can be computed in time $O(n_2(j) 2^{n_2(j)})$.
We have
\[
\mu = \la +^n| (W^\dag Z_s W) (W^\dag Z_t W)|+^n\ra =  \la \psi'_s|\psi_t'\ra
\]
where
$|\psi'_j\ra$ is a state obtained from $|\psi_j\ra$ by projecting
all qubits $i\notin \calN_2(s)\cap \calN_2(t)$ onto the $|+\ra$ state.
Once a qubit is projected onto the $|+\ra$ state, it is discarded.
The state $|\psi_j'\ra$ can be computed in time $O(2^{n_2(j)})$, as follows
from the proposition below. 
\begin{prop}
Given a state $|\psi\ra\in ({\mathbb C^2})^{\otimes n}$
and an integer $k\in \{1,2,\ldots,n\}$, let
$|\psi'\ra\in ({\mathbb C^2})^{\otimes n-k}$ be
a (possibly unnormalized) state obtained from $|\psi\ra$ by
projecting the first $k$ qubits onto the $|+\ra$ state, that is,
\be
\label{psi2psi'}
\la y|\psi'\ra = 2^{-k/2} \sum_{x\in \{0,1\}^k} \la x,y|\psi\ra,
\qquad y\in \{0,1\}^{n-k}.
\ee
Then one can compute  $|\psi'\ra$ in time $O(2^n)$.
\end{prop}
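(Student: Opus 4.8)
The plan is to eliminate the $k$ qubits one at a time, shrinking the stored amplitude vector after each elimination, and to observe that the total work forms a geometric series dominated by its first term $O(2^n)$. The single fact I would use is $\la +| = 2^{-1/2}(\la 0| + \la 1|)$, together with the associativity of the projections: projecting all $k$ qubits onto $|+\ra$ simultaneously is the same as projecting them one after another.

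Concretely, I would maintain a sequence of vectors $|\psi^{(0)}\ra = |\psi\ra, |\psi^{(1)}\ra, \ldots, |\psi^{(k)}\ra$, where $|\psi^{(j)}\ra \in (\mathbb C^2)^{\otimes (n-j)}$ has $2^{n-j}$ complex entries and is obtained from $|\psi^{(j-1)}\ra$ by projecting its first remaining qubit onto $|+\ra$ and discarding it. The update rule is, for every $z \in \{0,1\}^{n-j-1}$,
\[
\la z|\psi^{(j+1)}\ra = 2^{-1/2}\left( \la 0,z|\psi^{(j)}\ra + \la 1,z|\psi^{(j)}\ra \right).
\]
Computing all entries of $|\psi^{(j+1)}\ra$ reads each of the $2^{n-j}$ entries of $|\psi^{(j)}\ra$ exactly once and does one addition plus one scalar multiplication per output entry, so the step costs $O(2^{n-j})$ arithmetic operations. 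A one-line induction on $j$ using bilinearity of the inner product shows that $|\psi^{(k)}\ra$ is precisely the state $|\psi'\ra$ defined in Eq.~(\ref{psi2psi'}).

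For the running time I would simply sum over the $k$ steps: $\sum_{j=0}^{k-1} O(2^{n-j})$, which is a geometric series of ratio $1/2$ and hence $O(2^n)$. (If one insists on a fixed physical ordering of the qubits in $|\psi\ra$, one first relabels so the qubits to be projected come first; this is purely an indexing matter and does not affect the bound.)

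I do not expect a real obstacle here — the statement is essentially immediate — but the one point worth flagging in the write-up is \emph{why} there is no extra factor of $k$: the quadratic-in-$k$ cost is avoided precisely by discarding each projected qubit before moving to the next, so that the vector being manipulated halves in size at every step. The naive alternative of applying $H^{\otimes k}$ to the full register and then restricting to the $x = 0^k$ block would instead cost $O(k 2^n)$, so it is the ``shrink as you go'' order of operations that makes the geometric-series bound work.
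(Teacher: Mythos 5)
Your proposal is correct and is essentially the same argument as the paper's: project one qubit at a time, discard it so the vector halves in size, and bound the total cost by the geometric series $\sum_{j} O(2^{n-j}) = O(2^n)$. No differences worth noting.
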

\begin{proof}
If $k=1$ then compute all amplitudes $\la y|\psi'\ra$ by iterating
over $y\in \{0,1\}^{n-1}$ and using Eq.~(\ref{psi2psi'}).
This takes time $O(2^n)$. Applying the same step inductively
$k$ times takes time
\[
\sum_{i=1}^k O(2^{n-i}) \le O(2^n) \sum_{i=1}^\infty 2^{-i} = O(2^n).
\]
Here we noted that the number of qubits is reduced by one at each step.
\end{proof}
Computing the inner product $\mu=\la \psi'_s|\psi_t'\ra$ takes time $2^{|\calN_2(s)\cap \calN_2(t)|}$
which is negligible compared with the earlier steps of the algorithm.
The overall runtime of algorithm $\calA'$ is therefore $T'=O(n_2(s)2^{n_2(s)}+n_2(t)2^{n_2(t)})$.

\noindent
{\em Algorithm $\calA''$:}
Our starting point is the expression for  the mean value $\mu$ derived in Section~\ref{sec:QAOA_map},  namely
\be
\label{app_mu_eq1}
\mu= \mathrm{Re}
\left[ 2\mu(0000) + 4\mu(0010) + 4\mu(0001)+2\mu(0011)+2\mu(0110)+2\mu(0101)\right],
\ee
where
\be
\label{app_mu_eq2}
\mu(v)=\la v_1 v_2 |U|v_3 v_4\ra \cdot \la +^n|e^{i\gamma_1C} O_1(v)\otimes O_2(v)\otimes \cdots \otimes O_n(v)e^{-i\gamma_1 C}|+^n\ra,
\ee
\be
\label{app_mu_eq3}
U=e^{i\gamma_2 J_{s,t} Z\otimes Z} e^{i\beta_2(X\otimes I + I\otimes X)}
Z\otimes Z e^{-i\beta_2(X\otimes I + I\otimes X)}e^{-i\gamma_2 J_{s,t} Z\otimes Z},
\ee
and  $O_j(v)$ are single-qubit operators defined by
\be
\label{app_mu_eq4}
O_s(v) = e^{i\beta_1 X} |v_1\ra\la v_3| e^{-i \beta_1 X},
\ee
\be
\label{app_mu_eq5}
O_t(v) = e^{i\beta_1 X}|v_2\ra\la v_4| e^{-i \beta_1 X},
\ee
\be
\label{app_mu_eq6}
O_p(v) = e^{i\beta_1 X} e^{i\gamma_2 (J_{s,p} (-1)^{v_1}  + J_{t,p} (-1)^{v_2} - J_{s,p} (-1)^{v_3} - J_{t,p}(-1)^{v_4})Z}
e^{-i\beta_1 X}
\ee
for $p\notin \{s,t\}$. 
Below we use shorthand notations $\calN_r\equiv \calN_r(s,t)$ and $n_r=|\calN_r(s,t)|$,
where $r=1,2$. We begin by observing that 
\be
\label{Op(x)}
O_p(v)=I \quad \mbox{for all $p\notin \calN_1$}.
\ee
Indeed, if $p\notin \calN_1$ then
$J_{s,p}=J_{t,p}=0$ and $O_p(v)=e^{i\beta_1 X} e^{-i\beta_1 X}=I$,
see Eq.~(\ref{app_mu_eq6}). 
Fix a bit string $v\in \{0,1\}^4$ and
let $O_{\calN_1}$ be the tensor product of all operators $O_p(v)$ with $p\in \calN_1$.
From Eqs.~(\ref{app_mu_eq2},\ref{Op(x)})
one gets $\mu(v)=\la v_1 v_2|U|v_3v_4\ra \eta(v)$, where
\be
\label{eta(x)}
\eta(v) = \la +^{n}|e^{i\gamma_1 C}  (O_{\calN_1}\otimes I_{\mathsf{else}})
e^{-i\gamma_1 C} |+^{n}\ra.
\ee
It suffices to show that $\eta(v)$ can be computed in time $T''$ defined in Eq.~(\ref{T1T2}).
The standard lightcone argument shows that  $\eta(v)$ depends only on 
the subgraph $G_{\calN_2}$ induced by $\calN_2$ and the restricted cost function 
that includes only the terms $J_{p,q} Z_p Z_q$ with $p,q\in \calN_2$.
To ease the notations, below we ignore all the remaining terms and assume
that $G=G_{\calN_2}$, $n=n_2$, and $C=\sum_{p,q\in \calN_2} J_{p,q} Z_p Z_q$.
Let us write
\be
C=C_{\mathsf{loc}} + C_{\mathsf{ent}} + C_{\mathsf{else}},
\ee
where $C_{\mathsf{loc}}$ includes all terms $J_{p,q} Z_p Z_q$ with $p,q\in \calN_1$,
$C_{\mathsf{ent}}$  includes all terms $J_{p,q} Z_p Z_q$ with $p\in \calN_1$, $q\notin \calN_1$,
and  $C_{\mathsf{else}}$  includes all the remaining  terms.
Using the fact that all terms in $C$ pairwise commute and that
 $C_{\mathsf{else}}$ commutes with 
$O_{\calN_1}\otimes I_\mathsf{else}$, one can rewrite Eq.~(\ref{eta(x)}) as 
\be
\label{channel1}
\eta(v)=\mathrm{Tr}\left(O_{\calN_1}  e^{-i\gamma_1C_{\mathsf{loc}}} \rho\, e^{i\gamma_1C_{\mathsf{loc}}} \right)
\ee
where $\rho$ is a (mixed) $n_1$-qubit state defined as
\be
\label{channel2}
\rho = \mathrm{Tr}_{\calN_1^c} \left( e^{-i\gamma_1 C_{\mathsf{ent}}}  |+^n\ra\la +^n| e^{i\gamma_1 C_{\mathsf{ent}}} \right).
\ee
Here $\calN_1^c$ is the complement of $\calN_1$
and $\mathrm{Tr}_{\calN_1^c}$ denotes the partial trace over $\calN_1^c$.
\begin{prop}
\label{prop:rho}
There exists a classical algorithm that computes
the matrix of $\rho$ in the standard basis  in time $O(n_14^{n_1}+n3^{n_1})$.
\end{prop}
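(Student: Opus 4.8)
The plan is to derive an explicit closed form for the entries $\la a|\rho|a'\ra$ in the standard basis and then argue that this closed form can be tabulated within the stated time. Throughout we work, as assumed just before the Proposition, with $n=n_2$ qubits supported on $\calN_2$. Let $\partial\subseteq \calN_2\setminus \calN_1$ be the set of vertices $q$ possessing at least one neighbor in $\calN_1$; these are exactly the ``outer'' endpoints of the edges appearing in $C_{\mathsf{ent}}$. For a bit string split as $(a,b)$, with $a$ its restriction to $\calN_1$ and $b$ its restriction to $\calN_1^c$, one has $C_{\mathsf{ent}}(a,b)=\sum_{q\in\partial}(-1)^{b_q}w_q(a)$ where $w_q(a):=\sum_{p\in\calN_1\,:\,\{p,q\}\in E}J_{p,q}(-1)^{a_p}$, and $C_{\mathsf{ent}}$ does not depend on the bits of $b$ outside $\partial$. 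First I would unfold $\rho=\mathrm{Tr}_{\calN_1^c}\big(e^{-i\gamma_1C_{\mathsf{ent}}}|+^n\ra\la +^n|e^{i\gamma_1C_{\mathsf{ent}}}\big)$: since $C_{\mathsf{ent}}$ is diagonal, after performing the partial trace this equals $2^{-n}\sum_{a,a'}\sum_b e^{-i\gamma_1(C_{\mathsf{ent}}(a,b)-C_{\mathsf{ent}}(a',b))}|a\ra\la a'|$, where $b$ ranges over all $n-n_1$ bits of $\calN_1^c$. Summing the $n-n_1-|\partial|$ bits of $b$ outside $\partial$ contributes a factor $2^{n-n_1-|\partial|}$, and the remaining sum over the $|\partial|$ bits factorizes over $q\in\partial$, each factor being $\sum_{b_q\in\{0,1\}}e^{-i\gamma_1(-1)^{b_q}(w_q(a)-w_q(a'))}=2\cos\big(\gamma_1(w_q(a)-w_q(a'))\big)$. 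Collecting the powers of $2$ (note $2^{-n}\cdot 2^{n-n_1-|\partial|}\cdot 2^{|\partial|}=2^{-n_1}$) yields
\[
\la a|\rho|a'\ra=2^{-n_1}\prod_{q\in\partial}\cos\big(\gamma_1\,(w_q(a)-w_q(a'))\big),\qquad a,a'\in\{0,1\}^{n_1}.
\]

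Next I would exploit the key structural fact that $w_q(a)-w_q(a')$ depends on $(a,a')$ only through the \emph{ternary difference pattern} $e=e(a,a')\in\{0,+,-\}^{n_1}$, where $e_p=0$ if $a_p=a'_p$, $e_p=+$ if $(a_p,a'_p)=(0,1)$, and $e_p=-$ if $(a_p,a'_p)=(1,0)$: indeed $(-1)^{a_p}-(-1)^{a'_p}=2s(e_p)$ with $s(0)=0$ and $s(\pm)=\pm 1$, so $\gamma_1(w_q(a)-w_q(a'))=\phi_q(e):=2\gamma_1\sum_{p\in\calN_1\,:\,\{p,q\}\in E}J_{p,q}s(e_p)$. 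Hence it suffices to tabulate $V(e):=\prod_{q\in\partial}\cos(\phi_q(e))$ over the $3^{n_1}$ patterns $e$, and then for each of the $4^{n_1}$ pairs $(a,a')$ read off $\la a|\rho|a'\ra=2^{-n_1}V(e(a,a'))$; forming the index $e(a,a')$ costs $O(n_1)$ per pair, which gives the $O(n_1 4^{n_1})$ term. (Since each cosine is real, $\rho$ comes out real symmetric, consistent with $|+^n\ra$ being real; symmetry could be used to halve the work but does not change the order.)

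To produce all the values $V(e)$ I would enumerate $\{0,+,-\}^{n_1}$ by a standard ternary reflected Gray code, so that consecutive patterns differ in one coordinate $p$. Maintaining the running values $\phi_q$ for all $q\in\partial$, a transition that changes $e_p$ requires updating $\phi_q$ and recomputing $\cos\phi_q$ only for the at most $\deg(p)$ vertices $q\in\partial$ adjacent to $p$, followed by recomputing the product $V=\prod_{q\in\partial}\cos\phi_q$ in $O(|\partial|)$ arithmetic operations; since $\deg(p)\le n$ and $|\partial|\le n_2-n_1<n$, each of the $3^{n_1}$ transitions costs $O(n)$, for a total of $O(n\,3^{n_1})$ (the one-time initialization of the $\phi_q$'s and of $V$ costs $O(n^2)$, which is dominated by the bounds above). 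Summing the two phases gives $O(n_1 4^{n_1}+n\,3^{n_1})$, as claimed. The part I expect to be the main obstacle is the factorization step producing the closed form for $\la a|\rho|a'\ra$: one must see that the partial trace over the outer boundary collapses into a product of single-vertex cosine factors, and that the only data those factors depend on is the ternary difference pattern — which is precisely what makes $3^{n_1}$ rather than $4^{n_1}$ appear in the tabulation. The Gray-code bookkeeping is routine, but it must be arranged with incremental updates of the $\phi_q$'s (rather than recomputing each $\cos\phi_q$ from scratch) so as to avoid an extra factor of $n$ or worse in the running time.
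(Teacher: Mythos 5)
Your proposal is correct and follows essentially the same route as the paper: both derive the closed form $\la x|\rho|y\ra=2^{-n_1}\prod_q \cos\bigl(2\gamma_1\sum_p J_{p,q}(x_p-y_p)\bigr)$, observe it depends only on the ternary difference pattern, tabulate it over $\{-1,0,+1\}^{n_1}$ by a ternary Gray code with $O(n)$ incremental updates per step, and then spend $O(n_1 4^{n_1})$ filling in the matrix entries. The only cosmetic differences are that you obtain the closed form by directly summing over the traced-out bits of the diagonal unitary (the paper instead realizes the partial trace as a sequence of single-ancilla channels), and you fill the matrix by looping over pairs $(a,a')$ rather than over difference patterns, neither of which changes the argument or the runtime.
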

\begin{proof}
Assume wlog that $\calN_1=\{1,2,\ldots,n_1\}$. Let $m=|\calN_1^c|$ and $p(j)$ be the $j$-th
qubit of $\calN_1^c$, where $j=1,\ldots,m$.
Define $n_1$-qubit states $\rho_0,\rho_1,\ldots,\rho_m$
such that $\rho_0=|+^{n_1}\ra\la +^{n_1}|$ and 
\be
\label{channel3}
\rho_{j}= \mathrm{Tr}_{\mathsf{anc}} \left[  e^{-i\gamma_1 \sum_{q\in \calN_1} J_{p(j),q} Z_q Z_{\mathsf{anc}}}
\left(\rho_{j-1}\otimes  |+\ra\la +| \right)
 e^{i\gamma_1 \sum_{q\in \calN_1} J_{p(j),q} Z_q Z_{\mathsf{anc}}}
\right]
\ee
for all $j=1,2,\ldots,m$. Here we consider a system of $n_1+1$ qubits with the
first $n_1$ qubits describing $\calN_1$ and the last qubit serving as ancilla. 
The notation $\mathrm{Tr}_{\mathsf{anc}}$ means the partial trace over the ancilla.
We claim that $\rho_m=\rho$ is the state defined in Eq.~(\ref{channel2}).
Indeed, one can consider the righthand side of Eq.~(\ref{channel2}) as the output state of a 
quantum channel that takes as input the state $\rho_0=|+^{n_1}\ra\la +^{n_1}|$,
introduces $m=n-n_1$ ancillary qubits initialized in the $|+\ra\la +|$ state,
couples  $\calN_1$ with the ancillary qubits by $e^{-i\gamma_1 C_{\mathsf{ent}}}$, and finally traces out the ancillas.
Since all terms in $C_{\mathsf{ent}}$ pairwise commute, 
the same quantum channel can be implemented sequentially such that the $j$-th step
introduces a single ancilla qubit $p(j)\in \calN_1^c$ initialized in the  state $|+\ra\la +|$,
applies all  interactions in $C_{\mathsf{ent}}$ that act non-trivially on $p(j)$,
and traces out  $p(j)$. This is described by Eq.~(\ref{channel3}).

A simple algebra shows that Eq.~(\ref{channel3}) is equivalent to 
\be
\label{channel4}
\la x|\rho_j|y\ra= \la x|\rho_{j-1}|y\ra \cdot \cos{\left( 2\gamma_1 \sum_{q\in \calN_1} J_{p(j),q} (x_q - y_q)\right)}
\ee
for all $x,y\in \{0,1\}^{n_1}$.
Recalling that $\rho_0=|+^{n_1}\ra\la +^{n_1}|$  and $\rho=\rho_m$ one arrives at
\be
\label{channel5}
\la x|\rho|y\ra = 2^{-n_1} \prod_{j=1}^m  \cos{\left( 2\gamma_1 \sum_{q\in \calN_1} J_{p(j),q} (x_q-y_q)\right)}.
\ee
Define a function $f\, : \, \{-1,0,+1\}^{n_1} \to \mathbb{R}$ such that 
\[
f(z) =  2^{-n_1} \prod_{j=1}^m  \cos{\left( 2\gamma_1 \sum_{q\in \calN_1} J_{p(j),q} z_q \right)}.
\]
From Eq.~(\ref{channel5}) one infers that  $\la x|\rho|y\ra = f(x-y)$ for all $x,y\in \{0,1\}^{n_1}$.
Let us compute the matrix of $\rho$ by iterating over strings
$z\in \{-1,0,+1\}^{n_1}$ using ternary Gray code.
In other words, 
each iteration changes only one
component  of $z$.  We shall maintain a list of values of $m$ linear functions
\[
\ell_j(z)= 2\gamma_1 \sum_{q\in \calN_1} J_{p(j),q} z_q, \qquad j=1,2,\ldots,m.
\]
Updating the value of each function $\ell_j(z)$ upon changing a single component $z_q$
takes a constant time. Thus
updating the value of 
$f(z)=2^{-n_1} \prod_{j=1}^m \cos{(\ell_j(z))}$
takes time $O(m)$.  The number of pairs $x,y\in \{0,1\}^{n_1}$
such that $z=x-y$ is equal to $2^{w(z)}$, where
$w(z)$ is the number of zeros in $z$. 
Identifying all such pairs $x,y$ and recording the value $f(z)$
to the matrix elements $\la x|\rho|y\ra$ takes time  $O(n_1 2^{w(z)})$ for a fixed $z$.
We conclude that computing the full matrix of $\rho$ takes time
\[
O(m 3^{n_1}) + \sum_{z\in  \{-1,0,+1\}^{n_1}} O(n_1 2^{w(z)}) =
O(m3^{n_1} + n_14^{n_1})=O(n3^{n_1} + n_14^{n_1}).
\]
In the last equality we noted that $m=n-n_1\le n$. 
\end{proof}
From  Eq.~(\ref{channel1}) one gets $\eta(v)=\mathrm{Tr}(O_{\calN_1} \tilde{\rho})$, where  
\[
\tilde{\rho} =   e^{-i\gamma_1C_{\mathsf{loc}}} \rho\, e^{i\gamma_1C_{\mathsf{loc}}}.
\]
The next step is to compute the matrix of $\tilde{\rho}$ in the standard basis. We have
\be
\label{channel6}
\la x|\tilde{\rho}|y\ra = \la x|\rho|y\ra \cdot e^{-i\gamma_1\la x|C_{\mathsf{loc}}|x\ra +i\gamma_1 \la y|C_{\mathsf{loc}}|y\ra}.
\ee
A lookup table of the function $\la x|C_{\mathsf{loc}}|x\ra$ with $x\in \{0,1\}^{n_1}$ can be computed in time
$O(n_12^{n_1})$ by iterating over $x$'s using Gray code.
Then one can compute the matrix of $\tilde{\rho}$ using Eq.~(\ref{channel6}) in time $O(4^{n_1})$
by the brute force method. 

It remains to compute  $\eta(v)=\mathrm{Tr}(O_{\calN_1} \tilde{\rho})$.
We claim that this can be done in time $O(4^{n_1})$, assuming that the matrix of $\tilde{\rho}$ is already
available. Indeed, let $\calN_1=AB$, where $A$ and $B$ are subsets of size at most  $(n_1+1)/2$.
Let  $O_A$ and $O_B$ be the tensor products of the single-qubit operators $O_p(v)$
with $p\in A$ and $p\in B$ respectively. Then $O_{\calN_1} = O_A\otimes O_B$.
Compute the matrix of $O_A$ and $O_B$ in the standard basis. 
This takes time $O(|A| 4^{|A|}+|B|4^{|B|})=O(n_12^{n_1})$ if one uses the brute force method.
Finally, we have
\[
\eta(v)=
\mathrm{Tr}(O_{\calN_1} \tilde{\rho})=
\sum_{\alpha,\alpha'\in \{0,1\}^{|A|}}\;
\sum_{\beta,\beta'\in \{0,1\}^{|B|}}\;
\la \alpha' |O_A|\alpha\ra \cdot \la \beta'|O_B|\beta\ra \cdot \la \alpha,\beta|\tilde{\rho}|\alpha',\beta'\ra.
\]
This sum can be computed by the brute force method in time $O(4^{|A|+|B|})=O(4^{n_1})$ since the matrices
of $O_A,O_B$, and $\tilde{\rho}$ are already available. 

Summing up the runtime of each step in the algorithm and recalling that $n=n_2$
after restricting the problem to the subgraph induced by $\calN_2$ gives the overall runtime $T''$  in Eq.~(\ref{T1T2}).

{\em Comment:} Numerical data for RQAOA reported in
Section~\ref{sec:RQAOA}
 was generated by combining three 
algorithms for computing expected values of observables $ZZ$ on the QAOA states:
the forrelation-based algorithm described in Section~\ref{sec:graph_based} and Section~\ref{sec:QAOA_map},
algorithm $\calA'$, and a simplified version of  algorithm $\calA''$.
The latter computes the matrix $\rho$ defined in 
Eq.~(\ref{channel5}) in time $O(n_2(s,t)4^{n_1(s,t)})$ by iterating over bit strings $x,y$ in Eq.~(\ref{channel5})  using the binary
Gray code.
This is only a minor slowdown compared with the algorithm based on ternary Gray code described above
which computes $\rho$ in time 
$O(n_1(s,t)4^{n_1(s,t)} + n_2(s,t) 3^{n_1(s,t)})$.
Algorithms $\calA'$ or $\calA''$ were selected only if
their estimated runtime 
is below a cutoff value of $0.1$ second.
Otherwise, the forrelation-based algorithm was selected.

\section{Tree decomposition algorithm}
\label{app:tree}
\begin{algorithm}[H]
	\caption{Tree decomposition of an outerplanar graph \cite{bodlaender88} \label{alg:outerplanar}}
	\begin{algorithmic}[1]

		\Function{OuterplanarTD}{$G$}
 
		\If{all vertices of $G$ have degree zero}
       \State \Return{$T=(V,\emptyset)$, $\mathcal{B}=\{B_v=\{v\}: v\in V\}$}
		\ElsIf{$G$ has a degree-$1$ vertex}
			\State{Choose any degree-$1$ vertex $v\in V$.}
			\State Let $u\in V$ be the unique neighbor of $v$.
			\Let{$V'$}{$V\setminus \{v\}$}
			\Let{$E'$}{$E\setminus \{v,u\}$}
			\Let{$G'$}{$(V',E')$}
			\Let{$T=(W,F),\mathcal{B}=\{B_i:i\in V_T\}$}{OuterplanarTD($G'$)}
			 \State Find a node $i\in W$ such that $u\in B_i$.
			\Let{$W'$}{$W\cup{\alpha}$}
			\Let{$B_{\alpha}$}{$\{u,v\}$}
			\Let{$F'$}{$F\cup\{i,\alpha\}$}
			\Return{$T'=(W',F')$, $\mathcal{B}'=\mathcal{B}\cup{B_{\alpha}}$}
		\ElsIf{$G$ has a degree-$2$ vertex}
			\State{Choose any degree-$2$ vertex $v\in V$.}
			\State Let $u,w\in V$ be the two distinct neighbors of $v$.
			\Let{$V'$}{$V\setminus \{v\}$}
			\Let{$E'$}{$\left(E\setminus \{\{v,u\},\{v,w\}\}\right)\cup \{u,w\}$}
			\Let{$G'$}{$(V',E')$}
			\Let{$T=(W,F),\mathcal{B}=\{B_i:i\in V_T\}$}{OuterplanarTD($G'$)}
			 \State Find a node $i\in W$ such that $w\in B_i$ and $u\in B_i$.
			\Let{$W'$}{$W\cup{\alpha}$}
			\Let{$B_{\alpha}$}{$\{u,v,w\}$}
			\Let{$F'$}{$F\cup\{i,\alpha\}$}
			\Return{$T'=(W',F')$, $\mathcal{B}'=\mathcal{B}\cup{B_{\alpha}}$}
		\EndIf
 		
		\EndFunction
	\end{algorithmic}
\end{algorithm}

\section{Hardness of approximation for a small relative error}
\label{sec:relative_error}

In the following we consider the problem of approximating a forrelation $\Phi(f,g)$ to within a given \textit{relative} error, where $f,g$ are two-local functions.

Firstly, let us see how to establish hardness of approximation for graph-based forrelations arising from stabilizer states~\cite{aaronson2004improved}. As was shown in~\cite{van2004graphical}, for any $n$-qubit stabilizer state $|\psi\ra$ where exists an $n$-vertex graph $G=(V,E)$ and
single-qubit Clifford operators $C_1,\ldots,C_n$ such that 
\[
|\psi\ra =(C_1\otimes C_2 \otimes \cdots \otimes C_n) U_f H^{\otimes n}|0^n\ra,
\qquad f(x)=\prod_{\{u,v\}\in E} (-1)^{x_u x_v}.
\]
Clearly, $f$ is a two-local function on $G$. Thus the expected value of any tensor product operator
on any $n$-qubit stabilizer state can be expressed as a graph-based forrelation for a suitable $n$-vertex graph and two-local functions $f=g$.
As a corollary, approximating a graph-based forrelation with 
a small relative error is $\#$P-hard, even on a two-dimensional grid graph $G$. This follows from the facts that (a) any output
probability of a quantum circuit can be expressed as
the expected value of a tensor product observable on a 2D
cluster state using measurement-based quantum computing~\cite{raussendorf2001one}, and (b) estimating
the output probabilities of quantum circuits with a small
relative error is $\#$P-hard~\cite{goldberg2017complexity}.

In the remainder of this section we show that this hardness
persists for the standard forrelation problem with $O_j=H$ for all $j$. In the following we write $\mathbb S^1$ for the unit circle in the complex plane.

To this end, we consider so-called IQP circuits \cite{bremner2011classical}. In Ref.\cite{bremner2011classical} it is shown that given an $N$-qubit, $m$-gate quantum circuit $C$ we may efficiently compute $n\leq N+m$ and a two-local function $h:\{0,1\}^n\rightarrow \mathbb S^1$ on a two-dimensional grid graph such that 
\[
\langle 0^n|H^{\otimes n} U_h H^{\otimes n}|0^n\rangle=\sqrt{2}^{n-N} \langle 0^N|C|0^N\rangle.
\]
This is used in Ref.\cite{bremner2011classical} to show that postselected IQP circuits on planar graphs are as powerful as postBQP. Hardness of approximation for IQP circuit amplitudes then follows from the well known fact that a quantum circuit amplitude  $\langle 0^N|C|0^N\rangle$ is \#P-hard to approximate to a given relative error.

\begin{lemma}[\cite{bremner2011classical}]
Given a two-local function $h:\{0,1\}^n\rightarrow \mathbb S^1$ on a planar graph $G$ and $\epsilon>0$, it is \#P-hard to compute an approximation $Z$ such that
\[
(1-\epsilon) \langle 0^n|H^{\otimes n} U_h H^{\otimes n}|0^n\rangle\leq Z\leq (1+\epsilon)\langle 0^n|H^{\otimes n} U_h H^{\otimes n}|0^n\rangle.
\]
\label{lem:planIQP}
\end{lemma}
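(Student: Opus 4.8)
The plan is to reduce from the $\#$P-hardness of multiplicatively approximating the output amplitude of a general quantum circuit, using exactly the construction of Bremner, Jozsa and Shepherd \cite{bremner2011classical} recalled just above: given an $N$-qubit, $m$-gate quantum circuit $C$, one efficiently computes an integer $n\le N+m$ and a two-local function $h\colon\{0,1\}^n\to\mathbb S^1$ on a two-dimensional grid graph $G$ with
\[
\langle 0^n|H^{\otimes n}U_h H^{\otimes n}|0^n\rangle=\sqrt{2}^{\,n-N}\,\langle 0^N|C|0^N\rangle .
\]
First I would recall why such an $h$ exists and is two-local on a grid. Writing $H^{\otimes n}U_hH^{\otimes n}$ as the central layer of an IQP computation, one implements each gate of $C$ by a small ``gadget'' acting on a constant number of fresh qubits that are postselected in the computational basis; the only non-Clifford structure lives in the diagonal operator $U_h$, whose matrix element $h(x)$ is a product of nearest-neighbor controlled-phase factors $e^{i\theta_{uv}x_ux_v}$ and single-qubit phases $e^{i\theta_u x_u}$, hence takes values in $\mathbb S^1$ and is two-local on its interaction graph, and the gadgets can be laid out so that this graph embeds in a $2$D grid (the planarity claim of \cite{bremner2011classical}). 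The $\sqrt{2}^{\,n-N}$ prefactor records the normalization incurred by the postselected measurements.

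Given this identity, the lemma follows by rescaling. Suppose an algorithm, on input $(G,h,\epsilon)$, outputs $Z$ with $(1-\epsilon)\langle 0^n|H^{\otimes n}U_hH^{\otimes n}|0^n\rangle\le Z\le(1+\epsilon)\langle 0^n|H^{\otimes n}U_hH^{\otimes n}|0^n\rangle$. Since $\sqrt{2}^{\,n-N}$ is an explicitly computable positive constant, $\tilde Z:=\sqrt{2}^{\,N-n}Z$ is a $(1\pm\epsilon)$-multiplicative estimate of $\langle 0^N|C|0^N\rangle$. It is a standard fact that this task is $\#$P-hard: one chooses a family of circuits $C$ for which $\langle 0^N|C|0^N\rangle$ is a real, non-negative quantity whose multiplicative approximation is $\#$P-hard---for instance by letting $C$ evaluate a Boolean formula so that $\langle 0^N|C|0^N\rangle$ encodes a rescaling of the number of satisfying assignments, or by combining the $\#$P-hardness of approximating $|\langle 0^N|C|0^N\rangle|^2$ with a ``doubling'' gadget that forces the amplitude to be real and non-negative. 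Composing the reduction $C\mapsto(G,h)$ with this hardness result gives the claim, and $G$ being a $2$D grid is in particular planar.

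The substantive points---all discharged in \cite{bremner2011classical}---are (i) verifying that the gadget layout yields a genuine planar (indeed grid) interaction graph rather than merely a sparse one, which is the crux of their postselected-IQP construction, and (ii) a minor care about the relative-error model: the inequality in the statement tacitly treats the amplitude as real and non-negative, so the reduction should start from a version of amplitude approximation in which $\langle 0^N|C|0^N\rangle\ge 0$, obtained by the standard reflection/squaring trick. Neither is a genuine obstacle; the main ``work'' is simply invoking the IQP-to-circuit correspondence and the well-known $\#$P-hardness of multiplicative amplitude estimation, so the proof is essentially bookkeeping on top of \cite{bremner2011classical}.
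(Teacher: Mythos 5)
Your proposal is correct and follows essentially the same route as the paper: invoke the Bremner--Jozsa--Shepherd correspondence $\langle 0^n|H^{\otimes n}U_hH^{\otimes n}|0^n\rangle=\sqrt{2}^{\,n-N}\langle 0^N|C|0^N\rangle$ for a two-local $h$ on a grid, then rescale so that a relative-error estimate of the IQP amplitude yields a relative-error estimate of a general circuit amplitude, which is \#P-hard. Your added remarks (planarity of the gadget layout, restricting to circuits with real non-negative amplitude so the multiplicative-error inequality makes sense) are just careful bookkeeping on top of the same argument the paper cites to \cite{bremner2011classical} and \cite{goldberg2017complexity}.
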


To establish the following lemma, we show that approximating forrelation of two-local functions on a planar graph is as hard as approximating IQP circuit amplitudes.

\begin{lemma}
Given $\epsilon>0$ and two-local functions $f,g:\{0,1\}^n\rightarrow \mathbb S^1$ on a planar graph $G$, it is \#P-hard to compute an approximation $\tilde{\Phi}$ such that 
\[
(1-\epsilon) \Phi(f,g)\leq \tilde{\Phi}\leq (1+\epsilon) \Phi(f,g).
\]
\label{lem:approxfor}
\end{lemma}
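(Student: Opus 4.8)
The plan is to reduce the IQP‑amplitude problem of Lemma~\ref{lem:planIQP} to relative‑error approximation of $\Phi(f,g)$. Given a two‑local function $h$ on a (two‑dimensional grid) planar graph $G$, I will construct, in polynomial time, two‑local functions $f,g\colon\{0,1\}^n\to\mathbb S^1$ on the \emph{same} graph $G$ with
\[
\Phi(f,g)=\langle 0^n|H^{\otimes n}U_h H^{\otimes n}|0^n\rangle .
\]
Then any multiplicative $(1\pm\epsilon)$‑approximation of $\Phi(f,g)$ is a $(1\pm\epsilon)$‑approximation of the IQP amplitude, so the $\#$P‑hardness of the latter transfers verbatim (with the same $\epsilon$).

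The algebraic core is the following identity. Writing $|+^n\rangle=H^{\otimes n}|0^n\rangle$ and $\hat g(y)=\sum_x g(x)(-1)^{x\cdot y}$, a one‑line computation gives $U_f H^{\otimes n}U_g|+^n\rangle=2^{-n}\sum_y f(y)\hat g(y)\,|y\rangle$, while $U_h|+^n\rangle=2^{-n/2}\sum_y h(y)\,|y\rangle$. Hence if $f,g$ are chosen so that $f(y)\hat g(y)=2^{n/2}h(y)$ for every $y$, then $U_f H^{\otimes n}U_g|+^n\rangle=U_h|+^n\rangle$, and pairing with $\langle +^n|$ yields $\Phi(f,g)=\langle 0^n|H^{\otimes n}U_f H^{\otimes n}U_g H^{\otimes n}|0^n\rangle=\langle 0^n|H^{\otimes n}U_h H^{\otimes n}|0^n\rangle$. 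So it remains to realize the constraint $f(y)\hat g(y)=2^{n/2}h(y)$ with $f$ and $g$ two‑local on a planar graph; this forces $g$ to have a flat, nonvanishing Fourier spectrum of modulus $2^{n/2}$.

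I would obtain such a $g$ from a perfect matching of $G$. We may assume $G$ is a grid with an even number of vertices: otherwise append a column of vertices and extend $h$ by constant edge and vertex terms equal to $1$, which preserves planarity and leaves the IQP amplitude unchanged (each new variable contributes a factor $2$ cancelled by the $2^{-n}$ normalization). A grid with an even number of vertices admits a perfect matching $M$ using only grid edges (pair up the vertices within each row if the rows have even length, otherwise within each column). Set $g(x)=\prod_{\{i,j\}\in M}(-1)^{x_i x_j}$, which is $\mathbb S^1$‑valued and two‑local on $G$. Since the Fourier transform factorizes over the pairs of $M$ and $\sum_{a,b}(-1)^{ab}(-1)^{as+bt}=2(-1)^{st}$, we get $\hat g(y)=2^{n/2}(-1)^{\sum_{\{i,j\}\in M}y_i y_j}$, which never vanishes. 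Then $f(y)=h(y)(-1)^{\sum_{\{i,j\}\in M}y_i y_j}$ satisfies $f(y)\hat g(y)=2^{n/2}h(y)$; it is $\mathbb S^1$‑valued and two‑local on $G$ (its term on an edge $\{u,v\}\in M$ is $h_{uv}(y_u,y_v)(-1)^{y_u y_v}$, and its remaining edge/vertex terms coincide with those of $h$), and both $f$ and $g$ are computable in polynomial time from the description of $h$. Invoking Lemma~\ref{lem:planIQP} then gives the claimed $\#$P‑hardness.

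The step needing the most care is keeping everything on a \emph{planar} graph: the construction works because the matching $M$ can be chosen inside $E(G)$, so $f$ and $g$ stay two‑local on the original planar graph rather than on $G$ augmented by an arbitrary matching (which could destroy planarity). This is precisely why it is convenient to reduce from the grid‑graph form of the IQP hardness of \cite{bremner2011classical} and why the parity of $|V(G)|$ must be normalized first; the remaining verifications (the two displayed operator computations and the Fourier factorization) are routine.
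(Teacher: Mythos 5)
Your proof is correct, but it takes a genuinely different route from the paper's. The paper picks $g$ to be \emph{one}-local, setting $U_g=(S^{\dagger})^{\otimes n}$ and $U_f=e^{i\pi n/4}U_hU_g$, and then collapses the middle Hadamard layer via the single-qubit identity $e^{-i\pi/4}SHS=HS^{\dagger}H$; this verifies $\langle 0^n|H^{\otimes n}U_fH^{\otimes n}U_gH^{\otimes n}|0^n\rangle=\langle 0^n|H^{\otimes n}U_hH^{\otimes n}|0^n\rangle$ in two lines, works verbatim on an arbitrary graph, and needs no structural assumptions on $G$. You instead engineer $g$ to have a flat Fourier spectrum by taking it to be the quadratic form of a perfect matching $M\subseteq E(G)$, so that $\hat g(y)=2^{n/2}(-1)^{\sum_{\{i,j\}\in M}y_iy_j}$, and set $f=h\cdot(-1)^{\sum_{\{i,j\}\in M}y_iy_j}$; the identity $f\hat g=2^{n/2}h$ then gives the same exact equality of amplitudes, so the relative-error hardness of Lemma~\ref{lem:planIQP} transfers with the same $\epsilon$. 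The price of your route is the extra bookkeeping you correctly identified: you must reduce from the grid-graph form of the IQP hardness, normalize the parity of $|V|$ by padding, and choose $M$ inside the grid edges so that $f,g$ remain two-local on the \emph{same} planar graph; the paper's phase-gate trick sidesteps all of this because its $g$ is a product of single-variable phases. (In fact your construction specialized to $1$-local flat-spectrum $g$ essentially recovers the paper's choice, since a $1$-local function with $|\hat g(y)|=2^{n/2}$ is forced to be a product of $S^{\pm 1}$-type phases.) All the computational steps you rely on — the two displayed state computations, the pairwise Fourier factorization $\sum_{a,b}(-1)^{ab+as+bt}=2(-1)^{st}$, the existence of a perfect matching in an even grid, and the invariance of the amplitude under padding — check out.
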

\begin{proof}
Suppose we are given a two-local function $h:\{0,1\}^n\rightarrow \mathbb S^1$ on a planar graph $G$.  Below we (efficiently) construct two-local functions $f,g:\{0,1\}^n\rightarrow \mathbb S^1$ on $G$ such that 
\begin{equation}
\langle0^n|H^{\otimes n} U_f H^{\otimes n} U_g H^{\otimes n}|0^n\rangle=\langle 0^n|H^{\otimes n} U_h H^{\otimes n}|0^n\rangle.
\label{eq:forIQP}
\end{equation}
The claimed hardness of approximation then follows immediately from Lemma \ref{lem:planIQP}. 

To prove Eq.~\eqref{eq:forIQP} we use the one-qubit identity
\begin{equation}
e^{-i\pi/4} SHS=HS^{\dagger}H,
\label{eq:1q}
\end{equation}
where $S=\mathrm{diag}(1,i)$ is the phase gate. Define functions $f,g$ such that $U_g=(S^{\dagger})^{\otimes n}$ and $U_f=e^{i\pi n/4}U_h U_g$. Note that both $f$ and $g$ are two-local on $G$.  Then 
\[
\langle0^n|H^{\otimes n} U_f H^{\otimes n} U_g H^{\otimes n}|0^n\rangle= e^{-i\pi n/4}\langle0^n|H^{\otimes n} U_f S^{\otimes n}H^{\otimes n}|0^n\rangle= \langle 0^n|H^{\otimes n} U_h H^{\otimes n}|0^n\rangle,
\]
where in the first equality we used Eq.~\eqref{eq:1q}. This establishes Eq.~\eqref{eq:forIQP} and completes the proof.

\end{proof}

The graph $G$ in lemmas \ref{lem:planIQP} and \ref{lem:approxfor} may be taken to be a two-dimensional grid graph without loss of generality.


\end{document}